\def\h{\mathcal H}
\def\uu{\mathfrak{U}}
\def\iden{\mathbb{1}}
\def\Tr{\mbox{tr}}
\def\hc{^\dagger}
\newcommand{\vcb}[2][]{\mathinner{\langle vec(#2)\rvert}_{#1}}
\newcommand{\vc}[2][]{\mathinner{\lvert vec(#2)\rangle}_{\hspace{-0.1em}#1}}
\newcommand{\clebsch}[6]{\langle #1,#2;#3,#4\hspace{0.5px} | \hspace{0.5px}#5,#6\rangle}
\def\>{\rangle}
\def\<{\langle}
\def\E{ {\mathcal E} }
\def\P{ {\mathcal P} }
\def\H{ {\mathcal H} }
\def\M{ {\mathcal M} }
\def\U {{\mathcal U}}
\def\V {{\mathcal V}}
\def\R {{\mathcal R}}
\def\G {{\mathcal G}}
\def\F{ {\mathcal F} }
\def\A{ {\mathcal A} }
\def\B{ {\mathcal B} }
\def\O{ {\mathcal O} }
\def\P{ {\mathcal P} }
\def\D{ {\mathcal D} }
\def\T{ {\mathcal T} }
\def\I{ \mathbbm{1} }
\def\tr{ \mbox{tr} }
\def\x{\boldsymbol{x}}
\def\y{\boldsymbol{y}}
\newtheorem{theorem}{Theorem}[section]
\newtheorem{lemma}[theorem]{Lemma}
\newtheorem{definition}[theorem]{Definition}
\newtheorem{corollary}[theorem]{Corollary}
\begin{document}
\title{Global and local gauge symmetries beyond Lagrangian formulations}
\author{Cristina C\^{i}rstoiu}
\affiliation{Controlled Quantum Dynamics Theory Group, Imperial College London, Prince Consort Road, London, SW7 2BW, United Kingdom}
\author{David Jennings}
\affiliation{Department of Physics, University of Oxford, Oxford, OX1 3PU, United Kingdom}
\date{\today}

\date{\today}

\begin{abstract}
What is the structure of general quantum processes on composite systems that
respect a global or local symmetry principle? How does the irreversible use of quantum resources behave under such
symmetry principles? Here we employ an information-theoretic framework to address these questions and show that
every symmetric quantum process on a system has a highly rigid decomposition in
terms of the flow of symmetry-breaking degrees of freedom between each subsystem and its environment. The decomposition has a natural causal structure that can be represented diagrammatically and makes explicit gauge degrees of freedom between subsystems. The framework also provides a novel quantum information perspective on lattice gauge theories and a method to gauge general quantum processes beyond Lagrangian formulations. This procedure admits a simple resource-theoretic interpretation, and thus offers a natural context in which features such as information flow and entanglement in gauge theories and quantum thermodynamics could be studied. The framework also provides a flexible toolkit with which to analyse the structure of general quantum processes. As an application, we make use of a `polar decomposition' for quantum processes to discuss the repeatable use of quantum resources and to provide a novel perspective in terms of the coordinates induced on the orbit of a local process under a symmetry action.
\end{abstract}

\maketitle

\section{Introduction}
	
Symmetry principles are typically associated with reversible dynamics, where they are fundamentally linked with conservation laws. However, they also arise in situations in which there is some form of \emph{irreversibility} present \cite{LK,korzekwa2016extraction,Noether,Marvianthesis,MatteoThesis,KamilThesis}. In such regimes, it has been shown that there is a break-down between symmetry principles and conservation laws \cite{Noether}, and novel information-theoretic measures come into play \cite{PhysRevA.77.032114,gour2008resource, PhysRevA.93.042107,PhysRevLett.116.150502,L,1367-2630-14-7-073022}. 

Can we understand broad concepts such as gauge symmetries and irreversibility under a unifying framework? There are increasing motivations to extend these concepts beyond Lagrangian and state formulations into a more general setting \cite{PhysRevLett.119.020501,Freivogel2016,PhysRevLett.115.151601}. This is not only for the sake of greater abstraction and unity, but also to connect with the large array of results that have been developed recently in quantum information theory, which are framed in the more general terms of completely-positive trace-preserving (CPTP) operations \cite{Nielsen,preskill1998lecture,Wilde:2013aa}. The present work seeks to contribute to this goal.

The central question we take as a starting point in this work is: 
\begin{center}
\emph{What are the consequences of global or local gauge symmetry on the structure of many-body quantum processes?}
\end{center}
  We tackle this within the context of quantum information theory, and develop a ``diagrammatic process mode'' formalism for general quantum processes. In particular in Section \ref{Section-bipartite} we analyse how the dynamics of a quantum system with global symmetry constraints arises from local exchange of symmetry-breaking resources across any bipartite split. Previous work \cite{asymmetry1, asymmetry2} mainly focused on resource states that break a symmetry, and the resulting framework has provided a number of significant applications \cite{L,LK,PhysRevA.93.052331,lostaglio2017markovian,hebdige2018classification}. In addition, in \cite{asymmetry2} a harmonic decomposition of quantum processes was introduced and discussed, and which we build on in this work. In particular we deal with localized symmetry-breaking degrees of freedom, and develop an intuitive diagrammatic analysis for general quantum processes, that leads to a range of extensions and applications.

We also note that traditional quantum reference frame analysis usually starts with some target quantum operation $\E$ and 
aims to construct candidate \emph{models} involving an external reference frame and a choice of interactions with the reference frame and system in order to approximate $\E$ as closely as possible \cite{RevModPhys.79.555, popescu2018quantum}. In contrast, the analysis we present here has the distinct advantage that it is ``model independent''. It specifies explicitly the minimal resources needed to realise $\E$, without having to commit to a particular resource state or interaction.
 
We also show that this analysis of quantum processes has a natural gauge degree of freedom. In Section \ref{sec-orbit-gauge-freedom} we show that this freedom has a simple interpretation in terms of a local `process orbit', while in Section \ref{sec-irreversibility} we use this process orbit setting to consider potential incompatibility in the use of symmetry-breaking quantum resources for local information-theoretic tasks. This provides a clear physical explanation of recent results on quantum coherence \cite{aberg,aberg2016fully, korzekwa2016extraction,woods2016autonomous,erker2017autonomous} framed in simple geometric terms. 

Finally in Section \ref{sec-gauge} we apply our diagrammatic process mode formalism to the problem of gauging a global symmetry principle for a general quantum process to a local one. We provide an information-theoretic perspective on the gauging procedure in terms of concepts from the field of quantum reference frames, and so enables the application of ideas from one area into the other. Our gauging procedure for quantum processes neither assumes a Lagrangian formulation, nor places restrictions on the existence of `classical regimes' in the form of macroscopic reference frames. To demonstrate consistency with traditional gauge theories we describe how our procedure coincides with the gauging of unitary dynamics on a lattice model. We also describe how this approach provides a simple interpretation of Gauss' law and gauge dynamics from a resource-theoretic perspective, and discuss future directions to be explored.
\section{Diagrammatic decomposition of quantum processes under a symmetry group}
\label{diagram-decomposition-process}
Symmetries may originate from various physical considerations -- conservation laws, geometry of a specific physical set-up, lack of shared reference frames, fundamental laws in particle physics etc. However, in this analysis we will not focus on a particular model but rather consider a general framework that can describe symmetry principles that do not have associated conservation laws. This goal can be viewed as trying to extend constructions traditionally used in Lagrangian dynamics to general completely positive trace-preserving maps. 

Our focus is on the symmetry properties of general quantum processes $\E$ that take states of a quantum system $A$ with Hilbert space $\H_A$ into states of a quantum system $A'$ with Hilbert space $\H_{A'}$. The symmetry group $G$ (assumed to be a discrete or compact Lie group) acts on both the input and output systems $\h_{A}$ and $\h_{A'}$ through unitary representations $U_{A}$ and $U_{A}'$. Such a unitary representation $U_A: G \rightarrow \B(\H_A)$ maps any group element $g\in G$ to a unitary operator in $\B(\h_{A})$  in such a way as to respect the group composition law. This group action on $\H_A$ lifts naturally to the adjoint action on $\B(\H_A)$, the space of linear operators on $\H_A$, which we denote by $\mathcal{U}_{g}(\cdot):=U_{A}(g)(\cdot)U_{A}(g)\hc$. The space $\mathcal{S}(\h_{A},\h_A')$ of all linear superoperators from $\B(\h_A)$ into $\B(\h_{A}')$ also carries a natural group action: $\E\mapsto \U_{g}'\circ\E\circ\U_{g}\hc$ for any $\E \in \mathcal{S}(\H_A, \H_{A'})$. A \emph{symmetric} process $\E\in \mathcal{S}(\H_A, \H_{A'})$ is then a completely positive trace-preserving element of $\mathcal{S}(\H_A, \H_{A'})$ that is left invariant under this group action. General processes will not be symmetric, and instead contain a symmetry-breaking component that we want to describe quantitatively. 

A detailed analysis of the consumption of symmetry-breaking resources at the level of quantum states was provided in \cite{asymmetry2,asymmetry1} based on \emph{modes of asymmetry}, in which a state $\rho$ is decomposed in terms of irreducible components. Specifically, one can write $\rho = \sum_{\lambda,m,k} \rho^{\lambda,m}_kT^{\lambda,m}_k$, where $\rho^{\lambda,m}_k \in \mathbb{C}$ and the operators $T^{\lambda,m}_k \in \B(\H_A)$ form a basis of \emph{irreducible tensor operators (ITO)} \cite{ITO} that transform under the group as  $\U_g (T^{\lambda,m}_k) = \sum_j v_{kj}(g) T^{\lambda,m}_k$ for any $g \in G$, with $v^\lambda(g)_{kj}$ being the matrix components of $\lambda$-irrep of the group $G$, $\lambda$ labelling the irrep of $G$, $k$ the basis vector of the irrep and $m$ an irrep multiplicity label.

The starting point of our work is the generalization of this approach to the level of quantum processes with a natural extension of ITOs.  We define \emph{process modes} as a set of superoperators $\{ \T^\lambda_k \in \mathcal{S}(\H_A, \H_{A'})\}_k$ with the property that
\begin{equation}
\U_{g}'\circ\T^{\lambda}_{k}\circ \U_{g}\hc=\sum_{j=1}^{\rm{dim}(\lambda)}v^{\lambda}(g)_{kj}\T^{\lambda}_{j},
\end{equation}
where $\lambda$ labels an irrep of $G$ and the indices $k,j$ range from $1$ to $\rm{dim}(\lambda)$, the dimension of the irrep. In general, these superoperators $ \T_k^\lambda$ are not completely positive or tracing-preserving maps. The label $\lambda$ may range over the set we denote by $\rm{Irrep}(A,A')$ consisting of all irreducible representations that arise in the decomposition of $\mathcal{U}'\otimes\mathcal{U}$ (or equivalently $U'\otimes U'^{*}\otimes U^{*}\otimes U^{*}$). Therefore, the process modes provide a symmetry-adapted basis for the set of superoperators $\mathcal{S}(\h_{A},\h_{A'})$.

\begin{figure}[h]
	\begin{center}
		\includegraphics[width=3.5cm]{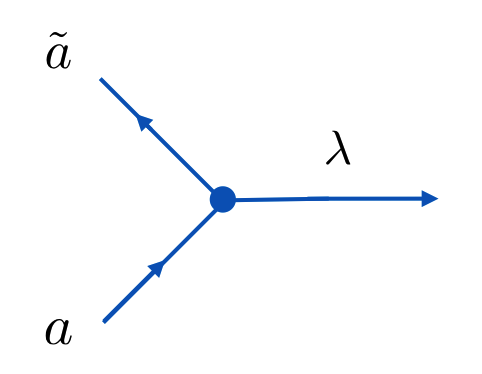}\\
		\caption{\textbf{The structure of process modes}.\\  A diagrammatic representation of a process mode $\{\Phi^{\boldsymbol{\lambda}}_k\}$ transforming between state mode $(a,p)$ in the input system $A$ and state mode $(\tilde{a},q)$ in the output system $A'$. Time runs up the page and we have suppressed multiplicity labels. The horizontal leg is labelled by $(\lambda,m)$ and corresponds to symmetry-breaking degrees of freedom required for the process to be realised. For the example of state preparation process the input system is the trivial system $\mathbb{C}$ and so $a$ can only be the trivial irrep of $G$. This implies that $(\tilde{a},q) = (\lambda,m)$ only, which recovers the modes of asymmetry decomposition. }
		\label{3-leg}
	\end{center}
\end{figure}

For any pair of input and output spaces $(\H_A, \H_{A'})$ we define the set of \emph{canonical process modes} $\{\Phi^{\boldsymbol{\lambda}}_k\}$, with $\lambda$ an irrep in $\rm{Irrep}(A,A')$ and $m$ a multiplicity label packaged together into $\boldsymbol{\lambda} = (\lambda, m)$.
These are built out of coupling incoming state-modes $\{T^{a,p}_k \in \B(\H_A)\}_k$ in the input system with outgoing state modes $\{S^{\tilde{a},q}_j \in \B(\H_{A'})\}_k$ in the output system as described in Supplementary Material Section \ref{S-canonical-ITS} to form a superoperator transforming as a $\lambda$-irrep. These are the basic building-blocks of the formalism, and can be represented as in Fig. \ref{3-leg} by three-legged objects labelled with an ``in-going'' mode $(a,p)$ that evolves into an ``out-going'' mode $(\tilde{a},q)$ by way of an interaction with an external degree of freedom $(\lambda,m)$. This decomposition has a natural causal structure to it that describes the flow of symmetry-breaking resources.

The space $\mathcal{S}(\H_A,\H_{A'})$ decomposes into irrep subspaces spanned by $\{\Phi^{\boldsymbol{\lambda}}_k\}_k$ for each $\boldsymbol{\lambda}=(\lambda,m)$ in $\rm{Irrep}(A,A')$, and thus the canonical process modes can be viewed as the elementary units of any quantum process with respect to a symmetry group $G$.
Each of them has an associated diagram that gives information on the state mode on which it acts non-trivially, how it transforms under the group action and the state mode it can output.
For a fixed choice of basis for the input and output spaces, the diagram encodes the multiplicity label and uniquely defines a process mode. Because of this the label $\boldsymbol{\lambda}$ is basis specific, and hides the multiplicity label so as to make the exposition clear without losing any relevant information.

Given this notation, any $\E\in \mathcal{S}(\H_A,\H_{A'})$ may be uniquely decomposed as
\begin{align}
\E=\sum_{\boldsymbol{\lambda},k}\alpha_{\boldsymbol{\lambda},k}\Phi^{\boldsymbol{\lambda} }_{k}
\end{align}
for some complex coefficients $\alpha_{\boldsymbol{\lambda},k} \in \mathbb{C}$.

Simple examples of process modes are easily constructed in the case of the rotational group on a single spin-1/2 system. For this the irrep label $\lambda$ is an angular momentum label, and the set of quantum processes involve only spin-0, spin-1 and spin-2 contributions. More details on this can be found in Supplementary Material Section \ref{S-single-qubit-process-modes}. 

\subsection{Local coordinates for the orbit of a process}
\label{sec-orbit}
Given a symmetry principle, a core question is how quantum processes local to some region $A$ can arise dynamically through interactions with an ambient environment $B$. If these interactions are constrained by underlying symmetry principles then the ambient environment must function so as to generate a set of local ``coordinates'' $\{x_i\}$ with respect to which a quantum process $\E$ at $A$ is induced. For example, a time coordinate $\{x_i\} = \{t\}$ is necessary when using $B$ as a quantum clock with which to perform timed operations on $A$, or angular data $\{(\theta,\phi)\}$ arises when we want to use a quantum system to break rotational symmetry on $A$. The coordinates $\{x_i\}$ required depend on both $G$ and the quantum process $\E$, and are described by \emph{process orbit} $\M(G,\E)$. More precisely, denote by
\begin{equation}
 \M(G,\E) := \{ \U_g' \circ \E\circ \U_g^\dagger : g \in G\}
\end{equation}
the orbit of $\E$ within the space of superoperators, under the symmetry action. The motivations for introducing $\M(G,\E)$ are
\begin{enumerate}[i]
	\item $\M(G,\E)$ specifies the minimal set of classical coordinates for $\E$ under the symmetry constraint.
	\item Choosing an origin for $\M(G,\E)$ corresponds to a gauge freedom in the description of the physics occurring at $A$, and this perspective is significant when we discuss the gauging of multipartite quantum processes in Section \ref{sec-gauge} 
	\item $\M(G,\E)$ has a natural geometry to it, which is determined by the asymptotic regime of classical reference frames.
\end{enumerate}
See Suplementary Material Section \ref{S-use-of-process-orbits} for more discussion.

\begin{figure}[t]
\includegraphics[width=9cm]{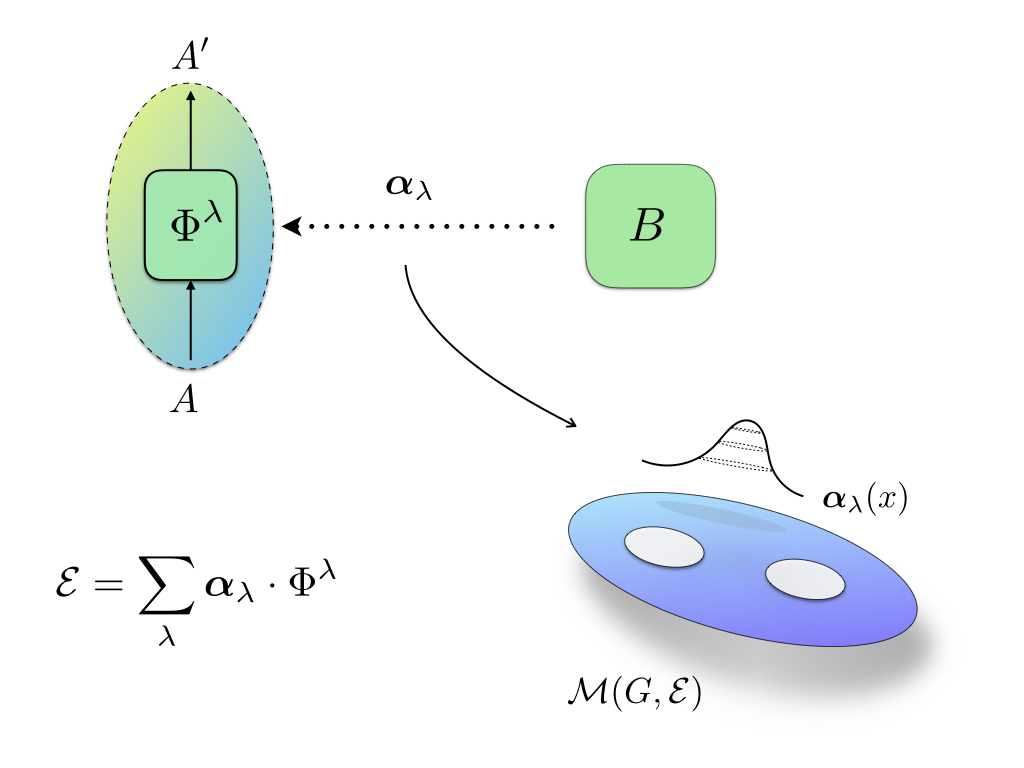}
\caption{\textbf{Polar-decomposition of a general quantum process.} Given a symmetry $G$ the decomposition of quantum processes gives rise to process modes $\{\Phi^{\boldsymbol{\lambda}}_k\}$. The local simulation of a symmetry-breaking process $\E$ on a quantum system $A$ requires specific resources in the environment $B$. These are encoded by data $\{\alpha_{{\boldsymbol{\lambda}},k}(x)\}$, which are un-normalised harmonic wavefunctions on a process orbit $\M(G,\E)$. The local process is given by $\E = \sum_\lambda \boldsymbol{\alpha}_{\boldsymbol{\lambda}} \cdot \boldsymbol{\Phi}^{\boldsymbol{\lambda}} \equiv  \sum_{\lambda,k} \alpha_{\boldsymbol{\lambda} ,k}\cdot \Phi^{\boldsymbol{\lambda}}_k$. For the case of $\E$ being a symmetric process the process orbit $\M(G,\E)$ collapses to being a single point and so has no structure.}
\end{figure}

\tikzstyle{mybox} = [draw=blue, fill=gray!8, very thick,
rectangle, rounded corners, inner sep=-2pt, inner ysep=9pt]
\tikzstyle{fancytitle} =[draw=blue, fill=white!8, text=black, thick, rectangle, rounded corners]
\begin{figure*}[t!]
	\begin{tikzpicture}
	\node [mybox] (box) {%
		\begin{minipage}[t!]{\textwidth}
		\setlength{\tabcolsep}{0.5em}
		{\renewcommand{\arraystretch}{1.85}
			\begin{tabular}{p{100mm}c}
			{{\textbf{Quantum process}}} &\large{$\left(a_{0},{a'_1},{a_1},{a_2}\right)$}\\ \hline  \hline  {\bf{ Dephasing:}}$ \ \ \E(\rho)=p\rho+(1-p)\sum_k \Tr(\Pi_k \rho) \Pi_k  $&{$\left(\frac{2p-1}{\sqrt{3}},0,1-p,0\right)$}\\ \hline
			{\bf{Projective measurements:}}$\ \ \E(\rho)=\sum_k \Pi_k \tr(\Pi_k\rho)$&$\left(-\frac{1}{\sqrt{3}},0,0,1,0\right)$ \\  \hline
			{\bf{Rotation about an axis:}}$ \ \ \E(\rho)=e^{i \frac{\phi}{2}\mathbf{\hat{n}}\cdot\boldsymbol{\sigma}}\rho e^{-i\frac{\phi}{2}\mathbf{\hat{n}}\cdot\boldsymbol{\sigma}} $ & $\left(-\frac{1}{\sqrt{3}}(1+2\cos{\phi}),0,-i\sqrt{2}\sin{2\phi},2\sin^{2}{\phi}\right)$\\ \hline
			{\bf{State preparation}}$\ \ \E(\rho)=\frac{1}{2} (\I + p \mathbf{\hat{n}}\cdot\boldsymbol{\sigma}) $ & $\left(0,p,0,0\right)$\\ \hline
			\bf{ Depolarising process:}$\ \ \E(\rho)=p\rho+(1-p)\frac{1}{2}\iden$ &$\left(\frac{1-4p}{\sqrt{3}},0,0,0\right)$\\ \hline
			\end{tabular}
		}
		\end{minipage}
	};
	\end{tikzpicture}
	\caption{\textbf{Axial processes and resource demands}. Any process $\E$ admits a natural process mode decomposition under a symmetry group $G$, however for axial processes this decomposition takes on a simple and intuitive form as in Theorem 1. The table shows the decomposition of axial processes on a single qubit with an SU(2) symmetry principle into invariant resource demands. Here $\Pi_0 = |\hat{n}\>\<\hat{n}|$ is the projector onto the $\hat{n}$ direction, while $\Pi_1 = \I - \Pi_0$ is the projection onto the $-\hat{n}$ direction. The data for each process are invariants for the group orbit of that process, and together with the choice of relative alignment to external references (in terms of the location on $\M(G,\E)$ i.e $\mathbf{n}$) fully specify the particular process. The coefficients correspond to spin-0 ($a_0$), spin-1 ($a_1$ and $a'_1$) and spin-2 ($a_2$) contributions -- no higher orders are needed for qubit to qubit processes.}
	\label{processEx}
\end{figure*}
\subsection{Process data $\{\alpha_{\boldsymbol{\lambda},k}\}$ as wavefunctions on the space $\M(G,\E)$.}
\label{sec-orbit-gauge-freedom}

There is a very clear link between the process orbit and process modes. While this is best motivated by looking at \emph{axial processes} one can make more general statements for arbitrary groups and processes. The set of axial quantum processes comprises of CPTP maps that break the full rotational symmetry group, but still have a residual symmetry in some direction. Such maps are abundant throughout quantum physics -- for example: dephasing a qubit about an axis, preparation of a pure, polarized spin state, measurements along a particular axis, unitary rotations that leave a fixed axis invariant -- and therefore form a convenient set of quantum processes to illustrate structures. Specifically when we consider the global symmetry action for $G=SU(2)$, if the group elements $h\in G$ that leave $\E\in\T(A,A')$ invariant i.e \, $\U_{h}\circ \E\circ\U_{h}\hc=\E$ form a $U(1)$ subgroup of $SU(2)$ then $\E$ is said to be an axial process.

In this case the process orbit $\M(G,\E)$ is a sphere $S^{2}\cong SU(2)/U(1)$ and there is a distinguished unit vector $\mathbf{\hat{n}}$ on $S^{2}$ associated to $\E$ such that $\E$ remains invariant under rotations around the axis defined by $\mathbf{\hat{n}}$. Then the coefficients $\alpha_{\lambda,k}$ in the process modes expansion of $\E$ take a particular simple structure as un-normalised wave functions on the sphere. Concretely, in this case they are proportional to spherical harmonics:
\begin{align}
\alpha_{\boldsymbol{\lambda},k}=a_{\boldsymbol{\lambda}}Y_{\lambda k}(\theta,\phi)
\label{mainspherical}
\end{align}
where $(\theta,\phi)$ are the angular coordinates of the point $\mathbf{\hat{n}}$ on the sphere. The coefficients $a_{\boldsymbol{\lambda}}\in\mathbb{C}$ are independent of the vector component $k$ and constant for all processes in the orbit of $\E$. 

The core point of this result is that it separates the process resource requirements $\{\alpha_{\boldsymbol{\lambda},k}\}$ into local demands, given by a set of invariant resource demands $(a_{\boldsymbol{\lambda}_1}, a_{\boldsymbol{\lambda}_2}, \dots)$, from the purely relational information on how $B$ is aligned relative to $A$. 
More explicitly, any axial process $\E$ is fully specified by the numbers $\{\alpha_{\boldsymbol{\lambda},k}\}$.
These can be further decomposed into quantities $(a_{\boldsymbol{\lambda}})$ that are independent of the relative alignment of $A$ and its environment, together with a choice of coordinates $\mathbf{n}=(\theta,\phi)$ on $\M(G,\E)$ that specify the relative alignment of $A$ and $B$.

While axial processes are natural and intuitive, the above construction can be extended easily to a general statement for any quantum process $\E$ that has a particular symmetry sub-group $H \subset G$ with process orbit $\M(G,\E)\cong G/H$. We summarise the above results with the following general theorem and refer the reader to the Supplementary Material Section \ref{S-axial-operations} for the rigorous statements and proofs.

\emph{Theorem 1:  Under a symmetry principle for a (compact) group $G$, for any process mode decomposition of a quantum process $\E\in \mathcal{S}(\h_{A},\h_{A'})$ into $\E=\sum_{\lambda,k} \alpha_{\boldsymbol{\lambda},k}\Phi^{\boldsymbol{\lambda}}_{k}$, the complex coefficients $\alpha_{\lambda,k}$ are un-normalised spherical harmonic wavefunctions on the process orbit $\M(G,\E)$:
\begin{equation}
\alpha_{\boldsymbol{\lambda},k}=a_{\boldsymbol{\lambda}} Y_{\lambda,k}(\mathbf{x})
\end{equation}
with $\mathbf{x}\in \M(G,\E)$.
}

This can be viewed as a form of \emph{polar-decomposition} for the process $\E$ into parts independent of laboratory alignments and those parts that specify these alignments. It can be therefore phrased schematically as
\begin{equation}
\E \approx (\mbox{Invariant resources}) \times (\mbox{Choice of  gauge}).\nonumber
\end{equation}
For example, if $\E$ is a symmetric process then the symmetry subgroup is the full group $G$ and the process orbit is a single point, so it lacks structure.
In this case the resource demands for $\E$ do not require any reference frame synchronisation with the environment.

\subsection{Globally symmetric quantum processes}\label{Section-bipartite}

The previous analysis explains the physical significance of the process mode decomposition, and provides a compact perspective on the role of quantum reference systems for the implementation of a quantum process on a system. However it does not tell us how these resources and global processes are constrained under a global symmetry.  So far we have only described how local quantum processes on a subsystem $A$ decompose in the demands they place on $B$, which serves to encode reference data $\M(G,\E)$. As mentioned, the choice of origin on $\M(G,\E)$ is a gauge freedom corresponding to how $A$ and $B$ are jointly described. We now build on this and specify the structure of global quantum processes that respect the symmetry principle. 

To begin with, we consider a bipartite split of the full quantum system into $A$ and $B$. Moreover, given an irrep $\lambda$ for a group $G$, we denote the dual irrep as $\lambda^*$, where the dual representation $R^*$ to a matrix representation $R$ of $G$ is defined via $R^*(g) = R(g^{-1})^T$ for all $g$ in $G$. 
The input space $\h_{in}=\h_{A}\otimes \h_{B}$ and output space $\h_{out}=\h_{A}'\otimes\h_{B}'$ carry the tensor product representations $U_{A}\otimes U_{B}$ and $U_{A}'\otimes U_{B}'$ respectively.

\emph{Theorem 2: \label{theorem2} Every symmetric quantum process $\E_{AB}\in \mathcal{S}(AB,A'B')$ has a decomposition into symmetric superoperators:
 \begin{align}
\E_{AB} &= \sum_{\boldsymbol{\lambda_{AB}}}c_{\boldsymbol{\lambda_{AB}}}\chi^{\boldsymbol{\lambda_{AB}}} \nonumber \\
\chi^{\boldsymbol{\lambda_{AB}}}&:=\sum_{k=1}^{\rm{dim} \lambda} \mathcal{C}^{\boldsymbol{\lambda_{A}}}_k\otimes \mathcal{C}^{\boldsymbol{\lambda_{B}^{*}}}_{k}
\end{align}
where $c_{\lambda,\theta}\in\mathbb{C}$ and $\boldsymbol{\lambda_{A}}=(\lambda, m_{A})$ and $\boldsymbol{\lambda_{B}^{*}}=(\lambda^{*},m_{B})$), for any choice of multiplicity labels $m_A, m_B$, and where $\{\mathcal{C}^{\boldsymbol{\lambda_A}}_{k}\}$ (respectively $\{\mathcal{C}^{\boldsymbol{\lambda_B}}_{k}\}$) is any complete set of process modes for $\mathcal{S}(\H_{A},\H_{A'})$ (respectively $\mathcal{S}(\H_{B},\H_{B'})$). The summation ranges over all irreps  $\lambda \in \rm{Irrep}(A,A')$ for which there is $\lambda^{*} \in \rm{Irrep} (B, B')$ and their associated multiplicities $m_{A}$ and $m_{B}$ are labelled collectively by $\boldsymbol{\lambda}_{AB}=(\lambda, m_A, m_B)$.}

The full proof is provided in the Supplementary Material \ref{S-bipartite-theorem}. The result highlights the rigid structure of symmetric quantum processes, and physically states that the bipartite process is composed of invariant process modes, which involve internal exchange of asymmetry between $A$ and $B$ in a balanced way. 

It also allows a diagrammatic representation of the components of such a quantum process $\E_{AB}$ whenever we consider the \emph{canonical process modes} $\mathcal{C}^{\boldsymbol{\lambda_{A}}}_{k}=\Phi^{\boldsymbol{\lambda_{A}}}_{k}$ and $\mathcal{C}^{\boldsymbol{\lambda_{B}}}_{k}=\Phi^{\boldsymbol{\lambda_{B}}}_{k}$ for the local systems $A$ and $B$. We have seen in Section \ref{diagram-decomposition-process} that each such local process mode say at A, generically $\Phi^{\boldsymbol{\lambda_{A}}}$ corresponds to a diagram $(a,\tilde{a})\stackrel{\lambda}{\longrightarrow}$ with incoming and outgoing modes on which the process mode acts non-trivially and similarly at $B$, $\Phi^{\boldsymbol{\lambda_{B}^{*}}}$ corresponds to $(b,\tilde{b})\stackrel{\lambda^{*}}{\longrightarrow}$. In this context, each symmetric process $\chi^{\boldsymbol{\lambda_{AB}}}$ acts non-trivially on the tensor product of incoming modes at $A$ and $B$ and transforms them into tensor product of outgoing modes. We can bundle this action on mode data in terms of a
\emph{diagram label} $\theta = [(a,\tilde{a})\stackrel{\lambda}{\longrightarrow}(b,\tilde{b})]$. Since to each multiplicity $m_{A}$ and $m_B$ there is an associated local diagram at $A$ and similarly at $B$, then the diagram label $\theta$ packages the multiplicities $(m_A,m_B)$. As such, in terms of the local canonical process modes, to every symmetric superoperator $\chi^{(\lambda,\theta)}:=\chi^{\boldsymbol{\lambda_{AB}}}$ there is the associated $\theta$-diagram that has the representation shown in Fig. 4.
\begin{figure}[h!]
\includegraphics[width=5.5cm]{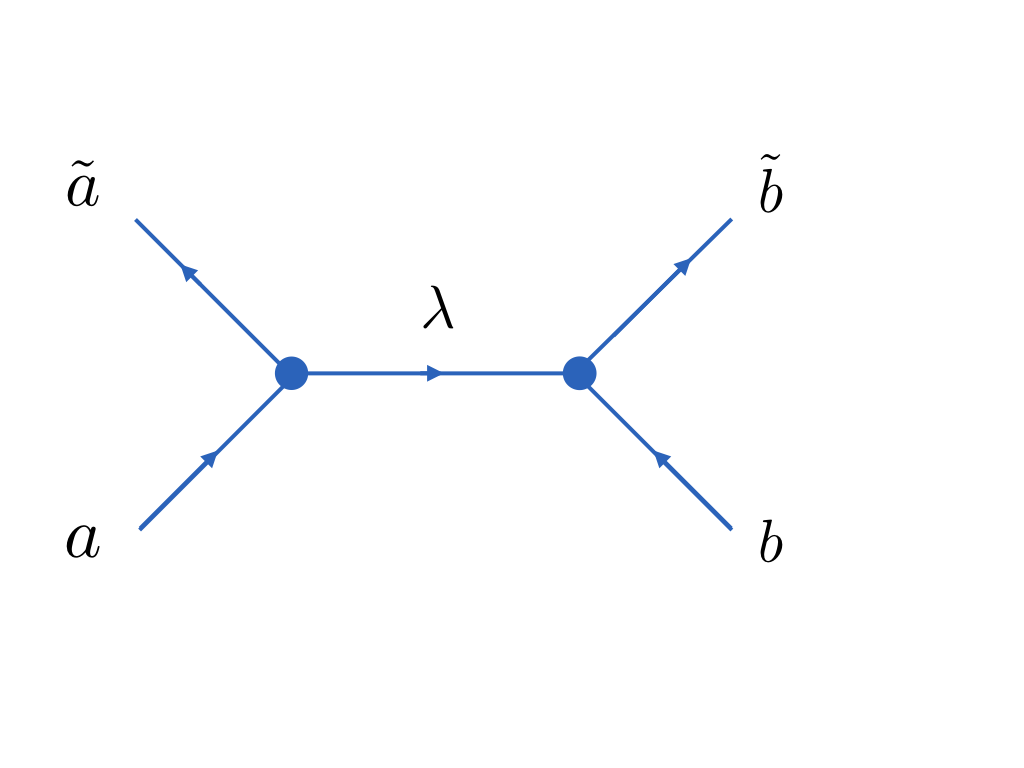}
\caption{\textbf{A  generic diagram for symmetric bipartite processes.} Basis of superoperators for the space of symmetric, bipartite quantum processes $\T(\H_{\rm \tiny in}, \H_{\rm \tiny out})$. The $\lambda$-irrep arrow is associated to a (directed) flow of quantum information. For an abelian group $G$, this is a one-dimensional degree of freedom and so corresponds to classical data (e.g. can be broadcasted, as in the case of quantum coherence). The diagram is represented mathematically in terms of incoming and outgoing asymmetry modes $\theta = [ (a, \tilde{a}) \stackrel{\lambda}{\longrightarrow} (b,\tilde{b})]$.}
\end{figure}\label{bipart-fig}

The diagrammatic decomposition, the polar decomposition in Theorem 1 and the Theorem 2 for bipartite symmetric processes are the main technical results of this section and provide us with the basic tools to analyse concrete model-independent scenarios. We next turn to applications of these results and find that a range of non-trivial insights follow.

\section{Application: Limitations on the efficient use of quantum states under symmetric dynamics}
\label{sec-irreversibility}
For general symmetric quantum processes quantum incompatibility \cite{heinosaari2016invitation} is expected to give rise to irreversibility in the symmetry-breaking degrees of freedom of a quantum system. For example, a quantum system that acts as a clock functions to break time-translation symmetry. However its use in say quantum thermodynamics \cite{janzing2000thermodynamic, LK} may result in a back-action that distorts its subsequent ability to function as a clock \cite{Aberg:2013aa, kwon2018clock, erker2017autonomous, woods2016autonomous}.

One might generally expect globally symmetric quantum processes $\rho_A\otimes \sigma_B \mapsto \E_{AB} (\rho_A \otimes \sigma_B) $ such that $\sigma_B  \mapsto \sigma'_B=\E_B(\sigma_B):= \tr_A  \left [ \E_{AB} (\rho_A \otimes \sigma_B) \right ]$, such that the state $\sigma'_B$ breaks the symmetry in a much weaker form than the original state $\sigma_B$ and is therefore less useful as a result. This constitutes an irreversibility under the symmetry constraint, however it could arise due to the particular interactions used -- might it be possible to use the state more wisely and not suffer such irreversibility?

In the simplest case an isolated symmetric, unitary evolution preserves all symmetry-breaking properties and conserves charges -- but there are many non-trivial fruitful scenarios that illustrate the boundary between reversibility and irreversibility.

In light of this, we can consider the repeatable use of resource states of a reference frame $B$ with a protocol $\P$ whose aim is to implement a simulation of a quantum process $\E$ locally at $A$ via interactions governed by a symmetry principle. The protocol, given a single use of resource state $\sigma_{B}$ on $B$ implements $\E(\rho)=\tr_{B'}(\V_{AB}(\rho_{A}\otimes \sigma_{B}))$ where $\V_{AB}$ is a globally symmetric isometry on $AB$ determined by the target process $\E_{\rm \tiny target}$ that we wish to simulate on $A$.
To address irreversibility features, we consider a repeated application of the protocol using the reduced state in the reference $\sigma_{B}'=\tr_{A}(\V_{AB}(\rho_A\otimes\sigma_{B}))$. 

We say a protocol $\P$ is \emph{arbitrarily repeatable} if for all finite $n$ and every reference frame state $\sigma_{B}$, the local simulation on each system $A_{i}\cong A$ is some fixed process $\E$, and where $\E(\rho_{A_i})=\Tr_{i}(\mathcal{V}_{A_1,A_2,...A_n,B}(\rho_{A_1}\otimes...\otimes\rho_{A_n}\otimes \sigma_{B}))$, where $\mathcal{V}_{A_1,A_2,...A_n,B}$ is a product of symmetric isometries each acting pairwise on $B$ and each system $A_i$ in some ordering.  
\begin{figure}
	\begin{center}
		\includegraphics[width=6cm]{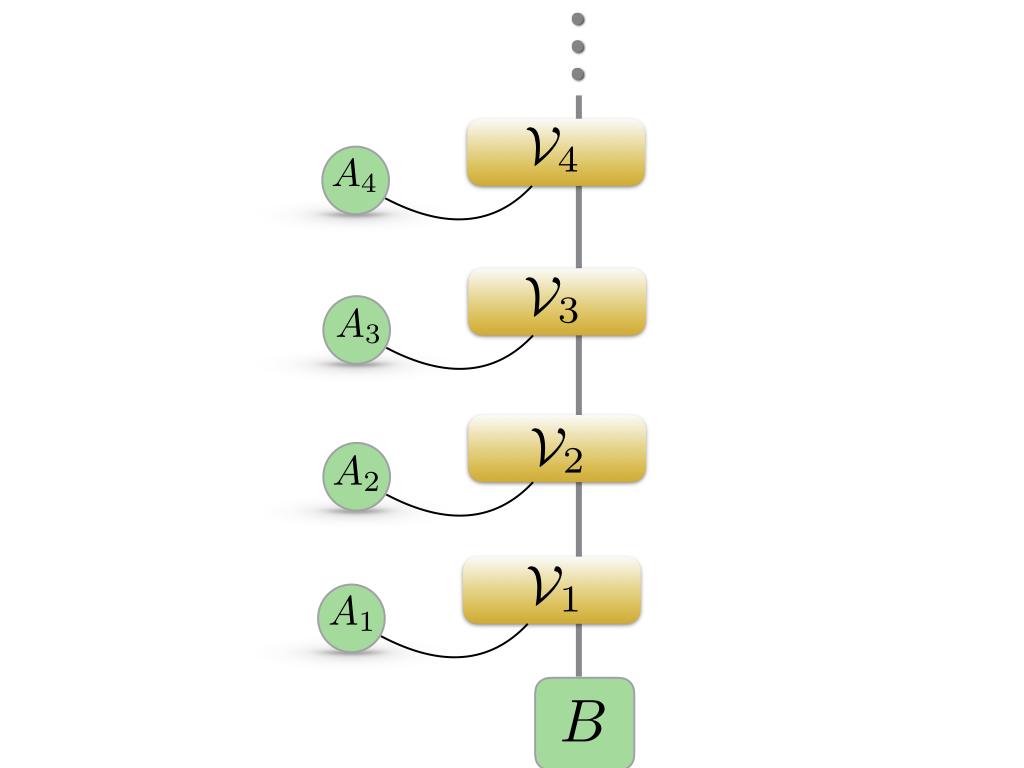}
		\caption{\textbf{Repeatable use of symmetry-breaking system $B$}. A system $B$ is used sequentially to induce otherwise an inaccessible process on systems $A_1$, $A_2$, $A_3 \dots A_n$. If the induced processes are identical for all $n\in\mathbb{N}$ then the system $B$ is used with arbitrary repeatability, even if the state of $B$ changes in time.}
		\label{circuit}
	\end{center}
\end{figure}

This definition captures the ability of the reference frame $B$ to be used in such a way that its performance on each individual quantum system $A_k$ is identical, regardless of the number of systems involved, and so there is necessarily some reference frame property of $B$ that never degrades. One motivation for considering this is given by the prominent work \cite{aberg} in which a feature called catalytic coherence was studied in which quantum coherence can be re-used in such a way that the state of the resource constantly changes, however its ability as a resource for inducing processes on multiple independent systems remains unchanged. In \cite{aberg} the arbitrarily repeatable protocol is subject to a global U(1) symmetry, and is given in terms of a set of unitary interactions $V(U)$ that act on system A and the reference system $B$ consisting of a ladder system with Hilbert space $\h_{\rm{ladder}}$ spanned by eigenstates $\{\ket{n}\}_{n\in \mathbf{Z}}$ of the number operator $N$. Interactions take a particular form
\begin{equation}
V(U)=\sum U_{mn}\ket{\phi_{m}}\bra{\phi_{n}}\otimes \Delta^{n-m}
\end{equation}
where $\{\ket{\phi_{m}}\}_{m=1}^{\rm{dim}(A)}$ forms an orthonormal basis for system $A$ such that it transforms under the U(1) action as $U(\theta)\ket{\phi_{m}}=e^{im\phi}\ket{\phi_{m}}$, the operators $\Delta^{n-m}$ are displacement operators on the ladder system $\Delta^{n}=\sum_{j\in\mathbb{Z}} \ket{j+n}\bra{j}$ and $U_{mn}$ denotes the matrix entries of some arbitrary target unitary $U$ with dimension $\rm{dim}(A)$ that we wish to induce on $A$. Crucially this interaction implements a local simulation on $A$ that depends on $B$ only via the expectation values of $\Delta^{n}$ and takes the form: 
\begin{equation}
\E(\rho) =  \sum_{n,i}  \tr(\Delta^n \sigma ) K_{n,i} \rho K_{n,i}^\dagger
\label{catalytic}
\end{equation}
for operators $\{K_{n,i}\}$ on $A$. In what follows, we shall call any protocol that simulates $\E$ in this form using a ladder system as simply \emph{a catalytic coherence protocol}, without any further qualifications. 

The system $B$ can be reused arbitrarily many times, and its reduced state will change continually under the protocol. Despite this, its ability to function as a coherence reference remains the same. One might think that the protocol in \cite{aberg} functions by doing a projective measurement on the reference system via the covariant measurement $\{|\theta\>\<\theta|\}$ for the $U(1)$ phase of the system and then making use of the this phase angle at $A$ to perform the target map. This would certainly allow for the repeated use of the reference as claimed, however the protocol in \cite{aberg} is not doing this, which can be seen from the fact that the back-action on the reference $B$ under the catalytic coherence protocol can be very slight, and moreover depends explicitly on the type of target unitary $U$. In contrast, the projective measurement on $B$ is independent of $U$ and collapses $B$ to a uniform superposition over the states $\{|n\>\}$. We therefore seek a deeper understanding of what is going on within catalytic coherence protocols and how it relates to the broader notion of repeatability.

Our analysis of an arbitrarily repeatable protocol (irrespective of symmetry constraints) begins with the observation that the effective process from the reference frame into the simulation $\Lambda_{\rho}:\sigma_{B}\rightarrow \E(\rho)=\Tr_{B'}(\V_{AB}(\rho\otimes\sigma_{B}))$ is $n$-extendible \cite{pankowski2011entanglement} for all fixed $n$ and every $\rho$. In general, a process $\Lambda\in \T(\h_{B},\h_{A})$ is $n$-extendible if there is a process $\Lambda_{n}\in \T(\h_{B},\h_{A}^{\otimes n})$ symmetric under permutations of the output spaces and with equal marginals $\Lambda(\cdot) =\Tr_{i}(\Lambda_{n}(\cdot))$ for all $i$, where we trace out over all but the $i$-th system. This observation on extendibility then leads to a simple statement on what types of simulations can be achieved by an arbitrarily repeatable protocol. The proof is provided in Supplementary Material Section \ref{S-protocol}.

\emph{Theorem 3: Given that $\E$ is a process on $A$ simulated by a reference frame state $\sigma_{B}$ via an arbitrarily repeatable protocol $\P$, then there exists a POVM set $\{M_{a}\}$ on system $B$ and CP maps $\Phi_{a}$ on $A$ such that:
\begin{equation}
\E(\rho)=\sum_{a} \Tr(M_{a}\sigma_{B})\Phi_a(\rho).
\label{arbitrary-repeatability}
\end{equation}}

We can now combine this result with our previous analysis to deduce that the outcome measurement probabilities $\Tr(M_{a}\sigma_{B})$ have a natural interpretation in terms of the process orbit. Explicitly, we apply the process mode decomposition to the case where $B$ is an infinite-dimensional ladder system $\h_{\rm{ladder}}$ and in terms of the eigenstates $\{\ket{n}\}_{n\in\mathbb{Z}}$ we consider the following set of orthonormal `states' that encode any $\theta\in U(1)$ \cite{Busch}:
\begin{equation}
\ket{\theta}:=(2\pi)^{-1/2}\sum_{n\in \mathbb{Z}} e^{-in\theta}\ket{n},
\label{classicalreference}
\end{equation}
which should be understood as being meaningful in a distributional sense as a Dirac delta wavefunction on the unit circle $\{e^{i\theta}\}$. We will refer to the states $\{\ket{\theta}\}_{\theta\in U(1)}$ as \emph{asymptotic reference frames}.
We can thus establish the following theorem. 

\emph{Theorem 4: A protocol $\mathcal{P}$ that is used to simulate a local process $\E_{\rm \tiny target}$ on $A$ via a ladder system $B$ satisfies:
\begin{enumerate}[i]
	\item  Global U(1) symmetry.
	\item  Arbitrary repeatability.
	\item  Asymptotic reference frames on $B$ are not disturbed.
	\item  Asymptotic reference frames on $B$ yield perfect simulations of $\E_{\rm \tiny target}$.
\end{enumerate}
if and only if $\P$ is a catalytic coherence protocol.}

This provides a clear physical interpretation of the repeatable use of quantum coherence in simple physical terms and identifies catalytic coherence protocols to be essentially unique under mild assumptions. Note it does \emph{not} imply that the system $B$ is in some perfectly coherent state, or that the state of $B$ stays the same -- the repeatability holds irrespective of the state on $B$. 
\begin{proof}
	From ii, the protocol $\mathcal{P}$ is arbitrarily repeatable so it follows that the induced map $\E$ takes the form $ \E_{\sigma}(\rho)=\sum_{a}\Tr(M_{a}\sigma)\Phi_{a}(\rho)$ for $\{M_{a}\}$ a POVM and $\Phi_{a}$ set of CP maps, and we include the label $\sigma$ in the induced process $\E_{\sigma}$ to account for the fact that different reference states induce different processes on $A$. However, we can decompose each $\Phi_{a}$ into the complete process modes basis as $\Phi_{a}=\sum_{\lambda}c_{\boldsymbol{\lambda},a}\Phi^{\boldsymbol{\lambda}}$ for constants $c_{\boldsymbol{\lambda},a}$ resulting in:
	\begin{equation}
		\E_{\sigma}(\rho)=\sum_{\lambda}\Tr(\sum_{a}\left(c_{\boldsymbol{\lambda},a}M_{a}\right)\sigma)\Phi^{\boldsymbol{\lambda}}.
	\end{equation}
	We simplify the above equation using the notation $X^{\boldsymbol{\lambda}}:=\sum_{a}c_{\boldsymbol{\lambda},a}M_{a}$ to get the compact form for the induced map:
	\begin{equation}
	\E_{\sigma}(\rho)=\sum_{\boldsymbol{\lambda}}\Tr(X^{\boldsymbol{\lambda}}\sigma)\Phi^{\boldsymbol{\lambda}}.
			\label{ind-2}
	\end{equation}
	As a direct consequence of the global U(1) symmetry the action of the symmetry group on $\sigma$ will generate the orbit of $\E_{\sigma}$. More concretely for any $\sigma\in\B(\h_{B})$:
	\begin{equation}
		\E_{\U_{\theta}(\sigma)}=\U_{\theta}\circ\E_{\sigma}\circ\U_{\theta}\hc.
		\label{cov-2}
	\end{equation}
	
	Now we substitute equation (\ref{ind-2}) into (\ref{cov-2}) to get that:
	\begin{equation}
	\sum_{\boldsymbol{\lambda}}\Tr(X^{\boldsymbol{\lambda}}\U_{\theta}(\sigma))\Phi^{\boldsymbol{\lambda}}=\sum_{\boldsymbol{\lambda}} \Tr(X^{\boldsymbol{\lambda}}\sigma) \U_{\theta}\circ\Phi^{\boldsymbol{\lambda}}\circ\U_{\theta}\hc.
	\end{equation}
	The process modes form a complete orthonormal set and transform as $\U_{\theta}\circ\Phi^{\boldsymbol{\lambda}}\circ\U_{\theta}\hc=e^{i\lambda\theta}\Phi^{\boldsymbol{\lambda}}$. Therefore the coefficients associated to each $\Phi^{\lambda}$ in the above must be equal and we have that for all $\lambda$-irreps and all $\theta\in U(1)$
	\begin{equation}
		\Tr(X^{\boldsymbol{\lambda}}\U_{\theta}(\sigma))=\Tr(X^{\boldsymbol{\lambda}}\sigma)e^{i\lambda \theta}.
		\label{transforms}
	\end{equation} 
	Using cyclicity of the trace in the left-hand side of the above we move the group action $\U_{\theta}$ on to the POVM element. Then we use the fact that equation (\ref{transforms}) holds for all $\sigma\in\B(\h_{ladder})$:
	\begin{equation}
	\U_{\theta}\hc (X^{\boldsymbol{\lambda}})= e^{i\lambda\theta}X^{\boldsymbol{\lambda}}.
	\end{equation}
	
	Assumption iii is equivalent to the statement that the POVM effects $\{M_{a}\}$ must all commute with the self adjoint operator $\hat{\Phi}$ associated with the asymptotic reference frames $\{\ket{\theta}\}_{\theta\in U(1)}$, given by $\hat{\Phi}:=\int_{0}^{2\pi}\theta\ket{\theta}\bra{\theta}d\,\theta$. In particular, $[X^{\boldsymbol{\lambda}},\hat{\Phi}]=0$, and therefore $M_{a}$ (and each $X^{\boldsymbol{\lambda}}$) will be diagonal in the asymptotic reference frame basis. Finally, we can write this as $X^{\lambda}=\int \bra{\theta}X^{\lambda}\ket{\theta} \ket{\theta}\bra{\theta}d\,\theta$.
	
	However, the operators $X^{\boldsymbol{\lambda}}$ transform in a particular way under the group action. Moreover, it follows directly from equation (\ref{classicalreference}) that the asymptotic reference frames satisfy $\ket{\theta}=U(\theta)\hc\ket{0}$. These two observations imply that $\bra{\theta}X^{\boldsymbol{\lambda}}\ket{\theta}=\alpha_{\boldsymbol{\lambda}}(\E_{0})e^{-i\lambda\theta}$ for some constant $\alpha_{\boldsymbol{\lambda}}(\E_{0})$ that depends only $\E_{0}$ the representative origin in the process orbit of the induced process $\E$. $\E_{0}$ corresponds to the process induced by the reference frame state $\sigma=\ket{0}\bra{0}$. Altogether, 
	\begin{equation}
	X^{\boldsymbol{\lambda}}=\alpha_{\boldsymbol{\lambda}}(\E_{0})\int e^{-i\lambda\theta}\ket{\theta}\bra{\theta}d\,\theta.
	\end{equation}

	However (see \cite{Busch}) the displacement operators can be written as $\Delta^{\lambda}=e^{i\lambda\hat{\Phi}}$. This implies that $X^{\lambda}=\Delta^{-\lambda}$ and the maps induced by the protocol must take the form of:
	\begin{equation}
		\E_{\sigma}(\rho)=\sum_{\lambda}\alpha_{\lambda}(\E_{0})\Tr(\Delta^{-\lambda}\sigma)\Phi^{\lambda}.
	\end{equation}
	It  can be shown that this admits a Kraus decomposition of the form (\ref{catalytic}), and thus the protocol is necessarily a catalytic coherence protocol as defined in equation (\ref{catalytic}), which completes the proof.
\end{proof}

The abelian structure of $U(1)$ allows us to understand the protocol in another way. While the coherence protocol appears in conflict with cloning intuitions, it should not be viewed as a cloning of reference frame data, but as the \emph{broadcasting of reference frame data} to multiple systems. Broadcasting is a mixed state version of cloning in which one wishes to copy unknown quantum states $\{ \rho_1, \dots, \rho_n\}$ to multiple other parties. In the single copy case a state $\rho_k$ is transformed to a bipartite $\sigma_{AB}$, such that the marginals are $\sigma_A = \rho_k$ and $\sigma_B = \rho_k$. It is known \cite{PhysRevLett.76.2818} that a set of quantum states $\{\rho_k\}$ may be broadcast perfectly if and only if $[\rho_i, \rho_j] = 0$ for all $i,j$. The relevance for us here is that the coherent properties of the environment $B$ are fully described by the expectation values $\<\Delta^k\> := \tr [ \Delta^k \sigma_B]$, and so we need only consider these degrees of freedom. However $[\Delta^k, \Delta^j] =0 $ for all $j,k$ and so a state of the form $\sigma_B = \frac{1}{d} (\I + \sum_k c_k \Delta^k + \mbox{other terms})$ can have the $\Delta^k$ components of the state broadcast in the sense described. 

\section{Application: How to gauge general quantum processes?}
\label{sec-gauge}
Gauge symmetries have played a deep and important role in modern quantum physics \cite{wienberg}. In the traditional sense they are statements about a redundancy in the system's dynamics. In what follows we shall again make use of the process mode formalism to provide an information-theoretic account of gauge symmetries that generalizes existing approaches. Importantly this account makes no requirement of a Lagrangian description, or that the dynamics is reversible and allows us to consider gauge symmetries in the absence of conserved charges.

In Section \ref{Section-bipartite} we analysed the structure of bipartite processes that are symmetric under the action of a group given by $\U_g \otimes \U_g$. As mentioned, implicit in this symmetry action is a relative alignment of the systems, which is encoded in the choice of tensor product $\otimes$ for states on $AB$. We have also shown that this gauge freedom corresponds to an arbitrary choice of origin for the process orbit $\M(G,\E)$. Moreover the structure of bipartite processes is naturally analysed in terms of diagrams $\theta = [(a, \tilde{a}) \stackrel{\lambda}{\longrightarrow} (b,\tilde{b})]$, which have a similar group-theoretic structure to Feynman diagrams for particles interacting via gauge bosons (e.g. electrons scattering via photons) \cite{baez2010algebra}. Given these aspects, it is therefore natural to ask if the freedom in choice of origin in $\M(G,\E)$ coincides in a way with the more traditional notion that arises in gauge theories. To analyse this, we describe how one gauges a general quantum process on a multipartite system from a global symmetry to a local symmetry.

\subsection{Gauging global symmetries for quantum processes -- The core recipe.}
We describe the gauging of quantum processes on multipartite systems. We do not address continuous quantum systems here, however one expects agreement once the system is approximated in a lattice formulation. Consider a multipartite system consisting of subsystems  $A_1, A_2, \dots A_n$ and each of them carry a group action of $G$ given by $U_{i}(g)$ for all $g\in G$ and $i\in\{1,2...,n\}$. Let $\E$ be a \emph{globally symmetric} quantum process acting on it -- this means $\E$ is invariant under the group action $U_{1}(g)\otimes...\otimes U_{n}(g)$ where the \emph{same} element is applied at each site. The aim is to transform the process into $\tilde{\E}$ acting on the system and some extra degree of freedom such that it becomes invariant under the \emph{local} group action $U_{1}(g_1)\otimes ... \otimes U_{n}(g_{n})$ where \emph{different} group elements are applied at each site.

An informal algorithm that describes our gauging of a globally symmetric quantum process to a local one is as follows:
\begin{enumerate}
\item (Background systems) We define an array of quantum reference frames that function to encode relational data.
\item (Background dynamics) We define a quantum process for the collection of reference frames that is symmetric under the global symmetry.
\item (Gauging of symmetry) We discard our access to the relational data between subsystems, via a uniform average over the local symmetry group.
\end{enumerate}
Our goal is to explicitly spell out the information-theoretic components involved in the gauging of general dynamics, and determine the structures required for generalization. Our analysis explicitly shows how the gauge systems encode quantum information about the relative alignment of subsystems, and that the gauge interactions generated under this prescription depend on the information-theoretic properties of the quantum reference frame states, as we shall describe below.

In the following we use the notation $\mathbf{g}:=(g_1,...g_n)$ for a group element in $G^{\times n}$, the local symmetry group for the composite system.
We write $U(\mathbf{g})$ for the action on $\H_{A_1 \dots A_n}$ and $\U_{\mathbf{g}}$ to denote the corresponding action on operators in $\B( \H_{A_1 \dots A_n})$. For compactness, we shall also use the notation $\mathfrak{U}_{\mathbf{g}} [\E] := \U_{\mathbf{g}} \circ \E \circ \U_{\mathbf{g}}^\dagger$ for the group action on processes.

For simplicity we now assume that the multipartite process has a particular structure.  Let $\Gamma = (V,E)$ be the graph obtained by associating each subsystem to a vertex $x \in V$, and $E$ denotes the set of all edges linking each subsystem. To each link $l=[xy]$ joining $x$ and $y$ we pick an arbitrary but fixed choice of orientation. 

We will restrict our analysis to a particular subset of globally symmetric processes $\E: \B( \H_{A_1\dots A_n}) \rightarrow \B(\H_{A'_1 \dots A'_m}) $ that can be written as
\begin{equation}
\E = \sum_{ \{(l_k,\theta_k)\}} c_{ \{(l_k,\theta_k)\}}\chi ^{(l_1,\theta_1)} \otimes \chi^{(l_2,\theta_2)} \otimes \cdots \otimes \chi^{(l_r,\theta_r)}
\end{equation}
with $ c_{ \{(l_k,\theta_k)\}} \in \mathbb{C}$, and where we range over all ordered links $l =[xy]\in E$, between $A_x$ and $A_x$, and where $\chi^{(l,\theta)}$ is a $\theta$-diagram term on $A_x$ and $A_y$. We refer to these as \emph{2-symmetric} processes.

This definition has a simple physical interpretation in that there exists a Kraus decomposition for $\E$ in which all Kraus operators are products on operators on pairs of subsystems. For example, a special case is if we have a spin lattice model with a Hamiltonian $H$ involving pairwise Heisenberg interactions, and expand the unitary $\exp[i tH]$ in powers of $H$, then we will have non-trivial terms acting on multiple systems, but they will take the form of being pairwise symmetric.  This clearly generalises in an obvious way to 3-symmetric (and beyond), where we would consider not just directed links but also oriented triangles (or simplices) on the total graph. Including this would obscure the core ideas and also require a generalisation of the core result on the structure of bipartite symmetric processes, so instead we focus on the case of 2-symmetric quantum processes.

\begin{figure}[t]
\includegraphics[width=9cm]{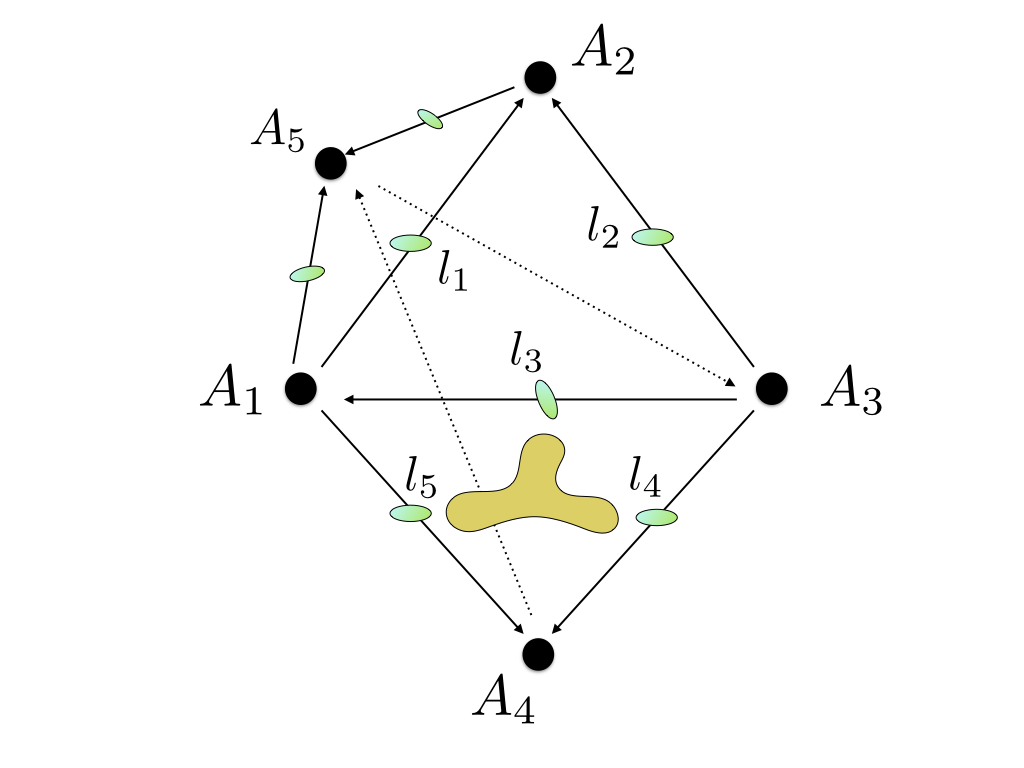}
\caption{\textbf{Gauging a quantum process.} Given a number of subsystems $A_1, \dots A_n$, we associate to each directed link $l_k$ a quantum reference frame (green ellipses) that encodes the relative orientation of the subsystems at its end-points. For the class of $2$-symmetric processes the array of systems $\{l_k\}$ suffices to gauge the dynamics. For $3$-symmetric processes, one must consider plaquette terms (yellow curved region), or equivalently the relative alignments of triples such as $(l_3,l_4,l_5)$. The properties of the quantum reference frames determine the interactions between subsystems.}
\end{figure}

\subsubsection{The inclusion of background reference frame systems} We first introduce an array of quantum reference frames that behave trivially under the global group action. Specifically to every link $l \in E$ we place a quantum system, with Hilbert space $\H_l$ whose principal role is to encode the relative alignment of the end-points of the link. This relative alignment is fully determined by a single group element in $h\in G$. More explicitly, since 
\begin{equation}\label{local-decomp}
(\U_h  \otimes id) \circ (\U_{g_x} \otimes \U_{g_y})=\U_{g_y} \otimes \U_{g_y},
\end{equation}
for  $h = g_y g_x^{-1}\in G$, the fully local action on $A_x \otimes A_y$ differs from a global one by a single group element degree of freedom (on either subsystem).

The reference frame on $l$ functions so as to record this relative alignment through an encoding $h \mapsto \sigma_h \in \B(\H_l)$. In order to be consistent with equation (\ref{local-decomp}) the action of the local symmetry group on a state $\sigma_h$ of $l$ is given by
\begin{equation}
\U_{\mathbf{g}} ( \sigma_h) = \sigma_{g_x h g_y^{-1}}.
\end{equation}
This defines the symmetry action on the reference system $l$. For an initialisation of $l$ in the state $\sigma_e$ we have that $\U_{\mathbf{g}} (\sigma_e) = \sigma_e$ for a global action $\mathbf{g} = (g,g,\dots ,g)$, while for a more general action the reference $l$ encodes the relative alignment via $e \rightarrow g_x g_y^{-1}$, as required.

We do not need to make any assumption as to how well such an encoding can be done, however, modulo technical aspects, there always exists a classical encoding in which one has a set of perfectly distinguishable of states $\{|h\>\}$ for a reference frame system, which carries a well-defined group action given on the basis via $U(\mathbf{g})| h\> := |g_x h g_y^{-1}\>$. 

\subsubsection{Specifying dynamics for the reference frame systems}
Now crucially the quantum reference frames on the links become dynamical objects, and themselves must be subject to a quantum process. However we require the the total quantum process, on subsystems and reference frames, to be invariant under the full local group action. Moreover, we wish that any changes in the relative alignments of systems be encoded in the reference frames. Therefore we must define interactions between subsystems and reference frames that act non-trivially so as to accomplish this.

One could introduce arbitrary couplings between subsystems and reference frames and deduce how well they perform, however the simplest construction is to define couplings that naturally mirror with the process modes that we have introduced. A \emph{process gauge coupling}, $\{\A^{(l,\lambda)}_{jk}\}$ for a quantum reference frame on a link $l$ is a set of superoperators $\A^{(l,\lambda)}_{j,k} : \B(\H_l) \rightarrow \B(\H_l)$ such that
\begin{equation}
\mathfrak{U}_{\mathbf{g}} [\A^{(l,\lambda)}_{j,k}] = \sum_{m,n} v^\lambda(g^{-1}_x)_{mj}v^\lambda(g_y)_{kn}\A^{(l,\lambda)}_{mn},
\end{equation}
under the local symmetry action, and where $x$ and $y$ are the endpoints of the directed link $l$.

These process gauge couplings are essential for gauging the global symmetry to a local one, and if one views process modes $\Phi^\lambda_k$ as comprising a vector  $\boldsymbol{\Phi^\lambda} = (\Phi^\lambda_1, \dots, \Phi^\lambda_d)^T$ of terms that transform irreducibly, then a process gauge coupling $\{\A_{ij}\}$ can be viewed as comprising a matrix of process terms
\begin{equation}
\A = \begin{bmatrix} 
\A_{11} & \A_{12}&\cdots & \A_{1d} \\
\A_{21} & \A_{22}&\cdots & \A_{2d} \\
\vdots & \vdots & \vdots & \vdots \\
\A_{d1} & \A_{d2}&\cdots & \A_{dd} \\
\end{bmatrix}
\end{equation}
for which a local symmetry transformation on subsystem $A_x$ corresponds to left multiplication by the $d\times d$ matrix $v(g_x^{-1})$ of irrep components (for the irrep $\lambda$ with $\rm{dim}(\lambda) = d$), a symmetry transformation on subsystem $A_y$ corresponds to right multiplication by $v(g_y)$, while the diagonal components of $\A$ are each invariant under global actions.

 Since we have restricted to processes that are 2-symmetric, it suffices to describe the construction for a general bipartite superoperator term $\chi^{(l,\theta)} =\sum_j \Phi^{\boldsymbol{\lambda}}_{x,j} \otimes \Phi^{\boldsymbol{\lambda^{*}}}_{y,j} $ with the link $l$ joining $A_x$ and $A_y$. The promotion of the globally symmetric process $\E$ to a locally symmetric one $\tilde{\E}$ is implemented by first making explicit the background process. Since for any fixed $j$ we have that $\mathfrak{U}_{\mathbf{g}} [ \A^{(l,\lambda)}_{j,j}] = \A^{(l,\lambda)}_{j,j}$ for $\mathbf{g}$ being a global symmetry action the superoperator term $\A^{(l,\lambda)}_{j,j}$ is a ``background scalar'' under the global action and so can be included into $\chi^{(l,\theta)}$ without affecting any symmetry properties.
\begin{align}
\chi^{(l,\theta)}&=\sum_j\Phi^{\boldsymbol{\lambda^{*}}}_{x,j} \otimes \Phi^{\boldsymbol{\lambda}}_{y,j} \nonumber\\
\chi^{(l,\theta)}&\longrightarrow \chi'^{(l,\theta )} =\sum_j \Phi^{\boldsymbol{\lambda^{*}}}_{x,j} \otimes \A^{(l,\lambda)}_{j,j}  \otimes \Phi^{\boldsymbol{\lambda}}_{y,j}
\end{align}
The superoperator $\chi'^{(l,\theta )}$ acts on the subsystems in exactly the same way as $\chi^{(l,\theta)}$ under the global group action -- we have simply made explicit the background degrees of freedom. 

While the above describes the process couplings to the system, we must also ensure that the full process is completely positive and trace-preserving. In particular it is insufficient to include only interaction terms on the reference frames -- there must be purely local terms on the reference frames so as to ensure trace-preservation. The details of this are not needed for our present analysis.

\subsubsection{Gauging the process to a local symmetry}
Having made explicit the background reference frame and process gauge couplings we promote the global symmetry to a local one by discarding relative alignments. This is done by averaging over all independent local group actions. The gauge-invariant process components are now obtained via G-twirling the superoperator $\Phi'_{(l,\theta)}$ over the full local group $G^{\times n}$, and are given by 
\begin{equation}
\chi^{(l,\theta)} \stackrel{\mbox{\tiny gauging}}{\longrightarrow} \tilde{\chi}^{(l,\theta )} := \G[ \chi'^{(l,\theta)}] =  \boldsymbol{\Phi^{\lambda^*}_x }^T\cdot \A^{(l,\lambda)} \cdot \boldsymbol{\Phi^\lambda_y}
\end{equation}
with the dot denoting summation over the adjacent indices of the vector-matrix form of the local process modes and gauge couplings.

The fully local process is then
\begin{equation}
\tilde{\E} = \sum_{ \{(l_k,\theta_k)\}}c_{ \{(l_k,\theta_k)\}} \tilde{\chi}^{(l_1,\theta_1)} \otimes \tilde{\chi}^{(l_2,\theta_2)} \otimes \cdots \otimes \tilde{\chi}^{(l_r,\theta_r)},
\end{equation}
and the invariance of each term implies we have gauged the globally symmetric dynamics to a process with local gauge symmetry.  

\subsection{Illustrative example: lattice gauge theory}
We highlight this alternative information-theoretic perspective within the traditional context of unitary dynamics for a lattice gauge theory as described in the Kogut-Susskind Hamiltonian approach \cite{PhysRevD.11.395}. We consider the total Hamiltonian on a two dimensional square lattice $\Gamma = (V,E)$ given by a nearest neighbour hopping $H= \sum_{\x} N(\x) + \sum_{\x, \boldsymbol{\epsilon}\sim \x} K(\x,\boldsymbol{\epsilon})$, where $\boldsymbol{\epsilon}\sim \x$ denotes summation over nearest neighbour points to $\x$. The local particle density observable at each site $\x$ is $N(\x):= \sum_k a^\dagger_k(\x) a_k(\x)$ and the kinetic term describing a hopping from site $\x$ to neighbouring site $\x+\boldsymbol{\epsilon}$ is given by the hermitian operator
\begin{equation}
K(\x,\boldsymbol{\epsilon}):=\sum_k a^\dagger_k(\x+\boldsymbol{\epsilon})a_k(\x) + a^\dagger_k(\x)a_k(\x+\boldsymbol{\epsilon}). 
\end{equation}
The unitary evolution that results from the above Hamiltonian $\E(\rho):=e^{-itH}\rho e^{itH}$ is symmetric under the global group action $\mathfrak{U}_{\mathbf{g}}[\E]=\E$ for all $\mathbf{g}=(g,...,g)$. However, while the local particle density $N(\x)$ is also invariant under the local group action, the hopping term in \emph{not}.
 
 Gauging this unitary process will allow one to make purely local dynamical statements. Since the dynamics are generated by a Hamiltonian, we can do the gauging on the level of the generator for simplicity -- in other words we gauge the superoperator $\mathcal{L}(\rho) = i[ H, \rho]$, which in turn generates the unitary dynamics under exponentiation. As discussed in the previous section the procedure involves adding to every link the lattice $l\in E$ a reference frame $\h_{l}$ which can perfectly encode the group element for the relative alignment of adjacent sites on the lattice. Specifically $\h_{l}$ is spanned by perfectly distinguishable set of pure states $\{\ket{h} :  h\in G\}$ and transforms under the local group action with elements $g_{\x}$ and $g_{\x+\boldsymbol{\epsilon}}$ on the vertices of the edge $l$ according to $\ket{h}\longrightarrow \ket{g_{\x}hg_{{\x+\boldsymbol{\epsilon}}}^{-1}}$.

\begin{figure}[h]
\includegraphics[width=9.5cm]{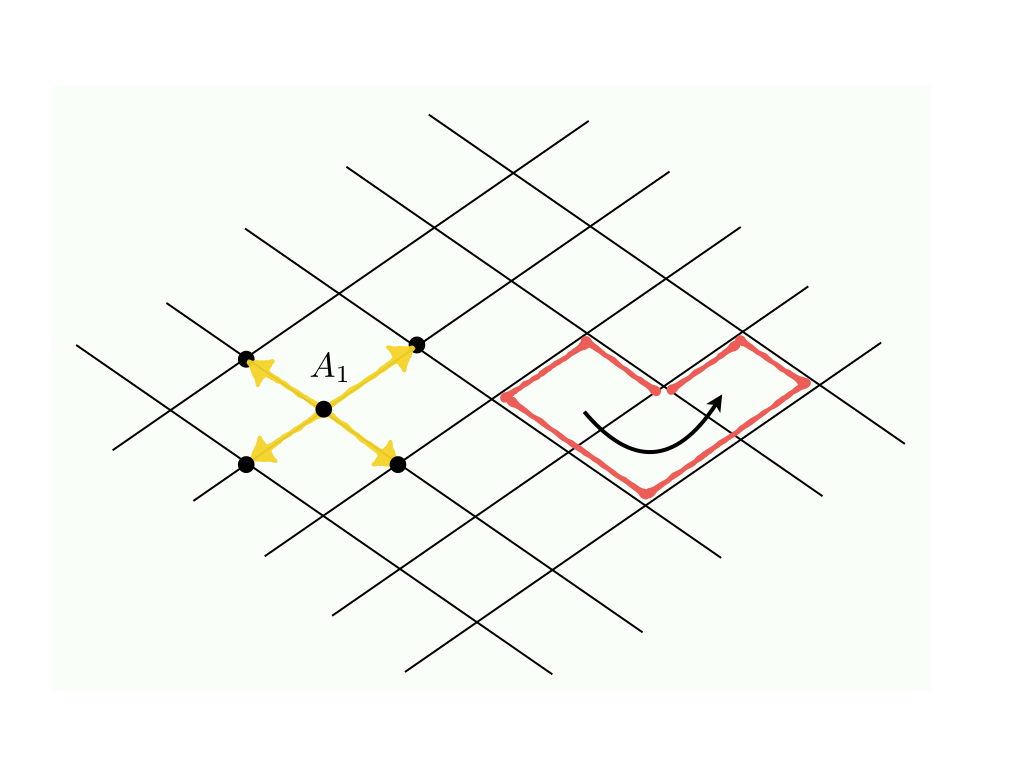}
\caption{\textbf{Lattice gauge theory.} Subsystems are located at the vertices of the lattice, while quantum reference frames on the links. The generators $\{J_a\}$ of the local group action at a subsystem $A_1$ act non-trivially on the vertex as well as the 4 directed links surrounding it (yellow radial arrows).  Gauss' law is satisfied if the state of the system is a symmetric state, $\rho = \G[ \rho]$, under the full local symmetry. Within a fully symmetric scenario, the only observables that can be measured are those that are invariant under the symmetry group. Wilson loops (e.g. the red-loop shown) are defined purely on the reference frames, and are examples of such measurable observables.   }
\end{figure}

The procedure at the Hamiltonian level amounts to gauging the hopping term $K \rightarrow \tilde{K} = \sum_{\x, \boldsymbol{\epsilon} \sim \x} K(\x, \boldsymbol{\epsilon})$, with the inclusion of the link operator such that 
\begin{equation}
\tilde{K}(\x ,\boldsymbol{\epsilon}) = \sum_{jk} a^\dagger_j (\x) \otimes L_{jk}(l) \otimes a_k(\x +\boldsymbol{\epsilon}) + h.c
\end{equation}
where $L_{jk}=\int u_{jk}(h)\ket{h}\bra{h}d\,h $. The process gauge couplings that encode the relative alignment of the subsystems into the reference frame $\h_{l}$ are given by $\mathcal{A}_{jk}(\rho) := [L_{jk},\rho]$.

This describes the dynamics of the systems at the vertices -- on top of this however one must include kinetic terms for the links. A full treatment of this would be beyond the aims of the present work, and so we refer the reader to \cite{gauge1,gauge2,gauge3,gauge4,zohar2015quantum}.

\subsection{A resource theory perspective on gauge dynamics and Gauss' Law}

Having described how the gauging procedure coincides with the traditional unitary dynamics on a lattice approach, we can briefly discuss how it looks from the perspective of quantum resource theories.

In the resource theory of asymmetry, and quantum reference frames, symmetry defines the freely preparable states (or `free states') of the theory. In particular, under the full local symmetry constraint we have the elementary information-theoretic result that any composite state $\rho$ cannot be distinguished from $\G[\rho]$ given by
\begin{equation}
\G [ \rho ] = \int_{G^{\times n}} \!\!\!\! d^n g \,\, \U_{\mathbf{g}} [\rho].
\end{equation}
This fact can be used within the resource-theoretic approach to determine the observables that can be measured within a purely symmetric context \cite{ahmadi2013wigner}. In the language of gauge theories these observables are called ``physical observables'' and the states for which $\G[\rho] = \rho$ are called the ``physical states'' of the theory.

Since $\tilde{\E}$ is symmetric under the local symmetry group we have that $\tilde{\E} (\G(\rho)) = \G(\tilde{\E}(\rho)) =\G(\tilde{\E}(\G(\rho))$. Therefore the dynamics preserve the set of all symmetric states, which is a minimal requirement for consistency. In the language of asymmetry resource theory, these gauge-invariant processes are the free operations of the theory.

Now, the states for which $\rho = \G[\rho]$ are convex mixtures of states with support in the eigenspaces of the generators $\{J_c\}$ of the action $U(\mathbf{g}) = e^{i \sum_c \theta_c J_c}$, where $\theta_c$ are group parameters. Thus, the free states in the theory are convex mixtures of states with sharp values of gauge-invariant observables. This condition is a generalized form of Gauss' Law.

We can outline that this is true for the lattice gauge system. The local group representation is $\mathbf{g} \mapsto U(\mathbf{g})$ and has independent group parameters $g_{\x}$ defined at each site $\x$ on the lattice. Therefore the representation can be written as
\begin{equation}
U(\mathbf{g}) = \exp \left [ i \sum_{\x, c}\theta_c(\x) J_c(\x) \right ],
\end{equation}
where, local to each site $\x$, we have $\theta(\x) \in \mathbb{R}$ and with the operators $\{J_c(\x)\}$ being the local generators of the group action. It is important to note the the operators $J_c(\x)$ act non-trivially both on the vertex $\x$ quantum system and also on the quantum systems residing on the four adjacent links around $\x$. The ``physical Hilbert space of states'' is defined as the span of the gauge invariant vectors $|\psi\>$ that obey $\mathcal{O}_k (\x) |\Psi\> = s_k(\x) |\Psi\>$ for all $\x$ and for a maximal commuting subset of observables $\{\mathcal{O}_k\}$ obtained from the generators \cite{gauge1,gauge2,gauge3,gauge4}. The eigenvalues $\{s_k (\x)\}$ are called ``static charges'', since they are constants of any gauge-invariant evolution. More typically it is demanded that there are no static charges and so the physical space of states is the null space the above observables, and is mapped into itself by all of the local generators. 

We can outline this for the case of $G=U(1)$ where we simply have a scalar number degree of freedom at each site, and a single generator $J(\x)$ at each site. Denoting the lattice vectors as $\boldsymbol{\epsilon}$ in the horizontal direction, and $\boldsymbol{\epsilon}'$ in the vertical direction for a 2-d square lattice. It turns out (see \cite{gauge1,gauge2,gauge3,gauge4} or the recent review \cite{zohar2015quantum}) that this decomposes into a term $q(\x)$ that is purely local to $\x$, and link operators $E(\x +\y)$ acting on the directed link joining $\x$ to $\y$. More explicitly, it takes the form
\begin{equation}
J(\x) = E(\x+\boldsymbol{\epsilon}) - E(\x-\boldsymbol{\epsilon}) + E(\x + \boldsymbol{\epsilon}') - E(\x - \boldsymbol{\epsilon}') - q(\x).
\end{equation}
Thus in the limit $| \boldsymbol{\epsilon}|, | \boldsymbol{\epsilon}'|\rightarrow 0$ the set of physical states are required to obey 
\begin{equation}
\left [ \nabla \cdot \mathbf{E}(\x) - q(\x)  \right ] |\Psi\> = 0,
\end{equation}
which is simply Gauss' Law for the electric field $\mathbf{E}(\x)$ at the point $\x$ in terms of the local charge density $q(\x)$. However from the resource-theoretic perspective, the Gauss law coincides with condition that we can only freely prepare states for which $\G[\rho] = \rho$. 

For convenience we summarize this resource-theoretic perspective: \emph{in the resource theory of asymmetry for a local gauge group $G$, the free states of the theory coincide with the set of all convex mixtures of pure quantum states $|\Psi\>\<\Psi|$ that obey a generalized Gauss' Law. The set of free operations within the resource theory coincide with the set of all locally gauge-invariant processes.}

We also note briefly that the symmetric observables on the reference frames correspond to Wilson loops, and which are also fully invariant under the local group action. The basic loops are around a single plaquette of the lattice and give rise to terms
\begin{equation}
W_p = \tr [ L(l_1) L(l_2) L(l_3) L(l_4)],
\end{equation}
where there is an implicit summing and trace over the $m,n$ indices of $L_{mn}(l)$. It is readily seen that $\U_{\mathbf{g}} [ W_p] = W_p$ for all $\mathbf{g} \in G^{\times n}$ in the local symmetry group. We leave a more detailed analysis to later work where viewing the gauge symmetry from a resource-theoretic perspective could provide a natural context in which to study entanglement in gauge theories.

\subsection{Fixing a gauge -- from local to global symmetry.}
In the context of a gauged process, we can also consider the opposite direction, namely how to go from a local gauge symmetry to a global one. We restrict our discussion to the case in which the reference frame can perfectly encode group elements in a basis $\{|g\>\}$.

  The way in which this gauge fixing can be done is simply by pre- and post-selecting the reference frames onto particular group elements. This breaks the the local symmetry $G^{\times n}$ down to a particular global $G$ symmetry.

Again, it suffices to consider gauging the two site case. The local symmetry is $\U_{(h,g)}$, which we wish to fix to a global action $\U_{(h(g),g)}$ where we assume $h(g) = wgw^{-1}$, for some $w\in G$, and which defines the way in which the action at $A_2$ is related to that at $A_1$. 

The gauge-fixing is achieved as a pre- and post-selecting of the form
\begin{align}
\tilde{\E} \rightarrow\tilde{\E}_{h_1,h_2}:=( id \otimes \Pi_{h_2}) \circ \tilde{\E} \circ (id \otimes \Pi_{h_1}),
\end{align}
where $\Pi_h (\sigma) = |h\>\<h| \sigma |h\>\<h|$, is the projection onto the pure state $|g\>$. The projection $id \otimes \Pi_h$ breaks the $G^{\times 2}$ symmetry action $\U_{(h,g)}$ to the global symmetry action $\U'_g :=\U_{(hgh^{-1}, g)}$, for any $g\in G$. Note that
\begin{equation}
 \U_{(hgh^{-1}, g)}= \U_{(h,e)} \circ \U_{(g,g)} \circ \U^\dagger_{(h,e)},
\end{equation}
and so the passage between global and local symmetry coincides with the degree of freedom discussed in Section \ref{sec-orbit} and \ref{sec-orbit-gauge-freedom} for the relative alignment of two subsystems. More details on gauge-fixing can be found in the Supplementary Material Section \ref{gaugefixing}

\section{Discussion}
The central feature of this work is a tool-kit with which to analyse general quantum processes.
It extends prior asymmetry analysis to a diagrammatic decomposition reflecting both the causal structure of processes and the underlying symmetry principle.
The construction stemmed from a simple and general motivating question on the structure of symmetric processes on many-body systems and it lead to a range of insights and applications.

We have provided an information-theoretic analysis of how a quantum process can be gauged to a local gauge symmetry.
The procedure coincides with traditional approaches: unitary reversible processes lead to lattice gauge theories and (although not discussed here) state preparation processes recover recent constructions in Tensor Networks \cite{Buyens:2014pga,PhysRevX.5.041034,1367-2630-18-4-043008,gauge} that involve gauging quantum states \cite{1367-2630-16-10-103015}.
Since unitary dynamics and state preparation are particular instances of quantum processes, our results can be viewed as generalizations that include both cases within a single unifying setting -- to this aim we use only primitive information-theoretic concepts such as quantum reference frames, and quantum processes on multipartite systems, and with as few assumptions as possible, and without any Lagrangian formulation.

However, one could ask how restrictive it was to use 2-symmetric processes and what our analysis tells us about gauging of symmetries more generally.
Since the bipartite covariance result in Theorem 2 is fully general, the set of 2-symmetric processes can be viewed as the most general form of CPTP maps for which the gauging occurs for pairwise Kraus interactions.
To go beyond this would require slightly more involved machinery for tripartite terms.
However, for sufficiently short timescales, it would be expected that an approximation to 2-body interactions is appropriate, and so falls under the analysis here. In a related direction, one can work solely at the level of generators for the dynamics -- and so perform the gauging on a Lindbladian operator. One direction this might be of use would be in recent work \cite{Poulin-Preskill} on information loss in quantum field systems, where the present techniques would allow gauging of quantum fields without having a global conservation present. We leave to this line of inquiry to future work.

Gauge theories exhibit highly non-local features that give rise to subtleties when one looks at entanglement in this context \cite{PhysRevLett.117.131602,PhysRevD.89.085012,Ghosh2015}. However entanglement theory is best described in terms of the resource theory of Local Operations and Classical Communications (LOCC) \cite{horodecki2009quantum}. This setting does not readily admit a Lagrangian description and so 
one might expect that the formalism that we have presented would be ideally suited for tackling such features in systems with gauge symmetry.

\section{Acknowledgements}
We would like to thank Iman Marvian, Matteo Lostaglio and Kamil Korzekwa for useful discussions on these topics.
CC is supported by EPSRC through the Quantum Controlled Dynamics Centre for Doctoral Training. DJ is supported by the Royal Society.

\bibliography{paper}

\newpage
\onecolumngrid
\appendix

\section{Background -- notations, definitions, basic results.}
We use $\H_A$ to denote the Hilbert space associated to a quantum system $A$, and $\B(\H_A)$ to denote the set of (bounded) linear operators on $\H_A$. A quantum process $\E:\B(\H_A) \rightarrow \B(\H_{A'})$ is a completely-positive trace-preserving superoperator taking states $\rho_A \in \B(\H_A)$ into states $\E(\rho_A) \in \B(\H_{A'})$ for an output system $A'$. We denote the space of superoperators $\Phi : \B(\H_A) \rightarrow \B(\H_{A'})$ by $\T(A,A')$.

By Wigner's theorem, a symmetry on a system $A$ is represented by either a unitary or anti-unitary action on $\H_A$. In this work we consider only unitary actions. Associated to a symmetry group $G$ we have a unitary representation $U:G \rightarrow \B(\H_A)$, with $U(g)$ being unitary on $\H_A$ for all $g\in G$ that respects the usual group composition rules. 

Since we work at the level of density operators and processes, it is convenient to use additional notation. For any $X \in \B(\H_A)$ we denote the adjoint action as $\U_g(X) := U(g) XU(g)^\dagger$. In a similar way we can define a group action on superoperators $\Phi \in \T(A,A')$ via $\Phi \mapsto \U'_g \circ \Phi \circ \U_g^\dagger$, where $\U_g^\dagger := \U_{g^{-1}}$ and $\U'_g$ is the unitary action of $G$ on the output system $A'$.

An operator $X\in \B(\H_A)$ is called \emph{symmetric} if $\U_g (X) = X$ for all $g\in G$, while a superoperator $\Phi \in \T(A,A')$ is called \emph{symmetric} if $\U'_g \circ \Phi \circ \U_g^\dagger = \Phi$. We also use the short-hand $\mathfrak{U}_g [\Phi] := \U'_g \circ \Phi \circ \U_g^\dagger$.

We make use of vectorization of linear operators extensively, and use a modified version of the notation in \cite{watrous2011theory}. Given a linear map $L: \H_A \rightarrow \H_B$ we can define its vectorization, denoted $|vec(L)\>$ which is a vector in $\H_B \otimes \H_A$, by the following method. For $L = |a\>\<b|$, with $\{|a\>\}$ and $\{|b\>\}$ being computational bases for the two spaces, we define
\begin{equation}
|vec(L)\> := |a\>\otimes |b\>.
\end{equation}
The vectorization of a more general linear map $L = \sum_{a,b} L_{ab} |a\>\<b|$ with $L_{a,b} \in \mathbb{C}$ is then fully specified by demanding linearity hold: $|vec(L_1 + L_2)\> = |vec(L_1)\> + |vec(L_2)\>$ for all linear maps $L_1, L_2$ from $\H_A$ to $\H_B$.

It is then easy to verify the following two central properties of vectorization:
\begin{align}
|vec(AXB)\> &= A\otimes B^T |vec(X)\> \\
\<vec(L)|vec(M)\>& = \tr (L^\dagger M),
\end{align}
for all linear maps between the appropriate spaces. The first relation is powerful in the context of entangled bipartite quantum systems, while the second simply says that the mapping $vec$ is an isometry between the Hilbert space $\H_B \otimes \H_A$ and the space of linear maps from $\H_A$ to $\H_B$ with the Hilbert Schmidt inner product $\<L,M\> := \tr (L^\dagger M)$.

The application of these relations make the following easy to establish
\begin{lemma} Given two quantum systems $A$ and $B$ that are isomorphic we have that 
	\begin{align}
	M \otimes \I | vec(\I)\> &= \I \otimes M^T |vec(\I)\> \\
	U\otimes U^* |vec(\I)\> &= |vec(\I) \> \\
	\tr (L^\dagger M) &= \tr ((L^\dagger \otimes M) \mathbb{F})
	\end{align}
	for all $L, M \in \B(\H_A)$ and for all unitaries $U \in \B(\H_A)$, and where $\mathbb{F} := |vec(\I)\>\<vec(\I)|^{T_B} = \sum |ab\>\<ba|$ is the swap operator on $AB$.
\end{lemma}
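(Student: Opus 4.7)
The lemma collects three standard identities about the vectorization map, and my plan is to reduce all three to the two central properties of vectorization just established in the appendix: (a) $|vec(AXB)\> = A \otimes B^T |vec(X)\>$, and (b) $\<vec(L)|vec(M)\> = \tr(L\hc M)$. Once (a) and (b) are in hand, the only remaining subtlety is handling the partial transpose that appears in the definition of $\mathbb{F}$.

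For the first identity I would apply (a) in two different ways. Setting $(A,X,B) = (M,\I,\I)$ turns the left-hand side $M \otimes \I |vec(\I)\>$ into $|vec(M)\>$, while $(A,X,B) = (\I,\I,M)$ turns the right-hand side $\I \otimes M^T |vec(\I)\>$ into $|vec(M)\>$ as well (using $\I^T = \I$). Equality is immediate. For the second identity, the key observation is the elementary algebraic fact $U^* = (U\hc)^T$ valid for any matrix, so applying (a) with $(A,X,B) = (U,\I,U\hc)$ gives $U \otimes (U\hc)^T |vec(\I)\> = |vec(UU\hc)\> = |vec(\I)\>$ by unitarity.

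For the third identity the main work is in off-loading the partial transpose. Using the stated representation $\mathbb{F} = |vec(\I)\>\<vec(\I)|^{T_B}$ together with the elementary trace identity $\tr(X Y^{T_B}) = \tr(X^{T_B} Y)$, I obtain
\begin{equation*}
\tr\!\left((L\hc \otimes M)\mathbb{F}\right) = \<vec(\I)|(L\hc \otimes M^T)|vec(\I)\>.
\end{equation*}
Then (a) with $(A,X,B) = (L\hc,\I,M)$ gives $(L\hc \otimes M^T)|vec(\I)\> = |vec(L\hc M)\>$, and (b) finishes the chain via $\<vec(\I)|vec(L\hc M)\> = \tr(L\hc M)$.

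The only genuine point of care is the bookkeeping of transposes and conjugates -- verifying $U^* = (U\hc)^T$ and moving the partial transpose cleanly through the trace. As a sanity check, the third identity can be re-derived directly from the explicit index form $\mathbb{F} = \sum_{ab}|ab\>\<ba|$ and the definition of the trace; this is slightly more elementary but obscures the unified role of (a) and (b), so I would present the vectorization-based derivation as the main route and include the index computation only as an optional remark.
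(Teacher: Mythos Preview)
Your proposal is correct and follows exactly the route the paper intends: the lemma is stated immediately after the two vectorization identities with the remark that ``the application of these relations make the following easy to establish,'' and your derivations of each line from (a) and (b) are precisely that application. The only extra ingredient you supply --- the partial-transpose trace identity $\tr(XY^{T_B})=\tr(X^{T_B}Y)$ for the third line --- is standard and the paper implicitly assumes it.
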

These relations generalise to the case where $A$ and $B$ are not isomorphic, and where we allow $M$ to map into a different space, by observing that the smaller system, $B$ say, has $\H_B$ isomorphic to a strict subspace of $\H_A$.
\subsection{Representations of superoperators}
Given a superoperator $\Phi \in T(A,A')$ we can represent it in a number of different ways.
The Choi representation $J(\Phi) \in \B(\H_{A'} \otimes \H_A)$ is provided by
\begin{equation}
J(\Phi) := \Phi \otimes id_A ( |vec(\I)\>\<vec(\I)|).
\end{equation}
with inverse relation given by 
\begin{equation}
\Phi(X) = \tr_{A'} (\I_A \otimes X^T J(\Phi)),
\end{equation}
for any $ X \in \B(\H_A)$. The Kraus decomposition of $\Phi$ is given by $\Phi(X) = \sum_k A_k X B_k^\dagger$, where $\{A_k\}_{k=1}^N$ and $\{B_k\}_{k=1}^N$ are the set of Kraus operators. This automatically implies that the corresponding Choi operator is given by
\begin{equation}
J(\Phi) = \sum_k |vec(A_k)\>\<vec(B_k)|.
\end{equation}

The vectorization map gives another representation $K(\Phi) \in \B(\H_{A'} \otimes \H_A)$ via the expression $K(\Phi): |vec(X)\> \mapsto |vec(\Phi(X))\>$ for all $X$. It is easy to verify that
\begin{equation}
K(\Phi) = \sum_k A_k \otimes B_k^*.
\end{equation}
We also have that $\Phi$ is a quantum process if and only if $A_k = B_k$ for all $k$ and $\sum_k A^\dagger_kA_k = \I$, and if and only if $J(\Phi)$ is a positive semi-definite operator with $\tr_A(J(\Phi)) = \I_{A'}$.

The Steinspring dilation $(V,\H_B, |\eta\>_B)$ provides a final representation for a quantum process $\Phi \in \T(A,A')$ given by
\begin{equation}
\Phi(\rho) = \tr_C V( \rho_A \otimes |\eta\>_B\<\eta|)V^\dagger, 
\end{equation}
where $V: \H_A \otimes \H_B \rightarrow \H_{A'} \otimes \H_C$ is an isometry ($V^\dagger V = \I$), and $\sigma_B$ is a fixed quantum state on an auxiliary system $B$, which can be taken to be pure. 

\section{Decomposition of quantum processes}
\subsection{Representations and tensor product representations}
Given a fixed group $G$ one can usually classify and construct every irreducible representation for that particular group. These are exactly those representations which do not have a proper subrepresentation and therefore they contain no subspace invariant under the action of all group elements. We will be dealing with compact Lie groups $G$ and for these types of groups all their irreducible representations are finite dimensional. We denote by $\hat{G}$ the set of all irreducible representations of  $G$. Each irreducible representation is uniquely determined in a canonical way by a distinguished vector which we generically denote by $\lambda\in \hat{G}$ and is called the \emph{heighest weight vector}. A $\lambda$-irrep acts on an $\rm{dim}(\lambda)$ vector space $V^{\lambda}$ with an irreducible representation $v^{\lambda}:G\longrightarrow GL(V^{\lambda})$ that has matrix coefficients $v^{(\lambda)}_{mm'}(g)$ determined by some fixed basis choice for $V^{\lambda}$. In particular they satisfy Schur's orthogonality relations (which are valid for any compact group) for any $\lambda,\mu\in \hat{G}$:
\begin{equation}
\int\limits_{G} v^{(\lambda)}_{mm'}(g) (v^{(\mu)}_{nn'}(g))^{*}d\,g=\frac{1}{\rm{dim}(\lambda)}\delta_{\lambda,\mu}\delta_{mn}\delta_{nn'}
\end{equation}
For any unitary representations $U_{A}:G\longrightarrow \B(\h_A)$ the Hilbert space $\h_{A}$ has a canonical decomposition into subspaces on which the group acts irreducibly. Formally we can write
\begin{equation}
\h_{A}=\bigoplus_{\lambda\in\hat{G};\alpha} V^{\lambda,\alpha}
\end{equation}
where $\alpha$ is a multiplicity label counting the number of times an irreducible representation appears in the decomposition of $\h_{A}$. The symmetry of the system $A$ which manifests itself through the unitary representation $U$ is the only property that dictates which irreps and corresponding multiplicities appear in the decomposition. 

Since we will be interested in bipartite systems $\h_{A}\otimes\h_{B}$ we want to know how one can decompose this space into irreducible components. Suppose that $U_{B}:G\longrightarrow \B(\h_B)$ is a unitary representation of $\h_{B}$ then there is a \emph{tensor product representation} acting on the composite system $U_{A}\otimes U_{B}:G\longrightarrow \B(\h_{A}\otimes \h_B)$ given by $U_{A}\otimes U_{B}(g)=U_{A}(g)\otimes U_{B}(g)$. For example in the case of SU(2) the irreps are labelled by positive half-integers $j\in\{0,1/2,1,3/2,...\}$ and have dimension $2j+1$. The tensor product representation of two irrep $j_1\otimes j_2$ decomposes into irreducible components according to the Clebsch-Gordan series $j_1\otimes j_2=|j_1-j_2|\oplus...\oplus j_1+j_2$. These correspond physically to the possible total angular momentum values that arise when coupling a particle with spin $j_1$ with another with spin $j_2$. Notice how there is only one configuration for each value of the total angular momentum meaning that each irrep in the decomposition appears with multiplicity one. While this is not necessarily the case for general compact groups $G$ similar techniques can be applied there to obtain the canonical decomposition of tensor product representations. We summarise below how these apply generally and refer to \cite{Reps} for a detailed analysis.
\subsubsection{Detour into generalised Clebsch-Gordan coefficients}
{\it{Let $U^{\mu}$ and $U^{\nu}$ be two irreducible representations of $G$ and assume these are realised on the vector spaces $V^{\mu}$ and $V^{\nu}$ respectively where $\mu,\nu\in\hat{G}$. Under the tensor product representation the space $V^{\mu}\otimes V^{\nu}$ decomposes into irreducible components:
		\begin{equation} V^{\mu}\otimes V^{\nu}\cong \bigoplus_{\lambda\in\hat{G}} m_{\lambda} V^{\lambda}
		\label{decomp}
		\end{equation}
		where $m_{\lambda}$ is the multiplicity of the $\lambda$-irrep. This implies that the product of representations $U^{\mu}\otimes U^{\nu}$ is unitarily equivalent to a block decomposition where each block is an irreducible representation of the group. One can write that for all $g\in G$ 
		\begin{equation}
		C(U^{\mu}(g)\otimes U^{\nu}(g))C^{\dagger}=\bigoplus m_{\lambda} U^{\lambda}(g)
		\end{equation}
		for some unitary matrix $C$ which represents nothing more than a change of basis in $V^{\mu}\otimes V^{\nu}$ from the tensor product basis to a basis that achieves the decomposition. The entries of this matrix are what we call the Clebsch Gordan coefficients (CGC) and provide a generalisation to arbitrary compact groups $G$ of the coefficients that appear when coupling angular momentum states.
		
		When $\{\ket{\mu,k}\}_{k=1}^{dim(\mu)}$ and $\{\ket{\nu,k}\}_{k=1}^{dim(\nu)}$ are basis for $V^{\mu}$ and $V^{\nu}$ respectively and $\{\ket{e^{\lambda,\alpha}_{k}}\}_{k=1}^{dim(\lambda)}$ a basis for the $\lambda$-irreducible component labelled by multiplicity $\alpha$ in the above decomposition into irreducible components then these are related through the Clebsch-Gordan coefficients:
		\begin{equation}
		\ket{e^{\lambda,\alpha}_{k}}=\sum\limits_{m,n} \clebsch{\mu}{m}{\nu}{n}{\lambda,\alpha}{k} \ket{\mu,m}\, \ket{e^{\nu}_{n}}
		\end{equation}
		where the coefficients $\clebsch{\mu}{m}{\nu}{n}{\lambda,\alpha}{k} $ represent entries for the unitary matrix $C$. The CGCs depend on the choice of orthonormal basis in the spaces $V^{\mu}$, $V^{\nu}$ and $V^{\lambda,\alpha}$.
		Beyond orthonormality relations inherited from the unitarity of $C$, the generalised Clebsch-Gordan coefficients posses many different types of permutation symmetries and they are non-zero when particular types of relations hold. Within quantum mechanics these relations are exactly the ones that give the selection rules.
		
		The problem of determining the multiplicity $m_{\lambda}$ of each irrep in \ref{decomp} for the general linear group of fixed dimension n is $sharpP$-complete and it can be approximated with a randomized polynomial time algorithm \cite{LRcomplexity}. The problem of determining CGCs is in the NP-complete class \cite{CGCcomplexity}.
		
}}

\subsection{Irreducible tensor operators} 
The structure of $\h_{A}$ provided by the symmetry carries over to higher-level Hilbert spaces such as $\B(\h_{A})$ and $\T(\h_{A},\h_{A'})$ in such a way that it respects their algebraic structure. The mathematical construction that will allow us to upgrade the decomposition of the Hilbert space $\h_{A}$ into irreducible components to the decomposition of $\B(\h_{A})$ are called \emph{irreducible tensor operators}. 
\begin{definition}
	Let $G$ be a compact group and $U$ a unitary representation of $G$ on the Hilbert space $\H_{A}$. Then for every irreducible representation $\lambda\in\hat{G}$ define the irreducible tensor operators (ITO) to be the set of operators $\{T_{k}^{(\lambda)}\}_{k=1}^{\rm{dim}(\lambda)}$ in $\B(\H_{A})$ such that for all $g\in G$:
	\begin{equation}
	\mathcal{U}_g(T_{k}^{\lambda})=\sum v_{kj}^{(\lambda)}(g)T_j^{\lambda}
	\label{ITOdef}
	\end{equation}
	where $v_{kj}^{(\lambda)}$ are matrix coefficients of the $\lambda$-irrep and ranges over all irreps in the decomposition of the representation $U\otimes U^{*}$.
\end{definition}
\noindent
The action of the group $G$ on the space of operators $\B(\h_{A})$ is given by the adjoint action $\mathcal{U}$. Therefore there is a canonical decomposition for $\B(\h_A)$ into irreducible components such that $\mathcal{U}$ acts like an irrep when restricted to each subspace. There is a natural isomorphism between $\B(\H_A)$ and $\h_{A}\otimes \h_{A}^{*}$ but since we can identify any Hilbert space with its dual we can identify the space of operators with two copies of $\H_A$ carrying the representation given by $U\otimes U^{*}$. This means that all irreps that appear when decomposing $\B(\h_A)$ into irreducible subspaces under $\mathcal{U}$ are exactly those that appear when decomposing $\h_{A}\otimes\h_{A}$ into irreducible subspaces under $U\otimes U^{*}$. The following lemma makes this point precise and shows that the set of all ITOs forms an orthonormal basis for $\B(\h_{A})$.\\ \\

\begin{lemma}
	Let $G$ be a compact group and $U$ a unitary representation of $G$ on the Hilbert space $\h$. Given a full set of irreducible tensor operators $\{T^{\lambda}_{k} : \lambda, k\}$ for $\B(\h_{A})$ then the set $\{\vc{T^{\lambda}_{k}}: k=1,...\rm{dim}(\lambda)\}$ forms an orthonormal basis for the $\lambda$-irrep in the decomposition of $\h_{A}\otimes\h_{A}$ under the action $U\otimes U^{*}$. Moreover the ITOs satisfy the orthonormality relation $\Tr((T^{\lambda}_{k})\hc T^{\mu}_{j})=\delta_{kj}\delta_{\lambda\mu}$ for all $\lambda,\mu$-irrep and all $k,j$.
	\label{ito}
\end{lemma}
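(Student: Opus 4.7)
The plan is to use the vectorization isomorphism between $\B(\H_A)$ and $\H_A \otimes \H_A^*$ to translate the ITO defining relation into a statement about how $|vec(T^\lambda_k)\rangle$ transforms under the tensor representation $U\otimes U^*$, and then invoke Schur's orthogonality relations for the Hermitian-Schmidt inner product.

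First I would vectorize both sides of the ITO transformation law. Using the property $|vec(AXB)\rangle = A\otimes B^T|vec(X)\rangle$, the identity $\U_g(T^\lambda_k) = U(g) T^\lambda_k U(g)^\dagger$ becomes
\begin{equation}
U(g)\otimes U(g)^* \, |vec(T^\lambda_k)\rangle = \sum_j v^{(\lambda)}_{kj}(g)\, |vec(T^\lambda_j)\rangle.
\end{equation}
This immediately shows that, for each fixed $\lambda$ (and each multiplicity copy), the vectors $\{|vec(T^\lambda_k)\rangle\}_k$ span a subspace of $\H_A\otimes \H_A$ that is invariant under $U\otimes U^*$ and carries the $\lambda$-irrep with matrix coefficients $v^{(\lambda)}_{kj}(g)$, as claimed.

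Next I would prove orthogonality. Let $A_{kj}^{\lambda\mu} := \langle vec(T^\lambda_k)|vec(T^\mu_j)\rangle = \Tr((T^\lambda_k)^\dagger T^\mu_j)$. Since the Hilbert-Schmidt inner product is invariant under the adjoint action (because $U\otimes U^*$ is unitary), the transformation law above gives
\begin{equation}
A_{kj}^{\lambda\mu} = \sum_{k',j'} \big(v^{(\lambda)}_{kk'}(g)\big)^* v^{(\mu)}_{jj'}(g)\, A_{k'j'}^{\lambda\mu}
\end{equation}
for every $g\in G$. Averaging both sides over the group and applying the Schur orthogonality relation stated in the preceding subsection collapses the right-hand side to $\frac{\delta_{\lambda\mu}\delta_{kj}}{\dim(\lambda)}\sum_{k'} A^{\lambda\lambda}_{k'k'}$, so $A^{\lambda\mu}_{kj} = c_\lambda\,\delta_{\lambda\mu}\delta_{kj}$ for a constant $c_\lambda$ depending only on the irrep. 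Choosing the ITOs to be normalized (so $c_\lambda = 1$) gives the orthonormality relation $\Tr((T^\lambda_k)^\dagger T^\mu_j) = \delta_{kj}\delta_{\lambda\mu}$.

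Finally I would argue that running the construction over all irreps $\lambda \in \mathrm{Irrep}(A,A)$ of $U\otimes U^*$ and all multiplicities yields a complete basis. Since $\B(\H_A) \cong \H_A \otimes \H_A^*$ decomposes under $U\otimes U^*$ into a direct sum of irreducible subspaces (one block for each $(\lambda,\alpha)$ with $\alpha$ a multiplicity label), and a ``full set'' of ITOs contributes $\dim(\lambda)$ vectors to each block, a dimension count together with the orthonormality just established shows $\{vec(T^{\lambda}_k)\}$ is an orthonormal basis of $\H_A\otimes\H_A$, equivalently $\{T^\lambda_k\}$ is a Hilbert-Schmidt orthonormal basis of $\B(\H_A)$. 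The main delicate point is handling multiplicities: Schur's lemma only forces orthogonality between inequivalent irreps and between distinct basis vectors of the same irrep block, so within a given $\lambda$ one must still choose the multiplicity copies to be mutually orthogonal; this is always achievable by Gram-Schmidt on the finite-dimensional multiplicity space and is the only nontrivial step beyond a direct application of Schur orthogonality.
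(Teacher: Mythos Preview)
Your proposal is correct and follows precisely the approach indicated by the paper, which merely states that ``the result follows easily from orthonormality of matrix coefficients and properties of vectorisation.'' You have supplied exactly those details: the vectorisation identity $|vec(AXB)\rangle = A\otimes B^T|vec(X)\rangle$ to obtain the transformation law under $U\otimes U^*$, and the Schur orthogonality relations applied to the $G$-invariant Hilbert--Schmidt inner product to extract $\delta_{\lambda\mu}\delta_{kj}$; your remark on handling multiplicities via Gram--Schmidt is a sensible clarification that the paper leaves implicit.
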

\emph{Proof}: The result follows easily from orthonormality of matrix coefficients and properties of vectorisation.\qed\\

The basis that achieves the decomposition of $\B(\h)$ into irreducible components is given by the complete set of orthonormal ITOs $\{T^{\lambda}\}_{k=1}^{\rm{dim}(\lambda)}$ for $\lambda$ ranging over all irreps (including multiplicities) that appear in the representation $U\otimes U^{*}$. For every $\lambda$ the set of ITOs under the adjoint action transform irreducibly. Particularly $\U_{g}$ acts on the $\O_{\lambda,\alpha}:=\rm{span}\{ T^{\lambda,\alpha}_{k}:  1\leq k\leq \rm{dim}(\lambda)\}$ in the same way as does the irreducible representation of highest weight $\lambda$. This corresponds to the $\lambda$-irreducible component of multiplicity $\alpha$ in the decomposition of $\B(\h_{A})$. Then the space of operators splits into:
\begin{equation}
\B(\h_{A})\cong \bigoplus_{\lambda,\alpha} \mathcal{O}_{\lambda,\alpha} \    \ .
\end{equation}
Since there is clearly an underlying choice of basis for the irreducible tensor operators there is a sense in which the above decomposition is not entirely unique. However at the high level of the structure of the decomposition there is no freedom to mix operators belonging to different irreducible components.   Denote the $\lambda$-mode by $\mathcal{A}^{\lambda}=\bigoplus_{\alpha} \mathcal{O}_{\lambda,\alpha}$ the full $\lambda$-irreducible component where we have summed over all copies of the $\lambda$ irrep that appear in the decomposition of $\B(\h_{A})$. Therefore the space decomposes in a \emph{unique} way into subspaces: 
\begin{equation}
\B(\h_{A})\cong \bigoplus_{\lambda}\mathcal{A}^{\lambda}
\end{equation}Then given any density matrix $\rho\in \B(\h_{A})$ we can effectively decompose it into \emph{modes of asymmetry} according to: 
\begin{equation}
\rho=\sum\limits_{\lambda} \rho^{\lambda}
\end{equation}
where each of the $\rho^{\lambda}\in\B(\h_{A})$ represents the orthogonal projection of $\rho$ onto the $\lambda$-mode, that is onto the subspace $\mathcal{A}^{\lambda}$ of $\B(\h_{A})$. While indeed some of the projectors above can be zero,  the decomposition of $\rho$ into asymmetry modes will be unique because the coarse grained structure of $\B(\h_{A})$ given by the unitary representation $\U_{g}$ is rigid and always fixed by the symmetries of the underlying Hilbert space. 
\subsubsection{Uniqueness of the ITOs and asymmetry modes}
{\it{ In order to construct a fixed set of ITOs for the space of operators $\B(\h_{A})$ there are two underlying choice of basis: i) the basis for the Hilbert space $\h_{A}$ and ii) a basis for each irreducible representation resulting in a fixed set of matrix coefficients $v^{\lambda}_{kj}(g)$. More specifically the mode decomposition (the coarse grained structure) $\B(\h_{A})\cong \bigoplus_{\lambda} \mathcal{A}^{\lambda}$ is always unique and depends upon the symmetry of the Hilbert space so only on the unitary representation $U$. Once we have fixed a basis for the underlying Hilbert space then the finer-grained decomposition $\B(\h_{A})\cong \bigoplus_{\lambda,\alpha}\mathcal{O}_{\lambda,\alpha}$ becomes unique. Finally whenever we have fixed a basis for the irrep $v^{\lambda}_{kj}$ this implies that we fix the ITOs all together and particularly we fix the vector-component label associated to that particular irrep.}}

\subsection{Irreducible tensor superoperators}
\label{ITSALG}
A similar type of structure we find when dealing with the space of superoperators $\T(A,A')$ and we can further upgrade the irrep decomposition at the level of superoperators by defining the analogue of ITOs:
\begin{definition}
	Let G be a compact group and $U$, $U'$ unitary representations of $G$ on the Hilbert spaces $\h_A$ respectively $\h_{A'}$. For every irreducible representation $\lambda\in\hat{G}$ define the irreducible tensor superoperators (ITS) to be the set of $\{\Phi^{\lambda}_{k}\}_{k=1}^{\rm{dim}(\lambda)}$ in $\T(A,A')$ that transforms under the group action as:
	\begin{equation}
	\uu_{g}(\Phi^{\lambda}_{k}):=\U_{g}'\circ\Phi^{\lambda}_{k}\circ\U_{g}\hc=\sum\limits_{j}v^{(\lambda)}_{kj}\Phi^{\lambda}_{j}
	\end{equation}
	where $v^{\lambda}_{kj}$ are matrix coefficients of the $\lambda$-irrep and ranges over all irreps in the decomposition of $U\otimes U^{*}\otimes U'\otimes U'^{*}$.
\end{definition}
In the main text we use the term process modes for these ITS.

The space of superoperators $\T(A,A')$ has a similar structure in the sense that can be decomposed into irreducible components according to the underlying symmetries. In particular the set of all ITS forms a basis that achieves this decomposition. Therefore when we take into account the possibility of multiplicities to write:
\begin{equation}
\T(A,A')=\bigoplus_{\lambda,\alpha}\rm{span}\{\Phi^{\lambda,\alpha}_{k}: k=1,...,\rm{dim}(\lambda)\}
\end{equation}
where we sum over all $\lambda$-irreps and corresponding multiplicities $\alpha$. The following lemma gives a rigorous proof of this:
\begin{lemma}
	Given that $\{\Phi^{\lambda}_{k}\}_{k=1}^{\rm{dim}(\lambda)}$ forms an ITS for $\T(A,A')$ then for each $\lambda$-irrep in the decomposition of $U_A\otimes U_{A}^{*}\otimes U_A'\otimes (U_{A}')^{*}$ the corresponding Choi operators $J^{\lambda}_{k}:=J(\Phi^{\lambda}_{k})$ form a $\lambda$-irrep ITO under the action of $\U_{A}'\otimes \U_{A}^{*}$.
	\label{ITSChoi}
\end{lemma}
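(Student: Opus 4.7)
The plan is to transport the ITS transformation law on $\Phi^\lambda_k$ through the Choi map. The linearity of $J$ gives immediately that
\[
J\bigl(\mathfrak{U}_g(\Phi^\lambda_k)\bigr)=\sum_{j}v^\lambda_{kj}(g)\,J^\lambda_j,
\]
so all I need to establish is the intertwining identity
\[
J\bigl(\mathfrak{U}_g(\Phi)\bigr)=\bigl(\U_{A'}(g)\otimes \U_A^*(g)\bigr)\bigl(J(\Phi)\bigr)
\]
for every superoperator $\Phi\in\T(A,A')$. Combining these two statements yields exactly the ITO transformation law for $\{J^\lambda_k\}$ under $\U_{A'}\otimes \U_A^*$.

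The proof of the intertwining identity is where the main work lies, but it reduces to the vectorization toolkit given in the background section. Starting from $J(\Phi)=(\Phi\otimes id_A)(|vec(\I)\>\<vec(\I)|)$ and $\mathfrak{U}_g(\Phi)=\U_{A'}(g)\circ\Phi\circ\U_A(g)^\dagger$, I would first pull $\U_{A'}(g)$ outside the Choi construction, since it acts only on the $A'$ factor after $\Phi$ has been applied. This leaves the task of handling $\U_A(g)^\dagger$ acting on the input slot of $\Phi$. The key move is to apply the identity $(U_A^\dagger\otimes \I)|vec(\I)\>=(\I\otimes U_A^*)|vec(\I)\>$, which is a direct consequence of $U_A\otimes U_A^*\,|vec(\I)\>=|vec(\I)\>$ from the background lemma. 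This lets me ``transfer'' the $\U_A(g)^\dagger$ originally sitting on the input of $\Phi$ onto the ancilla copy of $A$, where it becomes $\U_A^*(g)$. Because $\Phi\otimes id$ acts trivially on the ancilla, this factor commutes past $\Phi\otimes id$ and pops out to the left, producing the desired $\U_A^*(g)$ on the second tensor factor of the Choi operator.

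Assembling these steps gives $J(\mathfrak{U}_g(\Phi))=(\U_{A'}(g)\otimes \U_A^*(g))(J(\Phi))$, and substituting $\Phi=\Phi^\lambda_k$ together with the ITS transformation law of $\Phi^\lambda_k$ yields
\[
\bigl(\U_{A'}(g)\otimes \U_A^*(g)\bigr)\bigl(J^\lambda_k\bigr)=\sum_j v^\lambda_{kj}(g)\,J^\lambda_j,
\]
which is precisely the statement that $\{J^\lambda_k\}$ form an ITO basis for the $\lambda$-irrep of $U_{A'}\otimes U_A^*$. The matching of irreps in the decomposition of $U_A\otimes U_A^*\otimes U_{A'}\otimes U_{A'}^*$ and $U_{A'}\otimes U_A^*$ then follows from the fact that the Choi representation is an isometry (up to normalisation) between $\T(A,A')$ and $\B(\H_{A'}\otimes \H_A)$ that intertwines these two group actions, so that irrep content and multiplicities are preserved.

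The main obstacle is purely bookkeeping: carefully tracking which copy of $A$ is the ``input slot'' and which is the ``ancilla'', and correctly invoking the vectorization identity in the form needed to move $U_A^\dagger$ across the maximally entangled vector to produce $U_A^*$ on the other factor. Once the vectorization identity is applied in the right direction, the rest is linearity and commutation of operations on disjoint tensor factors.
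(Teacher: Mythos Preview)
Your proposal is correct and follows exactly the route implicit in the paper: the paper does not write out a proof of this lemma, but it supplies precisely the vectorization identities you invoke (in particular $M\otimes\I\,|vec(\I)\>=\I\otimes M^T\,|vec(\I)\>$ and $U\otimes U^*\,|vec(\I)\>=|vec(\I)\>$) as the intended toolkit. Your intertwining computation $J(\mathfrak{U}_g(\Phi))=(\U_{A'}(g)\otimes\U_A^*(g))(J(\Phi))$ is the natural one-line verification using these identities, and the rest is linearity of $J$ together with the ITS transformation law, so nothing is missing.
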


The above lemma allows us to construct ITSs form ITOs which in turn can be obtained by vectorising basis vectors for irreducible components appearing under the decomposition of the required representations. In light of Lemma \ref{ITSChoi} we can define an inner product on $\T(A,A')$ as 
$\<\E_1,\E_2\>:=\Tr(J[\E_1]\hc J[\E_2])\>$. This ensures that the complete set of ITS for $\T(A,A')$ forms an orthonormal basis.

Similarly to our previous discussion for ITOs the ITS also rely on particular basis choices: i) for the input and output Hilbert spaces $\h_{A}$ and $\h_{A}'$ ii) for the matrix coefficients/ irreducible representations and therefore are not uniquely determined by how they transform under the group action. However the coarse-grained decomposition of $\T(A,A')$ into irreducible components is unique in the sense that for any $\E\in \T(A,A')$ the orthogonal projection onto the $\lambda$-irrep isotypical component (i.e including multiplicities) given by $\bigoplus_{\alpha}\rm{span}\{\Phi^{\lambda,\alpha}_{k}:k=1,...\rm{dim}(\lambda)\}$ does not depend on the underlying choice of basis:
\begin{equation}
\E^{\lambda}=\rm{dim}(\lambda)\int \Tr(v^{\lambda}(g))\uu_{g}[\E] d\,g
\end{equation}
where $\Tr(v^{\lambda}(g))$ is just the character of the $\lambda$-irrep and it is independent on the choice of basis that give the matrix coefficients. Furthermore fixing a basis only for irreps and consequently their matrix coefficients then we obtain the asymmetry modes $\E^{\lambda}_{k}$ of the superoperator $\E$ and these are the projection onto the subspaces $\bigoplus_{\alpha}\rm{span}\{\Phi^{\lambda,\alpha}_{k}\}$ for any fixed $k$ and $\lambda$:
\begin{equation}
\E^{\lambda}_{k}=\rm{dim}(\lambda)\int v^{\lambda}_{kk}(g)\uu_{g}[\E]d\,g.
\end{equation}
Therefore any $\E\in\mathcal{S}(A,A')$ can be written as:
\begin{equation}
\E=\sum\limits_{\lambda,k}\E^{\lambda}_{k}=\sum\limits_{\lambda,\alpha,k} \Phi^{\lambda,\alpha}_{k}\, \<\E,\Phi^{\lambda,\alpha}_{k}\>
\end{equation}
where the projection on any $\lambda$-isotypical component $\E^{\lambda}=\sum_{\alpha,k} \Phi^{\lambda,\alpha}_{k}\, \<\E,\Phi^{\lambda,\alpha}_{k}\>$ is independent on the choice of basis that fixes the ITS and $\E^{\lambda}_{k}=\sum_{\alpha}\Phi^{\lambda,\alpha}_{k}\<\E,\Phi^{\lambda,\alpha}_{k}\>$ depends only on the choice of basis for the $v^{(\lambda)}$ irrep. 
\subsection{Constructing a natural basis of ITS for $\T(A,A')$}
\label{S-canonical-ITS}
By building upon notions of ITOs we provide a way to construct an orthogonal basis of ITSs for the space of superoperators. While the construction does not a priori assume any choice of basis for the underlying Hilbert spaces or irreps -- in practice for computational purposes these will be hidden behind the ITOs and determining these requires basis choices. 

The irreducible representations that appear in the decomposition of $\T(A,A')$ are exactly those found in the product representation $U\otimes U^{*}\otimes U'\otimes (U')^{*}$ and we have denoted the set of all such irreps by $\rm{Irrep}(A,A')$. Each $\lambda\in \rm{Irrep}(A,A')$ can be thought of as arising in the tensor product of a $a$-irrep in the decomposition of $U\otimes U^{*}$ with a $\tilde{a}$-irrep in the decomposition of $U'\otimes (U')^{*}$. This represents a useful way to keep track of multiplicities of each irrep in $\T(A,A')$ which takes into account the natural way in which superoperators act on the input and output systems. This is because $\tilde{a}$ and $a$ also labels irreps in the decomposition of the output space $\B(\h_{A}')$ respectively the input space $\B(\h_{A})$. We then say that the $\lambda$-irrep has an associated multiplicity label that we denote by $m_{\lambda}=(\tilde{a},a)$. While in practice $\tilde{a}$ and $a$-irrep themselves can carry multiplicities as well as giving rise to more than one $\lambda$-irrep in the tensor product $\tilde{a}\otimes a$ we will often suppress this for simplicity of notation. We only make it explicit when it becomes a relevant issue. For instance one should keep in mind that in the case of SU(2) the tensor product of two irreps does not contain irreps with multiplicity greater than one in their decomposition.

\begin{theorem}
	For any $\lambda\in\rm{Irrep}(A,A')$ and multiplicity label $m_{\lambda}=(\tilde{a},a)$ we have that the set of superoperators in $\T(A,A')$ given by:
	\begin{equation}
	\Phi^{\lambda,m_{\lambda}}_{k}(\rho):=\sum\limits_{m,n}\clebsch{\tilde{a}}{m}{a}{n}{\lambda}{k}T^{\tilde{a}}_{m}\Tr(T^{a}_{n}\rho)
	\label{ITS}
	\end{equation}
	forms an orthogonal basis of irreducible tensor superoperators for $\T(A,A')$. $\{T^{\tilde{a}}_{m}\}$ and $\{T^{a}_{n}\}$ denote sets of ITOs for $\B(\h_A)$ respectively $\B(\h_A')$.
	\label{ITSTheorem}
\end{theorem}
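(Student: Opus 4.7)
The plan is to verify the three defining properties of an orthogonal ITS basis (correct transformation law, orthogonality, and completeness) by reducing everything to known facts about irreducible tensor operators (ITOs) and Clebsch-Gordan coefficients.

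First I would check the transformation law. Starting from the definition, apply $\mathfrak{U}_g$ to $\Phi^{\lambda,m_\lambda}_k$ and push the group action onto the ITOs using $\U'_g(T^{\tilde a}_m) = \sum_{m'} v^{\tilde a}_{mm'}(g) T^{\tilde a}_{m'}$, and also convert $\mathrm{tr}(T^a_n\, \U_g^\dagger(\rho)) = \mathrm{tr}(\U_g(T^a_n)\rho) = \sum_j v^a_{nj}(g)\, \mathrm{tr}(T^a_j\rho)$ by cyclicity of the trace. The resulting coefficient on the basic superoperator $T^{\tilde a}_{m'}\,\mathrm{tr}(T^a_j\,\cdot\,)$ is $\sum_{m,n}\clebsch{\tilde a}{m}{a}{n}{\lambda}{k}v^{\tilde a}_{mm'}(g)v^a_{nj}(g)$. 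By the defining identity of Clebsch-Gordan coefficients (equivalently, the fact that the CG matrix intertwines $v^{\tilde a}\otimes v^a$ with the block-diagonal form carrying $v^\lambda$), this sum equals $\sum_{k'} v^\lambda_{kk'}(g)\clebsch{\tilde a}{m'}{a}{j}{\lambda}{k'}$, which is exactly what is needed to conclude $\uu_g(\Phi^{\lambda,m_\lambda}_k) = \sum_{k'}v^\lambda_{kk'}(g)\Phi^{\lambda,m_\lambda}_{k'}$.

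Second I would establish orthogonality using the Choi inner product $\langle \E_1,\E_2\rangle = \mathrm{tr}(J[\E_1]^\dagger J[\E_2])$ from Section \ref{ITSALG}. For the elementary rank-one superoperators $E_{mn}(\rho):=T^{\tilde a}_m\,\mathrm{tr}(T^a_n\rho)$, a direct calculation using the vectorization identities of Lemma 2.1 gives $J[E_{mn}] = T^{\tilde a}_m \otimes (T^a_n)^T$, whence the ITO orthonormality $\mathrm{tr}((T^{\tilde a}_m)^\dagger T^{\tilde a}_{m'}) = \delta_{mm'}$ and the analogue at $A$ yield $\langle E_{mn},E_{m'n'}\rangle = \delta_{mm'}\delta_{nn'}$. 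Since
\begin{equation}
\Phi^{\lambda,m_\lambda}_k = \sum_{m,n}\clebsch{\tilde a}{m}{a}{n}{\lambda}{k}\,E_{mn},
\end{equation}
the transition from $\{E_{mn}\}$ to $\{\Phi^{\lambda,m_\lambda}_k\}$ is implemented by the CG matrix, which is unitary. Hence the $\Phi^{\lambda,m_\lambda}_k$ form an orthonormal set (with the same normalisation convention inherited from the ITOs).

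Finally I would establish completeness by a dimension count. The set $\{T^{\tilde a}_m\}$ (ranging over all output ITOs) is a basis of $\B(\H_{A'})$ and $\{T^a_n\}$ is a basis of $\B(\H_A)$, so $\{E_{mn}\}$ is a basis of $\T(A,A')$ of total cardinality $\dim\B(\H_{A'})\dim\B(\H_A)$. For each fixed multiplicity label $m_\lambda=(\tilde a,a)$, the CG decomposition $V^{\tilde a}\otimes V^a \cong \bigoplus_\lambda V^\lambda$ guarantees that $\{\Phi^{\lambda,m_\lambda}_k\}_{\lambda,k}$ spans the same subspace as $\{E_{mn}\}_{m,n}$ restricted to this label, and summing over all pairs $(\tilde a,a)$ recovers the full space. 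Orthogonality across different $(\lambda,m_\lambda)$-blocks then follows automatically, because CG coefficients connecting distinct irreps or distinct multiplicity copies give orthogonal combinations. The main obstacle I anticipate is the bookkeeping: making the conventions for the CG coefficients, irrep matrix elements, and the ordering of indices mutually consistent (since these conventions are fixed once the bases for $\H_A$, $\H_{A'}$ and each $V^\lambda$ are chosen), and being careful with multiplicities when $\tilde a \otimes a$ contains the same $\lambda$ more than once, in which case the label $m_\lambda$ must also track the internal multiplicity of $\lambda$ in $\tilde a\otimes a$.
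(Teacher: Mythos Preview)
Your proposal is correct and, for the transformation law, follows exactly the paper's argument: push the group action onto the two ITOs and invoke the Clebsch--Gordan intertwining identity $\sum_{m,n}\clebsch{\tilde a}{m}{a}{n}{\lambda}{k}v^{\tilde a}_{mm'}v^{a}_{nn'}=\sum_{k'}v^{\lambda}_{kk'}\clebsch{\tilde a}{m'}{a}{n'}{\lambda}{k'}$. The paper's proof in fact stops there and does not spell out orthogonality or completeness; your explicit verification via $J[E_{mn}]=T^{\tilde a}_m\otimes (T^a_n)^T$, ITO orthonormality, unitarity of the CG matrix, and the dimension count is a useful elaboration that fills in what the paper leaves implicit.
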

\begin{proof}
	We just need to check that under the group action $\uu_{g}[\Phi^{\lambda,m_{\lambda}}_k]\stackrel{?}{=}\sum_{k'}v^{(\lambda)}_{kk'}(g)\Phi^{\lambda,m_{\lambda}}_{k'}$ where the matrix coefficients for the $\lambda$-irrep need to be consistent with the choice of basis assumed by the CGCs. Since $\{T^{\tilde{a}}_{m}\}$ in $\B(\h_{A}')$ and $\{T^{a}_{n}\}$ in $\B(\h_A)$ are ITOs then they will transform as:
	\begin{equation}
	\begin{split}
	&\U_{g}(T^{a}_{n})=\sum_{nn'}v^{(a)}_{nn'}(g)T^{a}_{n'}\\
	&\U_{g}'(T^{\tilde{a}}_{m})=\sum_{m'}v^{(\tilde{a})}_{mm'}(g)T^{a}_{m'}
	\end{split}
	\end{equation}
	and therefore the set of superoperators defined in the Equation \ref{ITS} will transform as:
	\begin{equation}
	\uu_{g}[\Phi^{\lambda,m_{\lambda}}_k(\rho)]=\hspace{-0.5cm}\sum_{m,n,m',n'}\clebsch{\tilde{a}}{m}{a}{n}{\lambda}{k}v^{(\tilde{a})}_{mm'}v^{(a)}_{nn'}T^{\tilde{a}}_{m'}\Tr(T^{a}_{n'}\rho)
	\label{ITStransf}
	\end{equation}
	Taking into account the definition of CGC and the fact that they represent a unitary change of basis then they must satisfy: $\sum\limits_{k'}\clebsch{\tilde{a}}{m'}{a}{n'}{\lambda}{k'}v^{(\lambda)}_{kk'}=\sum\limits_{m,n}\clebsch{\tilde{a}}{m}{a}{n}{\lambda}{k}v^{(\tilde{a})}_{mm'}v^{(a)}_{nn'}$. For more details on the relations between CGCs and matrix coefficients we refer the reader to \cite{Reps}. Finally by substituting this into Equation \ref{ITStransf} we obtain
	\begin{align}
	\uu_{g}[\Phi^{\lambda,m_{\lambda}}_{k}(\rho)]&=\sum\limits_{k',m',n'}\clebsch{\tilde{a}}{m'}{a}{n'}{\lambda}{k'}v^{(\lambda)}_{kk'}(g) T^{\tilde{a}}_{m'}\Tr(T^{a}_{n'}\rho)\nonumber \\
	&=\sum_{k'}v^{(\lambda)}_{kk'}(g)\Phi^{\lambda,m_{\lambda}}_{k'}.
	\end{align}
\end{proof}
\begin{figure}
	\begin{center}
		\includegraphics[width=8cm]{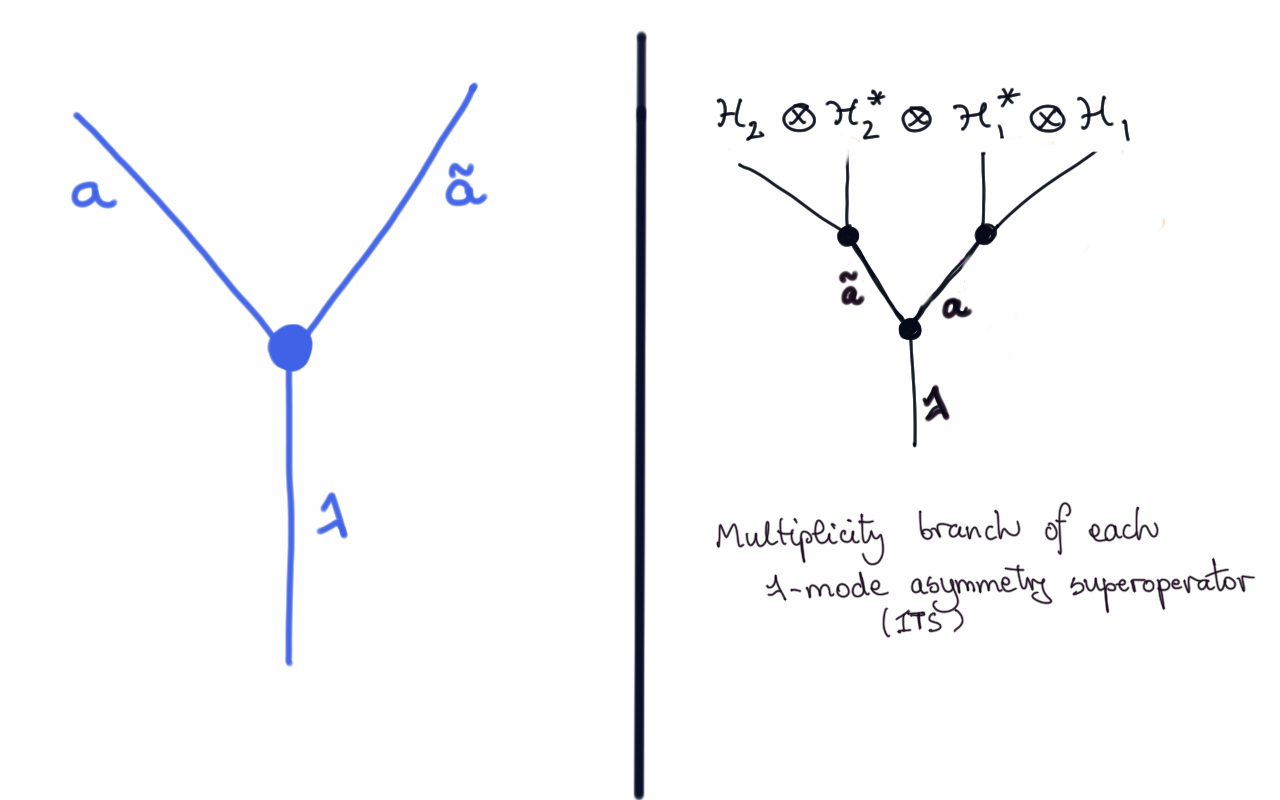}
		\caption{The edges label asymmetric resource species while the vertex represents the interaction of asymmetric resources. Mathematically it is a diagramatic representation of an intertwiner between couplings of irreducible representations. Physically - due to the directionality provided by the quantum operations - it has the interpretation that the input asymmetric mode interacts with some external asymmetry to produce the output asymmetric mode.  We can associate one such diagram to every irreducible tensor superoperator that decomposes the space $\T(\h_1,\h_2)$ of superoperators.
		}
	\end{center}
\end{figure}
In light of the above theorem we can identify a special type of ITSs that have support only on a single irreducible component in the input space and map it to a single irreducible component in the output space. 
\begin{definition}
	We define a complete set of \emph{canonical process modes} in $\T(A,A')$ to be the ITS set $\{\Phi^{\lambda,m_{\lambda}}_{k}\}_{k,\lambda}$ such that for any $\lambda\in\rm{Irrep}(A,A')$ and any corresponding multiplicity $m_{\lambda}$ there exists irreps in the input and output spaces labelled by $a$ and $\tilde{a}$ respectively such that for all $k$ the ITS $\Phi^{\lambda,m_{\lambda}}_{k}: V_{a}\longrightarrow V_{\tilde{a}}'$ is supported only on the a-irrep subspace of $\B(\h_A)$ and with range included only in the $\tilde{a}$-irrep subspace of $\B(\h_{A}')$
	\label{S-canonical-process-modes}
\end{definition}
While every canonical process mode will take the form given by Theorem \ref{ITSTheorem} for some choice of ITOs in the input and output spaces it is generally not true that any ITS can take this form. However the primitives are building blocks that allow us to construct any general ITS and therefore any basis that achieves the decomposition of $\T(A,A')$ into irreducible components. 
\begin{corollary}
	Any $\lambda$-irrep ITS in $\T(A,A')$ can be written as a linear combination of $\lambda$-irrep canonical process modes.
\end{corollary}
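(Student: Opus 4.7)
The plan is to leverage the fact, established in Theorem A.3 (the ITS construction), that the full collection of canonical process modes $\{\Phi^{\lambda, m_\lambda}_{k}\}$ ranging over all $\lambda \in \mathrm{Irrep}(A,A')$, multiplicities $m_\lambda = (\tilde{a},a)$, and vector indices $k$ forms a basis for $\T(A,A')$. Given an arbitrary $\lambda$-irrep ITS $\Psi^\lambda_k$, I would first expand it in this basis as
\begin{equation}
\Psi^\lambda_k = \sum_{\mu, m_\mu, j} c^{(k)}_{\mu, m_\mu, j}\, \Phi^{\mu, m_\mu}_{j},
\end{equation}
for complex scalars $c^{(k)}_{\mu, m_\mu, j}$ that are a priori unconstrained.

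The next step is to pin down these coefficients by using the transformation rule in two different ways. Acting with $\uu_g$ on the LHS gives $\sum_{k'} v^{\lambda}_{kk'}(g) \Psi^\lambda_{k'}$, while acting term-by-term on the RHS uses the analogous rule for each $\Phi^{\mu,m_\mu}_{j}$. Equating the two expansions and using linear independence of the canonical process modes yields, for each fixed $(\mu, m_\mu, j')$, the relation
\begin{equation}
\sum_{k'} v^{\lambda}_{kk'}(g)\, c^{(k')}_{\mu, m_\mu, j'} \;=\; \sum_{j} c^{(k)}_{\mu, m_\mu, j}\, v^{\mu}_{jj'}(g),
\end{equation}
for all $g \in G$. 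Packaging the coefficients into matrices $C^{(\mu, m_\mu)}$ with entries $[C^{(\mu,m_\mu)}]_{k,j} := c^{(k)}_{\mu, m_\mu, j}$, this relation is precisely the intertwining equation $v^{\lambda}(g)\, C^{(\mu, m_\mu)} = C^{(\mu, m_\mu)}\, v^{\mu}(g)$ for all $g$.

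Now I would invoke Schur's lemma: since $v^{\lambda}$ and $v^{\mu}$ are irreducible unitary representations, $C^{(\mu, m_\mu)}$ must vanish whenever $\mu \not\cong \lambda$, and must be a scalar multiple of the identity whenever $\mu = \lambda$. Writing this scalar as $\alpha_{m_\lambda}$, the expansion collapses to
\begin{equation}
\Psi^\lambda_k \;=\; \sum_{m_\lambda} \alpha_{m_\lambda}\, \Phi^{\lambda, m_\lambda}_{k},
\end{equation}
which is exactly a linear combination of $\lambda$-irrep canonical process modes, as required. I do not expect a serious obstacle here; the only place demanding care is bookkeeping the multiplicity labels $m_\lambda = (\tilde{a},a)$ and the convention that the same choice of matrix coefficients $v^{\lambda}(g)$ is used across all ITS definitions, so that Schur's lemma yields a multiple of the identity rather than a more general intertwiner. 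This convention is already implicit in Definition of the ITS and in Theorem A.3.
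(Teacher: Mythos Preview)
Your proof is correct. The paper states this corollary without proof, treating it as an immediate consequence of the preceding theorem (that the canonical process modes form an orthogonal basis of ITS for $\T(A,A')$) together with the uniqueness of the isotypical decomposition; your argument via the intertwining relation and Schur's lemma is exactly the standard way to make that ``immediate'' step explicit, and the bookkeeping caveat you flag about using a common choice of matrix coefficients $v^{\lambda}(g)$ is appropriate.
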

Moreover we draw attention on the fact that the construction of primitive ITS does not rely on a particular choice of basis for the input or output states since the only defining property is that it acts non-trivially only on a particular irreducible subspace and transforms it into another irreducible subspace.

\subsection{Homogeneous spaces}
\label{homogen}
Given a topological space $\M$ a \emph{group action} of $G$ on $\M$ is defined formally as $\cdot:G\times\M\longrightarrow\M$ such that it maps for every $g\in G$  a point $\mathbf{x}\in\M$ into $g\cdot\mathbf{x}\in \M$ and satisfies i) $g_1g_2\cdot\mathbf{x}=g_1\cdot(g_2\cdot\mathbf{x})$ and ii) $e\cdot\mathbf{x}=\mathbf{x}$ where $e$ is the identity element in $G$. A \emph{transitive} group action is one that allows to obtain every point on the manifold from an arbitrary initial point by acting with some group element: $\forall \mathbf{x},\mathbf{y} \  \exists \  \ g\in G$ such that $g\cdot \mathbf{x}=\mathbf{y}$. A $\emph{homogeneous space}$ is a space that has a transitive group action. Moreover all of these can be realised as quotient spaces $G/H=\{gH : g\in G\}$ for some closed subspace $H$ of $G$ carrying the subspace topology.
\subsubsection{Spherical Harmonics}
The spherical harmonics form a basis for the space $\mathcal{L}^2(S^2,\mathbb{C})$ of complex wavefunctions on the sphere. These are picked out through the decomposition of the space intro irreps. For spaces of functions this is conveniently expressed in terms of Lie derivatives. The Lie derivatives corresponding to the Lie algebra generators are given by $L_z$ and $L_\pm= L_x \pm i L_y$, and are obtained from the angular momentum operators $(L_x,L_y,L_z)$ defined in spatial coordinates as $L_i = \sum_{jk}\epsilon_{ijk} x_j \partial_j$. Going from cartesian to spherical coordinates $(\theta,\phi)$ it is easy to check that
\begin{align}
L_0 &= -2i \frac{\partial}{\partial \phi} \\
L_\pm &= e^{\pm i \phi}(\pm \frac{\partial}{\partial \theta} + i \cot \theta \frac{\partial}{\partial \phi}).
\end{align}
In terms of these differential operators we require that $L_0 Y_{j,m}(\theta,\phi) = 2m Y_{j,m}(\theta,\phi)$ and also $L_\pm Y_{j,m}(\theta, \phi) = \sqrt{(j\mp m)(j \pm m +1)} Y_{j,m\pm 1}(\theta,\phi)$.

It can be shown that imposing orthonormality implies that
\begin{equation}
Y_{jm} (\theta, \phi) = P_{jm} (\cos \theta) e^{im\phi},
\end{equation}
where $P_{jm}(x)$ are Legendre functions, with
\begin{align}
P_{jm}(x) &= (1-x^2)^{\frac{m}{2}} \frac{d^mP_j(x)}{dx^m }\\
P_j(x) &= \frac{1}{2^j j!}\frac{d^j (x^2-1)^j}{dx^j}.
\end{align}
\subsubsection{Harmonic analysis on homogeneous spaces}
\label{genSphHar}
We will be concerned with functions on homogeneous spaces and generalisations of spherical harmonics. We denote the space of square integrable functions on the compact\footnote{Unless stated otherwise we deal with compact groups only for many  reasons: i) the irreducible representations are finite dimensional ii) the left and right group actions coincide iii) there is a unique (up to scaling) Haar measure $(G,d\,g)$ iv) irreducibility and indecomposability are equivalent notions v) every finite representation has an equivalent unitary representation . Generally representation theory for compact groups is very well understood.} group $G$ by  by $\mathcal{L}^{2}(G)=\{f:G\longrightarrow \mathbb{C} : \int_{G} |f(g)|^2d\,g\leq \infty\}$. It carries a natural group action given by the \emph{left regular representation}
\begin{equation}
g\cdot f(h)=f(g^{-1}h)
\end{equation}
for all $g,h\in G$. From the perspective of group theory function spaces are useful objects because they encompass the representation theory structure of the respective group. In particular every $\lambda$-irrep of $G$ is realised in $\mathcal{L}^{2}(G)$ under the above group action with multiplicity equal to $\rm{dim}(\lambda)$. This is the content of Peter-Weyl theorem which importantly leads to the Fourier series decomposition of functions when the group $G$ is abelian. One of the consequences of this result is that any function $f\in \mathcal{L}^{2}(G)$ can be written as linear combinations of matrix coefficients ranging over all irreps
\begin{equation}
f(g)=\sum_{\lambda\in \hat{G}}\sum_{i,j=1}^{dim(\lambda)} \tilde{f}(\lambda,i,j)v^{\lambda}_{ij}(g)
\label{fourier}
\end{equation}
for some complex coefficients $\tilde{f}(\lambda,i,j)\in \mathbb{C}$ that can be recovered via the integral formula:$\tilde{f}(\lambda,i,j)=\int_G f(g)(v^{\lambda}_{i,j}(g))^{*}d\,g$ which follows directly from orthonormality of matrix coefficients. For abelian groups $G=U(1)$ the irreps are one-dimensional and the matrix coefficients are given by the usual exponentials  (which also correspond to the characters of the irreps) $e^{i\lambda g}$ for $g\in U(1)$. Then Equation (\ref{fourier}) becomes the Fourier series decomposition of $f\in\mathcal{L}^{2}(U(1))$.

Spherical harmonics arise naturally when we consider functions on homogeneous spaces $G/H$ and at the heart of it lies a generalisation of Peter-Weyl's theorem to such spaces. A function on a homogeneous space $G/H$ is an $\mathcal{L}^{2}(G)$ function that is constant on the left cosets. This means that for all $g\in G$ and $h\in H$ $f(gh)=f(g)$. For these types of function the decomposition into irreducible matrix coefficients as given in Equation (\ref{fourier}) simplifies. In particular only those matrix coefficients that are constant on the left cosets appear in the expansion. In defining spherical harmonics we require the subgroup $H$ to be a \emph{massive} subgroup. A subgroup $H$ is called massive whenever for any $\lambda$-irrep given by $v^{\lambda}$ if there exists a non-zero $H$-invariant vector $\ket{\mathbf{n}}$ in the $\lambda$-irrep carrier space such that $ v^{\lambda}(h)\ket{\mathbf{n}}=\ket{\mathbf{n}}$ for all $h\in H$ then it is unique (up to scalar multiples). This ensures that each $\lambda$-irrep in the decomposition of $\mathcal{L}^{2}(G/H)$ into irreducible components has multiplicity at most one.\footnote{ For example, on the sphere $S^1$ one can either consider the action of $SO(3)$ or $SU(2)$ which contain the massive subgroups $SO(2)$ and $U(1)$ respectively. While every irrep of $SO(3)$ appears in $\mathcal{L}^{2}(S^1)$ only the odd dimensional irreps of SU(2) appear. One can easily check for instance that the $1/2$-irrep fo SU(2) does not have a $U(1)$-invariant vector and therefore this irrep does not appear in the decomposition of functions on a sphere. In this sense we cannot define spherical harmonics on $S^1$ for the 1/2-irrep therefore reflecting the need for extra care on the assumptions for $H$ in order to define generalised spherical harmonics on homogeneous spaces.} In particular each such unique irreducible subspace is spanned by an orthonormal set of functions on $G/H$ which are exactly the \emph{spherical harmonics}. The \emph{associated spherical harmonics} give an explicit such basis with respect to some fixed orthonormal basis $\{\ket{e_{k}}\}$ for the $\lambda$-irrep carrier space where we identify $\ket{e_{1}}:=\ket{\mathbf{n}}$ with the unique $H$-invariant vector for the $\lambda$-irrep and are given by:
\begin{equation}
Y_{\lambda,k}(gH):=v^{\lambda}_{k1}(g)=\<\mathbf{n}|v^{\lambda}(g^{-1})|e_{k}\>
\end{equation}
for $k=1,..., \rm{dim}(\lambda)$ and $gH$ a coset representing an element in $G/H$. It is straightforward to check that $v^{\lambda}_{k1}(g)$ are invariant on left cosets and therefore the above are well defined (orthonormal) functions on $\mathcal{L}^{2}(G/H)$ that span the $\lambda$-irrep subspace.

Spherical harmonics have many useful properties and there are different equivalent viewpoints to the above: through the Lie derivatives, as eigenfunctions of invariant differential operators. However we will be mostly concerned with how they transform under the action of the left regular representation. From the above definition it is straightforward to establish that they transform similarly to irreducible tensor operators. Therefore for any $g\in G$, $\lambda$-irrep appearing in $\mathcal{L}^{2}(G/H)$ and $k=1,..., \rm{dim}(\lambda)$
\begin{align}
g\cdot Y_{\lambda,k}(g_{0}H)&=Y_{\lambda,k}(g^{-1}g_{0}H)=\<\mathbf{n}|v^{\lambda}(g_{0}^{-1}g)|e_{k}\> \nonumber \\
&=\sum_j\<\mathbf{n}|v^{\lambda}(g_{0}^{-1})|e_{j}\>\<e_{j}|v^{\lambda}(g)|e_{k}\>  \nonumber \\
&=\sum_j Y_{\lambda,j}(g_{0}H) v^{\lambda}_{kj}(g).\nonumber
\end{align}

A rigorous account of spherical harmonics on general homogeneous spaces can be found in Chapter 2 of \cite{spherical}.

\subsection{Axial quantum operations}
\label{S-axial-operations}
Before we prove the main theorem regarding axial operations we will take a closer look at the technical aspects involved in defining the process orbit. Given a fixed operation $\E\in\mathcal{S}(A,A')$ with input and output spaces carrying representations of the general compact group $G$ one can construct the stabiliser (or isotropy) group consisting of all those elements $h\in G$ that leave the operation unchanged: $Stab(\E)=\{h\in G | \ \uu_{h}[\E]=\E\}$.  This also defines an equivalence relation between the group elements by identifying any two that belong to $Stab(\E)$ and therefore we can construct the quotient space which for us corresponds to the process orbit $\M(G,\E):=G/Stab(\E)$. By the orbit-stabiliser theorem the process orbit is homeomorphic (for compact groups) to the orbit of $\E$ under the group action and therefore $\M(G,\E)$ is a homogeneous space.  The orbit of  $\E$ under the group action is given by:
$
Orb(\E)=\{\mathfrak{U}_{g}[\E]: g\in G)\}
$.

The space of all superoperators $\T(A,A')$ splits up into disjoint orbits as $\cup_{\E}Orb(\E)=\T(A,A')$. One can uniquely specify any operation by identifying a choice of origin in a particular orbit together with a set of coordinates on the process orbit. In general not all orbits will be isomorphic. This means that we might need two different types of process orbits to describe two quantum processes acting on the same spaces. In particular every fully symmetric operation lies in a single point orbit and the process orbit is simply a point. The invariant data of quantum processes can be directly associated with the choice of origin which gives a distinguished operation for each orbit.
\subsubsection{The use of process orbits}
\label{S-use-of-process-orbits}
A few points warrant mention regarding this perspective. Firstly, note that extreme cases are easy to state. For the case that $\E$ is symmetric the set $\M(G,\E)$ is a single point and so has no structure to it. In the case that the quantum process $\E$ lacks any residual symmetry, the space $\M(G,\E)$ is the full group manifold $G$ (quotiented by any phase symmetries  $e^{i \theta(g)}$ in the representation), and the realisation of $\E$ under a symmetry constraint can be interpreted in terms of encoding a group element $g$ in some token quantum system as $g \mapsto  \sigma_g \in \B(\H_B)$ followed by a symmetric state discrimination to access this group element. However the group element encoding perspective ignores the fact that not all group elements are equal in the realisation of $\E$ in $A$. In contrast, the homogeneous space $\M(G,\E)$ boils things down to the essentials and provides uniformity in the data required. 

Secondly, transformations of $B$ relative to $A$ correspond to displacements on $\M(G,\E)$, and therefore the choice of origin on $\M(G,\E)$ is an arbitrary one that simply encodes how the physics in $A$ is related to the physics in $B$. It is therefore a gauge choice in the sense that the physics local to $A$ does not care about how the symmetry action on $B$ is defined relative to $A$. A redundancy exists in our description of the composite system that leaves observed physics invariant.
We justify the use of this terminology more in the following subsections and the Section II B of the main text.

Finally, in Section \ref{S-Irreversibility} of this Supplementary Material we will be interested in saying when fundamental irreversibility occurs in the use of a symmetry-breaking resource to induce a quantum process $\E$ on a subsystem $A$. This statement must explicitly depend on both the symmetry group and the particular quantum process involved. We shall argue that the geometry induced on $\M(G,\E)$ provides a natural perspective on whether a resource state can be used in a repeatable manner without degrading it. From the perspective of asymmetry the realisation of any quantum process on a quantum system $A$, subject to a symmetry constraint, in the presence of a perfect ``classical reference'' can be viewed abstractly as a mapping $\M(G,\E) \rightarrow \T(A,A')$ with $x \mapsto \E$. The role of $\M(G,\E)$ is to provide an asymptotic, classical set of coordinates demanded under the symmetry constraint. However if one attempts to realise $\E$ using a bounded non-classical system, then we still have a mapping from $\M(G,\E)$ into $\T(A,A')$ however now the resolving power on $\M(G,\E)$ is ``blurred'' due to the non-classical resources used.

\subsubsection{Paradigmatic example: One qubit unitary axial operation under SU(2)}
Before giving the general statement we look at an illustrative example of the decomposition of unitary axial operation on a qubit into irreducible tensors. This will help to clarify all the core ingredients. Let $V=e^{i\mathbf{n}\cdot\sigma}$ with corresponding one qubit process $\V$ that takes the form $\mathcal{V}=\sum_{j,k}\alpha_{jk}\Phi^{j}_{k}$ where without loss of generality assume the ITSs are orthonormal. This is an axial map that leaves invariant any qubit with Bloch vector aligned along the $\mathbf{n}$ axis.

Denote by $J^{j}_{k}:=J[\Phi^{j}_{k}]$ the Choi representation of the $\Phi^{j}_{k}$ superoperator. The Choi operator of the unitary process $\mathcal{V}$ is given by $J[\mathcal{V}]=\vc{V}\vcb{V}$. Therefore we can write the alpha-coefficients in the form:
\begin{equation}
\alpha_{jk}(\mathbf{n})=\vcb{e^{i\mathbf{n}\cdot\sigma}}(J^{j}_{k})\hc\vc{e^{i\mathbf{n}\cdot\sigma}}.
\end{equation}
There is a natural group action on the space of functions on a sphere $\alpha\in \mathcal{L}^{2}(S^2)$ which for any $g\in SU(2)$ is given by:
\begin{equation}
g\cdot \alpha(\mathbf{n})=\alpha(g^{-1}\cdot \mathbf{n})
\end{equation}
where the action of the group element $g\in SU(2)$ on the vector in $S^2$ is defined by $(g\cdot\mathbf{n})\cdot\sigma:= g(\mathbf{n}\cdot\sigma)g\hc$. By a simple direct calculation we can check that the set of alpha-coefficients $\alpha_{jk}$ for fixed $j$ transform under this group action exactly like the spherical harmonics. We get that:
\begin{equation}
\alpha_{jk}(g^{-1}\cdot \mathbf{n})=\sum_{k'} (v^{j}_{kk'}(g))^{*}\alpha_{jk'}(\mathbf{n}).
\end{equation}
Since the group action of SU(2) on the sphere is transitive this means that by fixing a representative process in the $Orb(\mathcal{V})$ with some initial fixed Bloch vector then from this and the above relation we obtain the alpha-coefficients for any process in $Orb(\mathcal{V})$. Moreover we argue that they will be proportional to the dual spherical harmonics on the sphere.

The spherical harmonics $Y_{jk}\in \mathcal{L}^{2}(S^2)$ form a complete orthonormal basis for square integrable functions on a sphere. They achieve the decomposition of $\mathcal{L}^2(S^2)$ into irreducible components and are simultaneous eigenfunctions for all rotationally invariant differential operators. Under the group of rotations for any $R\in SO(3)$ the \emph{standard} spherical harmonics transform according to:
\begin{equation}
Y_{jk}(R^{-1}\mathbf{n})=\sum_{k'}\D^{j}_{kk'}(R)Y_{jk'}(\mathbf{n})
\end{equation}
where $D^{j}$ are the Wigner matrix for the $j$-irrep of SO(3). As a homogeneous space the sphere is diffeomorphic to the quotient spaces $S^{2}\cong SU(2)/U(1)\cong SO(3)/SO(2)$. This leads to a left regular representation of SU(2) and SO(3) on the space $\mathcal{L}^2(S^2)$ that decomposes the space into multiplicity free irreducible components. While for SO(3) the decomposition will be achieved by the spherical harmonics and every possible $j$-irrep appears, for SU(2) only the odd-dimensional irreps appear and these are exactly the ones isomorphic under the double cover to those of SO(3). 

Suppose that the double cover map is given by $H: SU(2)\longrightarrow SO(3)$ and associates an element $H(g)=R\in SO(3)$ for some $g\in SU(2)$. We choose the matrix coefficients $v^{j}_{kk'}$ for the $j$-irrep of $SU(2)$ to be the Wigner matrix i.e  $v^{j}_{kk'}(g)=\mathcal{D}^{j}_{kk'}(R)$ (this in turn means that we construct the ITSs with respect to this choice of basis as well). Then the alpha coefficients $\alpha_{jk}$ transform like the dual spherical harmonics $Y_{j^{*}k}=(-1)^{k}Y_{j,-k}$ in their standard form using the Condon-Shortly phase. Therefore they live in the irreducible subspace spanned by these and consequently can be written as linear combinations of $Y_{j^{*}k}$ for $k=-j,...,j$. Furthermore transitivity of the group action and some algebraic manipulation detailed in the main proof will result in the following form for the coefficients:
\begin{equation}
\alpha_{jk}(\mathbf{n})=a_{j} (-1)^{k} Y_{j,-k}(\mathbf{n})
\end{equation}
for some complex number $a_{j}$ independent on the vector component label $k$ and which depends on the choice of origin for the process orbit or equivalently a representative axial map in $Orb(\mathcal{V})$. 

In conclusion any one-qubit unitary axial map with fixed direction $\mathbf{n}$ can be decomposed as:
\begin{equation}
\mathcal{V}=\sum_{j,k} a_{j} (-1)^{k} Y_{j,-k}(\mathbf{n})\Phi^{j}_{k}
\end{equation}
which only assumes that the choice of basis for the ITS is aligned with the choice of basis for the spherical harmonics. However it remains a basis-independent statement since only the relative alignment is fixed and not the particular basis choice. The following sections give the general results. 

\subsubsection{SU-2 symmetry}
\label{sphericalharmonics}
Axial operations $\E\in\mathcal{S}(A,A')$ under SU(2) are those that are not invariant under the full group but have some residual symmetry given by a subgroup $H$ such that $\uu_{h}[\E]=\E$ for all elements of $h\in H$. By the orbit-stabilizer there is a natural homeomorphism between the homogeneous space $SU(2)/H$ and $Orb(\E)$ which we denote by $T:SU(2)/H\longrightarrow  Orb(\E)$ and explicitly given by $T(gH)=\mathfrak{U}_{g}[\E]$ for all $g\in G$ where $gH$ denotes a coset in $SU(2)/H$.\footnote{The proof is fairly standard. For any $h\in H$ and any $g\in SU(2)$ the following also holds: $\mathfrak{U}_{gh}[\E]=\mathfrak{U}_{g}\circ\mathfrak{U}_{h}[\E]=\mathfrak{U}_{g}[\E]$. This implies that whenever $\mathfrak{U}_{g}[\E]=\mathfrak{U}_{g'}[\E]$ since $\E$ is fixed under the action of H we must have equivalently that $g^{-1}g'\in H$ which means that the group elements generate the same coset $gH=g'H$. Finally SU(2) acts transitively on $Orb(\E)$ so any operation in the orbit will be reached by some $g\in G$} 
Whenever $H$ is isomorphic to a U(1) subgroup of SU(2) then the process orbit $\mathcal{M}(\E,G)$ will be diffeomorphic to a 2-sphere $S^{2}$. This means we can associate to any axial quantum process some direction $\mathbf{\hat{n}}$.

\begin{theorem}
	Let $\E:\B(\H_A)\rightarrow\B(\H_{A'})$ be an axial process associated to some direction $\hat{n}$, that takes states of a quantum system $A$ into states of a system $A'$. Then the orbit of $\E$ under the symmetry action is a $2$-sphere $S^2$ and there exists a basis of irreducible tensor superoperators $\{\Phi^j_k\}$ with $\E(\rho)=\sum_{j,k} \alpha_{j,k}\Phi^j_{k}(\rho)$ such that the components $\alpha_{jk}$ are un-normalised wavefunctions on the sphere given by
	\begin{equation}
	\alpha_{j,k}(\theta,\phi) =a_j Y_{jk}(\theta, \phi),
	\end{equation}
	where $\hat{n}=(\theta,\phi)$, $a_j$ is the norm of $\alpha_{jk}$ that is independent of $k$, and $Y_{jk}(\theta,\phi)$ are spherical harmonics.
	\label{spherical harmonics}
\end{theorem}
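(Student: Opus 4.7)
My plan is to exploit the homogeneous-space structure of the orbit together with the covariance of the ITS basis. The first step is to invoke the orbit-stabilizer theorem: by hypothesis the stabilizer $H$ of an axial $\E$ is a $U(1)$ subgroup of $SU(2)$, so the orbit is diffeomorphic to $SU(2)/U(1)\cong S^2$, and points may be parameterized by unit vectors $\hat{\mathbf{n}}\in S^2$. I would pick a representative $\E_0$ in the orbit whose residual symmetry axis is a reference direction $\hat{\mathbf{z}}$, so that every orbit element takes the form $\E_{\hat{\mathbf{n}}}=\mathfrak{U}_g[\E_0]$ for any $g\in SU(2)$ with $g\cdot\hat{\mathbf{z}}=\hat{\mathbf{n}}$, and I would select the ITS basis $\{\Phi^j_k\}$ in the standard angular-momentum form so that $v^j(h)$ is diagonal on $H$ with entries $e^{-ik\phi}$ in the $|j,k\rangle$ basis.

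The second step is to pin down the coefficients at the reference point. Writing $\E_0=\sum_{j,k}\alpha_{j,k}(\hat{\mathbf{z}})\Phi^j_k$, the residual symmetry $\mathfrak{U}_h[\E_0]=\E_0$ for $h\in H$ combined with the ITS covariance rule $\mathfrak{U}_h[\Phi^j_k]=\sum_{k'}v^j(h)_{kk'}\Phi^j_{k'}$ forces $\alpha_{j,k}(\hat{\mathbf{z}})\,e^{-ik\phi}=\alpha_{j,k}(\hat{\mathbf{z}})$ for all $\phi$. Hence $\alpha_{j,k}(\hat{\mathbf{z}})=a_j\,\delta_{k,0}$ for a single complex invariant $a_j$ per irrep (and half-integer $j$ drop out because they lack a $k=0$ weight), giving the invariant data of the process at the chosen origin.

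The third step is to propagate this structure along the orbit. Applying $\mathfrak{U}_g$ to the expansion of $\E_0$ and comparing coefficients of $\Phi^j_k$ in $\E_{\hat{\mathbf{n}}}=\mathfrak{U}_g[\E_0]$ gives
\begin{equation}
\alpha_{j,k}(\hat{\mathbf{n}})=\sum_{k'}\alpha_{j,k'}(\hat{\mathbf{z}})\,v^j(g)_{k'k}=a_j\,v^j(g)_{0,k}.
\end{equation}
By the construction of associated spherical harmonics on $SU(2)/U(1)$ reviewed in Section \ref{genSphHar}, the $H$-invariant vector in the $j$-irrep is precisely the $k=0$ weight vector, so the matrix coefficient $v^j(g)_{0,k}$ coincides (up to the fixed choice of basis conventions) with the spherical harmonic $Y_{j,k}(\hat{\mathbf{n}})$, and we conclude $\alpha_{j,k}(\hat{\mathbf{n}})=a_j\,Y_{j,k}(\hat{\mathbf{n}})$ as claimed.

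The main obstacle is essentially bookkeeping rather than structural. One must verify that the phase conventions chosen for $\{\Phi^j_k\}$ are compatible with those used in $Y_{j,k}$ -- otherwise a Condon-Shortley sign or a passage to the dual $Y_{j,-k}$, as appears in the explicit single-qubit calculation given earlier, shows up. One must also confirm that the coset representative $g$ satisfying $g\cdot\hat{\mathbf{z}}=\hat{\mathbf{n}}$ is only ambiguous up to right multiplication by $H$, and that this ambiguity is precisely absorbed by $\alpha_{j,k'}(\hat{\mathbf{z}})\propto\delta_{k',0}$, so that $\alpha_{j,k}(\hat{\mathbf{n}})$ descends to a well-defined function on the coset space $SU(2)/U(1)\cong S^2$ rather than on $SU(2)$ itself.
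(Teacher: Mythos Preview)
Your argument is correct and in fact takes a more direct route than the paper's. The paper proceeds abstractly: it first establishes (its Claim~1) that the coefficient functions $\alpha_{j,k}$ transform under the left regular representation in the $j^{*}$-irrep, then invokes the multiplicity-free decomposition of $\mathcal{L}^{2}(S^{2})$ to write $\alpha_{j,k}=\sum_{k'}a^{(k)}_{jk'}Y_{j^{*},k'}$, and finally runs a Schur-type comparison of the two transformation laws (together with orthogonality of matrix coefficients) to force $a^{(k)}_{jk'}=a_{j}\delta_{kk'}$. Your approach short-circuits the last two steps: by evaluating at the reference point $\hat{\mathbf{z}}$ and using the residual $U(1)$ symmetry there, you obtain $\alpha_{j,k}(\hat{\mathbf{z}})=a_{j}\delta_{k,0}$ \emph{immediately}, so the $k$-independence of $a_{j}$ comes for free rather than via an intertwining argument. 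Propagating along the orbit then yields $\alpha_{j,k}(\hat{\mathbf{n}})=a_{j}\,v^{j}(g)_{0,k}$, which is literally the associated-spherical-harmonic construction of Section~\ref{genSphHar}. What your route gains in concreteness the paper's gains in portability: its argument is phrased so as to carry over verbatim to Theorem~\ref{S-polar-decomposition} for arbitrary compact $G$ and massive $H$, whereas yours requires identifying the $H$-invariant vector explicitly (here the $k=0$ weight), though that is a trivial adaptation. Your closing caveats are also on target: the paper's final answer is $\alpha_{j,k}=a_{j}Y_{j^{*},k}$, becoming $a_{j}(-1)^{k}Y_{j,-k}$ under Condon--Shortley, so the dual/transpose you anticipate does indeed appear, and the coset-representative ambiguity is absorbed exactly as you say because the index $0$ sits in the $H$-invariant slot of the matrix coefficient.
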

\begin{proof}

	Because $SU(2)/U(1)\cong S^{2}$ means that the process orbit is a sphere so the orbit will be homeomorphic to $S^2$. We denote this by $f:S^{2}\longrightarrow Orb(\E)$. 
	
	Any operation in the $Orb(\E)$ is an axial operation since it will be also fixed by $H$. Suppose we denote by $\mathbf{x}\in S^2$ such that $f(\mathbf{x})=\E$. The coefficients when decomposing $\E$ into ITSs will depend on $\E$ and therefore on $\mathbf{x}$ and take the form $\E=\sum_{j,k}\alpha_{j,k}(\mathbf{x})\Phi^{j}_{k}$. This means that any axial operation in $Orb(\E)$ will decompose in terms of coefficients that we can view as complex-valued functions on $S^{2}$ such that $\alpha_{j,k}:S^{2}\longrightarrow \mathbb{C}$ .
	Any axial operation $\mathfrak{U}_{g}[\E]\in Orb(\E)$ has a corresponding a unique vector in $S^{2}$ such that $f(\mathbf{x}')= \mathfrak{U}_{g}[\E]$ where $\mathbf{x}'$ depends on the fixed $g$. The transitive group action on the orbit induces a group action on $S^2$ given by $\mathbf{x'}:=g\cdot\mathbf{x}=f^{-1}[\uu_{g}(f(\mathbf{x}))] $. In turn since the coefficients $\alpha_{j,k}\in \mathcal{L}^{2}(SU(2)/U(1))$ we have an induced group action on the function space given by $g\cdot \alpha_{j,k}(\mathbf{x})=\alpha_{j,k}(g^{-1}\cdot\mathbf{x})$. Moreover this group action will be equivalent to the left regular representation since we can identify the orbit with the coset space. 
	
	As $\E$ transforms under the group action it generates only operations in $Orb(\E)$ so we can explicitly show how the coefficients transform under the induced group action by expressing $\alpha_{j,k}(\mathbf{x'})=g\cdot\alpha_{j,k}(\mathbf{x})$ in terms of only the set of coefficients evaluated at $\mathbf{x}$ i.e $\alpha_{j',k'}(\mathbf{x})$ for all $j'$ and $k'$. We show the following:
	
	{\bf{Claim 1}} \emph{The alpha-coefficients will transform under the induced group action as:
		\begin{equation}
		g\cdot \alpha_{jk}(\mathbf{x})=\sum_{k'}v^{j^{*}}_{kk'}(g)\alpha_{jk'}(\mathbf{x})
		\end{equation}
		for any $g\in SU(2)$ and any irrep $j\in\rm{Irrep}(A,A')$ with $v^{j^{*}}_{kk'}$ matrix coefficients of the dual irrep $j^{*}$.}
	\emph{Proof of claim:}\\
	We have that:
	\begin{equation}
	\mathfrak{U}_{g}[\E]=\sum\limits_{j,k} \alpha_{j,k}(\mathbf{x'})\Phi^{j}_{k}.
	\end{equation}
	We want to express $\alpha_{j,k}(\mathbf{x}')$ in terms of $\alpha_{j,k}(\mathbf{x})$ to explicitly show how the alpha coefficients transform under the group action. Since  $\Phi^{j}_{k}$ transform as irreducible tensor superoperators we have that:
	\begin{equation}
	\begin{split}
	\mathfrak{U}_{g}[\E]&=\sum\limits_{j,k} \alpha_{j,k}(\mathbf{x}) \mathfrak{U}_{g}[\Phi^{j}_{k}]\\
	&=\sum\limits_{j,k} v^{j}_{kk'}(g)\alpha_{j,k}(\mathbf{x})\Phi^{j}_{k'}.
	\end{split}
	\end{equation}
	The ITS sets $\Phi^{j}_{k}$ form a complete basis for the space of superoperators and in particular they are linearly independent. Then equating the two different expressions for $\uu_{g}[\E]$ and using orthogonality leads to
	\begin{equation}
	\alpha_{j,k}(\mathbf{x}')=\sum\limits_{j,k'}v^{j}_{k'k}(g)\alpha_{j,k'}(\mathbf{x}).
	\end{equation}
	Since $\mathbf{x}=g\cdot \mathbf{x}$ we can re-write the above as $g^{-1}\cdot\alpha_{j,k}(\mathbf{x})=\sum\limits_{j,k'}v^{j}_{k'k}(g)\alpha_{j,k'}(\mathbf{x})$. However there is nothing special in our choice of the particular element $g$ and corresponding point $\mathbf{x}'$ so it turns out that this occurs for all elements $g\in SU(2)$. Because we can use $(v^{j}_{kk'}(g))^{*}=v^{j}_{k'k}(g^{-1})$ and $(v^{j}_{kk'}(g))^{*}=v^{j^{*}}_{kk'}(g)$ where $j^{*}$ is the dual irrep this leads to the desired transformation
	\begin{equation}
	g\cdot \alpha_{j,k}(\mathbf{x})=\alpha_{j,k}(g^{-1}\cdot \mathbf{x})=\sum_{j,k'}v^{j^{*}}_{kk'}(g)\alpha_{j,k'}(\mathbf{x}).
	\label{transform}
	\end{equation}
	This ends our proof of Claim 1.
	
	It is immediate to check that $\alpha_{j,k}\in \mathcal{L}^{2}(S^2,\mathbb{C})$. However square integrable functions on the sphere decompose into a complete set of spherical harmonics. It is important to mention at this point that the irrep-decomposition of $\mathcal{L}^{2}(S^2,\mathbb{C})$ is multiplicity-free and each $j$-irrep component is spanned by a complete basis of orthonormal functions $Y_{jk}$ for $k=-j,...j$ which correspond to the spherical harmonics. The fact that the isotypical decomposition is multiplicity free is exactly what allows us to define spherical harmonics for this homogeneous space in the first place.
	Since $\alpha_{jk}$ are functions on the manifold $Orb(\E)$ and transform according to Equation (\ref{transform}) then they lie in the $j^{*}$-irrep component of $L^{2}(S^2,\mathbf{C})$ that is spanned by spherical harmonics $Y_{j^{*}k}$. Therefore we can write them in terms of a spherical harmonics basis such that:
	\begin{equation}
	\alpha_{jk}(\mathbf{x})=\sum\limits_{k'} a_{jk'}^{(k)}Y_{j^{*}k'}(\mathbf{x)}
	\end{equation}
	for some complex coefficients $a_{jk'}^{k}$ depending on some fixed j and k. We show that $a_{jk'}^{k}=\delta_{kk'} a_{j}$ for some complex number $a_{j}$ that is independent on the vector component $k$. To do so note that the above should hold for all $\mathbf{x}\in S^{2}$. So we have:
	\begin{equation}
	\begin{split}
	\alpha_{jk}(g^{-1}\cdot \mathbf{x})&=\sum\limits_{k'} a_{jk'}^{(k)}\left(Y_{jk'}((g^{-1}\cdot \mathbf{x)}\right)\\
	&=\sum\limits_{k',m}a_{jk'}^{(k)} v^{j*}_{k'm}(g)Y_{j^{*}m}(\mathbf{x})
	\end{split}
	\end{equation}
	and similarly we have that the alpha coefficients transform as:
	\begin{equation}
	\begin{split}
	\alpha_{jk}(g^{-1}\cdot\mathbf{x})&=\sum\limits_{k'} v^{j^{*}}_{kk'}(g)\alpha_{jk'}(\mathbf{x})\\
	&=\sum\limits_{m,k'} v^{j^{*}}_{kk'}(g) a^{(k')}_{jm} Y_{j^{*}m}(\mathbf{x}).
	\end{split}
	\end{equation}
	Using orthonormality of spherical harmonics we can equate the two different forms for the transformed alpha-coefficients to obtain that for all $m$ the following holds:
	\begin{equation}
	\sum\limits_{k'} v^{j^{*}}_{kk'}(g) a^{(k')}_{jm} =\sum\limits_{k'}a_{jk'}^{(k)} v^{j^{*}}_{k'm}(g).
	\end{equation}
	Now we can use orthonormality of matrix coefficients for the $j^{*}$ irrep to multiply both sides by $(v^{j^{*}}_{ls})^{*}$ and integrate over all group elements to get
	\begin{equation}
	\sum\limits_{k'}\delta_{kl}\delta_{k's}a^{(k')}_{jm}=\sum\limits_{k'} \delta_{k'l}\delta_{ms}a^{(k)}_{jk'}
	\end{equation}
	and therefore reduce to $a^{(s)}_{jm}\delta_{kl}=\delta_{ms}a^{(k)}_{jl}$ for all $m$. Alternatively this says that $a^{(s)}_{jm}=\delta_{ms}a^{(l)}_{jl}$ which means that the coefficients are all independent on the vector components and we have that $\alpha_{jk}=a_{j}Y_{j^{*}k}$ where $a_{j}=a^{(k)}_{jk}$ for any $k$. For the case of SU(2) the irreps sand their duals are isomorphic. Moreover under the choice of Condon-Shortly phase the spherical harmonics for the dual representation are related to the usual spherical harmonics by $Y_{j^{*}k}=(-1)^{k}Y_{j,-k}$. Therefore under some basis choice for the ITS/spherical harmonics the alpha-cofficients  will take the form:
	\begin{equation}
	\alpha_{j,k}(\mathbf{x})=a_{j}(-1)^{k}Y_{j,-k}(\mathbf{x})
	\end{equation}
	for any $\mathbf{x}\in S^2$ and some $a_{j}$ independent on the vector component labelled
	by $k$.
\end{proof}
\subsubsection{Axial operations under general symmetry}
\label{generalSph}
The results in the previous subsection can be generalised to arbitrary compact groups $G$.
\begin{theorem}
	Let $G$ be a compact group with unitary representations $U_{A}$ and $U_{A'}$ on the Hilbert spaces $\h_A$ and $\h_{A'}$. Consider operations $\E$ in $\T(A,A')$ that have an isotropy group $H$ subgroup of $G$. Then any such operation can be decomposed into irreducible tensor superoperators with corresponding coefficients proportional to harmonics functions on the homogeneous space $G/H$. That is:
	\begin{equation}
	\E=\sum_{\lambda,k} a_{\lambda}Y_{\lambda^{*},k}(\mathbf{x})\Phi^{\lambda}_{k}
	\end{equation}
	where $\mathbf{x}\in G/H$ correspond to the point in the process orbit associated with $\E$ and the complex coefficients $a_{\lambda}\in\mathbb{C}$ correspond to invariant data that are fixed for any operation in $Orb(\E)$. 
	\label{S-polar-decomposition}
\end{theorem}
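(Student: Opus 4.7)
The plan is to follow the SU(2) argument of Theorem~\ref{spherical harmonics} almost verbatim, replacing the sphere $SU(2)/U(1)$ by the general homogeneous space $G/H$ and ordinary spherical harmonics by the associated spherical harmonics on $G/H$ reviewed in Section~\ref{genSphHar}. First, by the orbit--stabiliser theorem applied to the group action $\mathfrak{U}_g[\E] = \U_g'\circ\E\circ\U_g^\dagger$, the map $gH \mapsto \mathfrak{U}_g[\E]$ is a homeomorphism $f: G/H \to \mathrm{Orb}(\E)$, so every element of the orbit is indexed by a unique $\mathbf{x}\in G/H$. Expanding in any chosen basis of process modes gives $f(\mathbf{x}) = \sum_{\lambda,k}\alpha_{\lambda,k}(\mathbf{x})\,\Phi^{\lambda}_{k}$ for complex-valued functions $\alpha_{\lambda,k}: G/H \to \mathbb{C}$, which are square-integrable since $G/H$ is compact.

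Next I would derive the transformation law for these functions under the induced left regular action $g\cdot\alpha_{\lambda,k}(\mathbf{x}) := \alpha_{\lambda,k}(g^{-1}\cdot\mathbf{x})$. Acting with $\mathfrak{U}_g$ on the expansion and using the defining ITS transformation $\mathfrak{U}_g[\Phi^{\lambda}_k] = \sum_{k'} v^{\lambda}(g)_{kk'}\Phi^{\lambda}_{k'}$, together with linear independence of the process modes, gives exactly
\begin{equation}
g\cdot \alpha_{\lambda,k}(\mathbf{x}) = \sum_{k'} v^{\lambda^{*}}(g)_{kk'}\,\alpha_{\lambda,k'}(\mathbf{x}),
\end{equation}
in direct analogy with Claim 1 in the SU(2) proof. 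Thus for each fixed $\lambda$ the tuple $(\alpha_{\lambda,k})_k$ spans a copy of the dual irrep $\lambda^{*}$ inside $\mathcal{L}^{2}(G/H)$.

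Now I invoke the harmonic decomposition of $\mathcal{L}^{2}(G/H)$. Assuming $H$ is massive (so that each irrep of $G$ appears at most once in $\mathcal{L}^{2}(G/H)$, which is the natural hypothesis for generalised spherical harmonics to be well defined, as discussed in Section~\ref{genSphHar}), the $\lambda^{*}$-isotypic subspace is either zero or spanned by the single orthonormal family $\{Y_{\lambda^{*},k}\}_k$. Expanding $\alpha_{\lambda,k} = \sum_{k'} a^{(k)}_{\lambda,k'} Y_{\lambda^{*},k'}$ and matching the two sides of the transformation law using Schur orthogonality of the $v^{\lambda^{*}}(g)_{mn}$, I would conclude that $a^{(k)}_{\lambda,k'} = a_{\lambda}\,\delta_{kk'}$ for a single scalar $a_{\lambda}\in\mathbb{C}$ depending only on the chosen orbit representative. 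This yields the claimed decomposition
\begin{equation}
\E = \sum_{\lambda,k} a_{\lambda}\,Y_{\lambda^{*},k}(\mathbf{x})\,\Phi^{\lambda}_{k}.
\end{equation}

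The main obstacle is the step extracting scalar invariants $a_\lambda$ from the transformation law: the Schur orthogonality argument that forces $a^{(k)}_{\lambda,k'} = a_\lambda \delta_{kk'}$ must genuinely exploit multiplicity-freeness of $\mathcal{L}^{2}(G/H)$, and so the proof implicitly relies on $H$ being massive. If $H$ is not massive there is extra freedom in choosing a basis of the $\lambda^{*}$-isotypic subspace and the coefficients $a_\lambda$ will be replaced by a multiplicity-indexed tuple $a_{\lambda,\mu}$; the same argument still produces a separation between orbit-invariant data and coordinates on $G/H$, but the scalar form stated in the theorem only holds in the massive case. A secondary subtlety is matching phase conventions between the dual-irrep matrix coefficients $v^{\lambda^{*}}(g)_{kk'}$ and the conventions used to define $Y_{\lambda^{*},k}$, which I would handle as in the SU(2) case by fixing the basis of each irrep consistently with the basis used to build the canonical process modes.
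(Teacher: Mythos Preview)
Your proposal is correct and follows essentially the same route as the paper: both apply the orbit--stabiliser theorem to view the coefficients $\alpha_{\lambda,k}$ as functions on $G/H$, derive the transformation law $g\cdot\alpha_{\lambda,k}=\sum_{k'} v^{\lambda^*}_{kk'}(g)\,\alpha_{\lambda,k'}$ via the ITS property, and then invoke the harmonic decomposition of $\mathcal{L}^2(G/H)$ to identify the coefficients with (rescaled) spherical harmonics. The paper's proof is in fact more condensed --- it establishes $\alpha_{\lambda,k}\in\mathcal{L}^2(G/H)$ and the transformation law, then defers the extraction of the scalar $a_\lambda$ to ``a repeat of the argument'' from the SU(2) case; you spell this Schur-orthogonality step out and correctly flag that the scalar form requires $H$ to be massive, a point the paper only raises in the remark following its proof.
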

\begin{proof}
	Any $\E$ can be decomposed in terms of irreducible superoperators as $\E=\sum_{\lambda,k}\alpha_{\lambda,k}\Phi^{\lambda}_{k}$.  However since $\E$ is also an axial operation with stabilizer (or isotropy) group $H=\{g\in G : \uu_{g}[\E]=\E\}$. Moreover any operation in $Orb(\E)$ will also be an axial operation with the same stabilizer group. By the orbit-stabilizer theorem there is an homeomorphism $Orb(\E)\cong G/H$. Clearly $G/H$ is a homogeneous space because $G$ acts transitively on any single orbit. Therefore we can associate to any point $\mathbf{x}\in G/H$ an axial operation in $Orb(\E)$, which also was previously referred to as the process orbit $\M(G,\E)$. This in turn implies that we can view the corresponding alpha-coefficients for each such axial operation as functions on the process orbit. In particular we can show that $\alpha_{\lambda,k}\in \mathcal{L}^{2}(G/H)$. From orthonormality of ITS we have that for any $\lambda,k$ $\alpha_{\lambda,k}(\mathbf{x})=\<\Phi^{\lambda}_{k},\E\>$ with $\mathbf{x}$ being the associated coordinate point of $\E$ on the process orbit. Because the group acts transitively for any $\mathbf{y}\in G/H$ there exists $g\in G$ such that $g\cdot\mathbf{x}=\mathbf{y}$. Therefore:
	\begin{equation}
	\int_{G/H} |\alpha_{\lambda,k}(\mathbf{y})|^2 d\,\mathbf{y}=\int_{G}|\alpha_{\lambda,k}(g\cdot\mathbf{x})|^2 d\,g.
	\end{equation}
	However $\alpha_{\lambda,k}(g\cdot \mathbf{x})=\<\Phi^{\lambda}_{k},\uu_{g}[\E]\> =\<\uu_{g}\hc[\Phi^{\lambda}_{k}],\E\>$ and hence we obtain that 
	\begin{equation}
	\alpha_{\lambda,k}(g\cdot\mathbf{x})=\sum_{j}v^{\lambda}_{jk}(g)\<\Phi^{\lambda}_{j},\E\>=\sum_{j}v^{\lambda}_{jk}(g)\alpha_{\lambda,j}(\mathbf{x})
	\end{equation}
	which by substituting in the above integral we get straight away from orthonormality of matrix coefficients that $\int_{G/H}|\alpha_{\lambda,k}(\mathbf{y})|^2 d\,\mathbf{y}<\infty$ and therefore $\alpha_{\lambda,k}\in \mathcal{L}^{2}(G/H)$. By a repeat of the argument for the normalization in for axial maps, we find that the norm of $\alpha_{\lambda,k}$ is independent of $k$ and so we can write $\alpha_{\lambda,k}(\x) = a_\lambda Y_{\lambda^*,k}(\x)$ as claimed.

\end{proof}

In the case that $H$ is a massive subgroup, the space $\mathcal{L}^{2}(G/H)$ decomposes under the left regular representation into irreducible subspaces all appearing with multiplicity one. Unlike the decomposition of $\mathcal{L}^2(G)$ not all irreps of $G$ appear in the decomposition of $\mathcal{L}^{2}(G/H)$. However each such $\lambda$ irreducible subspace has a basis of spherical harmonic functions $Y_{\lambda,k}$ with the vector label component ranging from $k=1,...\rm{dim}(\lambda)$. However the alpha-coefficients transform like the $\lambda^{*}$-irrep under the left regular representation and hence they can be written in terms of the corresponding spherical harmonics $Y^{\lambda^{*}}_{k}$.

\subsection{One qubit operations under SU(2) symmetry}
\label{S-single-qubit-process-modes}
Consider a one qubit system $\h\cong\mathbb{C}^{2}$ which carries the 1/2-irreducible representation of SU(2). This is the fundamental representation of SU(2) whose action is given by matrix multiplication. The space of operators $\B(\h)$ decomposes according to this symmetry into two orthogonal subspaces carrying the trivial 0-irrep respectively the 1-irrep. An analogous way of expressing this statement is through the Bloch-vector representation of a qubit state. Any valid density matrix $\rho\in \B(\h)$ can be written as $\rho=\frac{1}{2}\left(\iden+\mathbf{r}\cdot\boldsymbol{\sigma}\right)$ where $\boldsymbol{\sigma}=(\sigma_x,\sigma_y,\sigma_z)$ is the vector of Pauli matrices and $\mathbf{r}$ represents the Bloch vector associated with $\rho$. The Pauli matrices transform non-trivially under the adjoint action of any SU(2) matrix and they span the 1-irrep subspace whereas the identity is invariant under the group action and corresponds to the trivial 0-irrep subspace. 

According to this symmetry any one-qubit quantum process can be decomposed into ITSs -- each of which will act non-trivially only on one of the two modes $a\in\{0,1\}$ and transform them into a different mode $\tilde{a}\in\{0,1\}$ according to the $\lambda$-irrep label that captures how the ITS behaves under the group action. To each ITS we associate a particular diagram label $\theta=(a,\tilde{a})\stackrel{\lambda}{\longrightarrow}$ where $\lambda\in\{0,1,2\}$ with multiplicity two, three and one respectively. The diagram $(1,0)\stackrel{1}{\longrightarrow}$ is unphysical and therefore does not appear in the decomposition of any valid quantum process. Out of the total five remaining possible ITSs, two of them $(0,0)\stackrel{0}{\longrightarrow}$ and $(0,1)\stackrel{1}{\longrightarrow}$ give an output independent of the Bloch vector of the initial state $\rho\in\B(\h)$. The first corresponds to the superoperator $\Phi_{0}(\rho)=\frac{\iden}{2}$ and the second span a 3-dimensional superoperator subspace that displace the maximally mixed state and therefore appear only in non-unital quantum processes. The latter we denote by $\Phi^{1,\tilde{m}_{1}}_{k}(\rho)=\boldsymbol{\sigma}_{k}$ where the multiplicity label $\tilde{m}_{1}$ specifies that this particular 1-irrep ITS arrises from coupling $(0,1)\stackrel{1}{\longrightarrow}$ and $k$ corresponds to the vector component. The space of invariant one-qubit superoperators is two dimensional and the most general such quantum process is a depolarised process. It is spanned by $\Phi_{0}$ and the ITS given by diagram $(1,1)\stackrel{0}{\longrightarrow}$ which we explicitly denote by $\Phi^{0,m_0}(\rho)=\frac{\mathbf{r}\cdot\boldsymbol{\sigma}}{\sqrt{3}}=\rho-\frac{\iden}{2}$. This leaves only two non-trivial diagrams corresponding to both input and output mode carrying asymmetry. The remaining 1-irrep ITS corresponds to coupling $(1,1)\stackrel{1}{\longrightarrow}$ which we label with multiplicity $m_1$ will be given by:
\begin{equation}
\begin{split}
\Phi^{1,m_1}_{-1}&=-\frac{1}{2\sqrt{2}}\left(\sigma_{+}\Tr(\sigma_z\rho)-\sigma_z\Tr(\sigma_{+}\rho)\right)\\
\Phi^{1,m_1}_{0}&=\frac{i}{2\sqrt{2}}\left(\sigma_x\Tr(\sigma_y\rho)-\sigma_y\Tr(\sigma_x\rho)\right)\\
\Phi^{1,m_1}_{1}&=-\frac{1}{2\sqrt{2}}\left(\sigma_{-}\Tr(\sigma_z\rho)-\sigma_z\Tr(\sigma_{-}\rho)\right).
\end{split}
\end{equation}
Similarly for the 5-dimensional subspace spanned by superoperators corresponding to the diagram $(1,1)\stackrel{2}{\longrightarrow}$ will be described by the following (where we suppress the multiplicity label since there is a single component for the 2-irrep)
\begin{align}
\Phi^{2}_{-2}(\rho)&=\frac{1}{2}\left( (\sigma_{+})\Tr((\sigma_{+})\rho)\right)\nonumber\\
\Phi^{2}_{-1}(\rho)&=-\frac{1}{2\sqrt{2}}\left( (\sigma_{+})\Tr(\sigma_z\rho)+\sigma_z\Tr(\sigma_{+}\rho) \right)\nonumber\\
\Phi^{2}_{0}(\rho)&=-\frac{1}{2\sqrt{6}}\left(-\sigma_x\Tr(\sigma_x\rho)/2-\sigma_y\Tr(\sigma_y\rho)/2+\sigma_z\Tr(\sigma_z\rho)\right)\nonumber\\
\Phi^{2}_{1}(\rho)&=-(\Phi^{2}_{-1}(\rho))\hc\nonumber\\
\Phi^{2}_{2}(\rho)&=(\Phi^{2}_{-2}(\rho))\hc.
\end{align}
The most general axial quantum process on one qubit with fixed axis $\mathbf{n}=(x,y,z)$ takes the diagramatic form:
\begin{equation}
\begin{split}
\E&=\Phi_{0}+\alpha_{0}\Phi^{0,m_0}+\tilde{\alpha}_{1}\boldsymbol{Y}^{1^{*}}(\mathbf{n})\cdot\boldsymbol{\Phi}^{1,\tilde{m}_1}\\
&+\alpha_{1}\boldsymbol{Y}^{1^{*}}(\mathbf{n})\cdot\boldsymbol{\Phi}^{1,m_1}+\alpha_2\boldsymbol{Y}^{2^{*}}(\mathbf{n})\cdot\boldsymbol{\Phi}^{2}.
\end{split}
\end{equation}
\section{Biparitite symmetric operations}
\label{bipartiteproof}
\subsection{General structure theorem}
\label{S-bipartite-theorem}
In this section we will be concerned with fully characterising bipartite symmetric operations under general symmetry constraints. 
\begin{theorem}
	Let $G$ be a compact group and consider an input quantum system described by $\H_{\rm \tiny in}=\H_{A}\otimes\H_{B}$ and an output quantum system described by $\H_{\rm \tiny out}=\H_{A'}\otimes\H_{B'}$. These spaces carry unitary representations of $G$ given by $U_{A}\otimes U_{B}$ and $U_{A'}\otimes U_{B'}$ respectively. Then every symmetric process $\E$ can be written as
	\begin{align}
	\E &= \sum_{\lambda, m, m'}\Phi_{\lambda, m, m'} \nonumber \\
	\Phi_{\lambda, m, m'}&=\sum_{k=1}^{\rm{dim} \lambda} \mathcal{A}^{\lambda,m}_k\otimes \mathcal{B}^{\lambda^{*},m'}_{k}
	\end{align}
	where $\lambda \in \rm{Irrep}(A,A') \cap \rm{Irrep} (B, B')$, $m, m'$ are multiplicity labels for the irrep and where $\{\A^{\lambda_{A},m_{A}}_{k}\}$ forms a complete set of ITS for $\T(\H_{A},\H_{A'})$ and similarly $\{\B^{\lambda_{B},m_{B}}_{k}\}$ forms a complete basis set of ITS for $\T(\H_B,\H_B')$.
	\label{main}
\end{theorem}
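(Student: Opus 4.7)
The plan is to expand $\E$ in a tensor product basis of local process modes, impose the global symmetry via group averaging, and then extract the rigid structure from Schur orthogonality of matrix coefficients. The key technical input is that canonical process modes form an orthonormal basis for each local process space (Lemma \ref{ITSChoi} together with the construction in Section \ref{S-canonical-ITS}), so their tensor products form an orthonormal basis for $\T(\H_{\rm in},\H_{\rm out})$.

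First I would write
\begin{equation}
\E = \sum_{\lambda_A,m_A,k,\;\lambda_B,m_B,j} \alpha^{(\lambda_A,m_A),(\lambda_B,m_B)}_{k,j}\, \A^{\lambda_A,m_A}_k \otimes \B^{\lambda_B,m_B}_j,
\end{equation}
and then use that symmetry of $\E$ is equivalent to $\E = \int_G \mathfrak{U}_g[\E]\, dg$ together with the defining transformation law for process modes. This yields
\begin{equation}
\mathfrak{U}_g\!\left[\A^{\lambda_A,m_A}_k \otimes \B^{\lambda_B,m_B}_j\right] = \sum_{k',j'} v^{\lambda_A}_{kk'}(g)\, v^{\lambda_B}_{jj'}(g)\, \A^{\lambda_A,m_A}_{k'} \otimes \B^{\lambda_B,m_B}_{j'},
\end{equation}
so the invariance forces every coefficient of each basis superoperator to survive the group average.

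The heart of the argument is then Schur orthogonality applied to the integral
\begin{equation}
\int_G v^{\lambda_A}_{kk'}(g)\, v^{\lambda_B}_{jj'}(g)\, dg.
\end{equation}
Choosing a basis for each irrep so that the dual representation satisfies $v^{\lambda^*}_{jj'}(g) = \overline{v^{\lambda}_{jj'}(g)}$, this integral vanishes unless $\lambda_B = \lambda_A^{*}$, and in that case equals $\frac{1}{\dim\lambda}\delta_{kj}\delta_{k'j'}$. Consequently the only surviving tensor products pair $\lambda$ with $\lambda^{*}$, and within each such pair the summation over vector indices collapses to a single sum $\sum_k \A^{\lambda,m}_k \otimes \B^{\lambda^*,m'}_k$, while the multiplicity labels $m,m'$ remain free. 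Packaging the residual coefficients gives precisely $\Phi_{\lambda,m,m'}$ as claimed, and summation restricted to $\lambda\in\mathrm{Irrep}(A,A')\cap\mathrm{Irrep}(B,B')$ arises automatically because either basis is empty outside this intersection.

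The main obstacle I anticipate is not the Schur computation itself but the careful bookkeeping of basis conventions: one must fix a convention tying the basis of $V^{\lambda^{*}}$ to that of $V^{\lambda}$ so that the dual matrix coefficients are literally complex conjugates, and one must verify that this choice is consistent across the local bases $\{\A^{\lambda,m}_k\}$ and $\{\B^{\lambda^{*},m'}_k\}$ (i.e.\ the index $k$ on the two sides really refers to paired basis vectors of $V^{\lambda}$ and $V^{\lambda^{*}}$). A secondary subtlety is that multiplicities $m_A,m_B$ on each side are independent -- no pairing between them is forced by symmetry -- which is why the sum in the final decomposition ranges freely over both, giving the full dimension count for the space of symmetric processes as $\sum_\lambda (\mathrm{mult}_A\,\lambda)(\mathrm{mult}_B\,\lambda^{*})$. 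Once these conventions are pinned down the proof is a direct assembly of the ingredients above.
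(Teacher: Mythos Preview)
Your proposal is correct and rests on the same representation-theoretic fact as the paper's proof, namely that the trivial irrep occurs in $\lambda_A\otimes\lambda_B$ if and only if $\lambda_B\cong\lambda_A^{*}$, which is equivalent to the Schur orthogonality integral you compute. The difference is one of direction: the paper first verifies directly that each $\Phi_{\lambda,m,m'}=\sum_k \A^{\lambda,m}_k\otimes\B^{\lambda^*,m'}_k$ is invariant (using $\sum_k v^{\lambda}_{kk'}(g)\,v^{\lambda^*}_{kj}(g)=\delta_{jk'}$), and then argues completeness by counting trivial subrepresentations in $\T(A,A')\otimes\T(B,B')$ and checking orthogonality of the $\Phi_{\lambda,m,m'}$; you instead start from an arbitrary expansion of $\E$ and project via the Haar average, letting Schur orthogonality do both jobs at once. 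Your route is slightly more constructive and avoids a separate spanning argument, while the paper's route makes the orthonormality and dimension count more explicit. The basis-convention subtlety you flag is exactly the one the paper notes (that both ITS families must be written relative to the same choice of irrep matrix coefficients), so you have identified the only genuine bookkeeping issue.
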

\begin{proof}
	First we show that every $\Phi_{\lambda,m,m'}$ constructed this way are symmetric under the group action on space of bipartite superoperators $\T(A\otimes B,A'\otimes B)$:
	\begin{align}
	\uu_{g}(\Phi_{\lambda,m,m'})&=\sum\U_{A'}(g)\circ\A^{\lambda,m}_{k}\circ \U_{A}\hc(g)\otimes\nonumber\\
	& \otimes \U_{B'}(g)\circ\B^{\lambda^{*},m'}_{k}\circ \U_{B}\hc(g) \nonumber \\
	&=\sum_{k}v^{\lambda}_{kk'}(g)v^{\lambda^{*}}_{kj}(g)\A^{\lambda,m}_{k'}\otimes\B^{\lambda^{*},m'}_{j}\nonumber
	\end{align}
	where in the last equation we only used the fact that $\{A^{\lambda}_{k}\}$ and $\{B^{\lambda^{*}}_{k}\}$ form complete sets of ITSs. Observe that there is an underlying assumption that both sets of ITS transform relative to the matrix coefficients computed with respect to the same basis. Moreover the standard $\lambda$-irrep matrix is unitary. Therefore  $\sum_{k}v^{\lambda}_{kk'}(g)v^{\lambda^{*}}_{kj}(g)=\sum_{k}v^{\lambda}_{kk'}(g)v^{\lambda}_{jk}(g^{-1})=(v^{\lambda}v^{\lambda})\hc_{jk'}=\delta_{jk'}$ where the last equality comes from unitarity and have used the dual representation is given by $v^{\lambda^{*}}(g)=v^{\lambda}(g^{-1})^{T}$. This implies that for any $\lambda$ and multiplicities $m$ and $m'$ we have that\footnote{Remark: Dealing with matrix coefficients gives a false sense of simplicity. Orthonormality of matrix coefficients is actually a powerful result that can be linked to Peter-Weyl's theorem and Schur's lemma. These are technical statements which are central to many important results in group theory. The core underlying result of the main theorem we prove is a consequence of these: {\it{The tensor product of two irreducible representations $\lambda\otimes\mu$ contains at most one trivial irrep and this occurs if and only if $\mu\cong \lambda^{*}$ }}}:
	\begin{align}
	\uu_{g}(\Phi_{\lambda,m,m'})=\sum_{j}\A^{\lambda,m}_{j}\otimes\B^{\lambda^{*},m'}_{j}=\Phi_{\lambda,m,m'}\nonumber
	\end{align}
	This implies that each of the superoperators $\Phi^{\lambda,m,m'}$ are invariant under the group action. Finally all we have left to prove is they span the whole invariant subspace of superoperators on the bipartite system considered. To do so we need to characterise the trivial subspaces and their multiplicity in the decomposition of $\T(A\otimes B, A'\otimes B')$ into irreducible components. All these irreducible components are exactly the ones that appear in the decomposition of the tensor product of representations $U_{A'}\otimes U_{A'}^{*}\otimes U_{B'}\otimes U_{B'}^{*}\otimes U_{A}\otimes U_{A}^{*}\otimes U_{B}\otimes U_{B}^{*}$. The irreducible components remain the same under swaping the tensors in the previous representation. Therefore any irrep in $\T(A\otimes B, A'\otimes B')$ arises in decomposing the tensor product coupling of an irrep $\lambda_{A}\in \rm{Irrep}(A,A')$ and $\lambda_{B}\in\rm{Irrep}(B,B')$. This accounts for all irreps in the space of superoperators on the bipartite system. In particular a classical result in representation theory says that there is at most one trivial 0-irrep subspace in the decomposition of $\lambda_{A}\otimes\lambda_{B}$ and appears if and only if $\lambda_{B}\cong \lambda_{A}^{*}$. Therefore as we range over all $\lambda\in \rm{Irrep}(A,A')$ with multiplicity $m$ and $\lambda^{*}\in \rm{Irrep}(B,B')$ (if it exists) with multiplicity $m'$ we generate all possible trivial components. Each of them will be distinct because they arise in orthogonal subspaces. Most easily this can be seen by the orthogonality relations:
	\begin{align}
	\<\Phi_{\lambda,m,m'},\Phi_{\nu,n,n'}\>&=\sum_{k,j}\<\A^{\lambda,m}_{k},\A^{\nu,n}_{j} \>\<\B^{\lambda^{*},m'}_{k},\B^{\nu^{*},n'}_{j}\>\nonumber\\
	&=\rm{dim}(\lambda)\delta_{\lambda,\nu}\delta_{m,n}\delta_{m',n'}
	\end{align}
	where in the last equation we used the fact that ITS at $\T(A,A')$ and $\T(B,B')$ are orthogonal (and we also assume without loss of generality that they are normalised).
\end{proof}
While the above theorem holds for general local ITS at system $A$ and $B$ we will be interested in decomposing general symmetric operations in terms of primitive local ITS. This means that the complete basis of symmetric superoperators can be labelled by diagrams $\theta=[(a,\tilde{a})\stackrel{\lambda}{\longrightarrow}(b,\tilde{b})]$ which reflect what are the local ITS couplings that make up each basis element. In particular in terms of the allowed symmetry-breaking properties of the input and output states it has the following mathematical form:

	\begin{equation}
	\Phi_{\theta}(\rho_{AB})=\sum_{m,m',n,n',k}\clebsch{\tilde{a}}{m}{a}{n}{\lambda}{k}\clebsch{\tilde{b}}{m'}{b}{n'}{\lambda^{*}}{k} T^{\tilde{a}}_{m}\otimes T^{\tilde{b}}_{m'}\Tr(T^{a}_{n}\otimes T^{b}_{n'}\rho_{AB})
	\label{diagrambasis}
	\end{equation}

where the operators $\{T^{\tilde{a}}_{m}\},\{T^{\tilde{b}}_{m'}\}, \{T^{a}_{n}\}, \{T^{b}_{n'}\}$ form complete sets of orthonormal ITOs for $\B(\h_{A'}),\B(\h_{B'}),\B({\h_{A}})$ respectively $\B(\h_{B})$. By ranging over all possible diagrams Equation \ref{diagrambasis} gives an orthogonal complete basis of symmetric bipartite superoperators.  We make several remarks on the generality of such a construction:\\
i) There is no a priori choice of basis for the underlying Hilbert spaces and there is freedom in choosing a basis for the local input and output states. \\
ii) There is however an assumption in lifting the choice of basis from the local Hilbert space to basis of local (and global) superoperators. In particular the choice of basis for superoperators is such that it maps input basis states to output basis states, but generally there are valid orthonormal superoperator basis that do not act in this way. The reason for this construction is purely operational: it allows us to analyse both local symmetry-breaking properties of the bipartite states and of the global symmetric processes as well as how these two concepts interact with one another.\\
iii) The notation for each diagram $\theta$ and underlying irrep couplings does not explicitly include multiplicities. However it should be understood that when we identify a particular diagram it arises from particular couplings of input and output irreps having specific multiplicities.

\begin{lemma} Given  $\E\in \T(AB,A'B')$, if $\E$ has (a) non-trivial subsystems $A,B,A',B'$, (b) globally symmetry, and (c) complete-positivity, then $\E$ must take the form
	\begin{equation}
	\E = \sum_\theta ( x_\theta \Phi_\theta + x_\theta^* \Phi_{\theta^*}),
	\label{dualCP}
	\end{equation}
	where $x_\theta \in \mathbb{C}$ and where $\theta^*$ is the dual diagram to $\theta$. Moreover trace-preservation implies $x_\theta=\sqrt{\frac{d_Ad_B}{d_{A'} d_{B'}}}$ for $\theta = [ (0,0)\stackrel{0}{\longrightarrow}(0,0)]$ and $x_\theta=0$ for $\theta=[(\lambda,0)\stackrel{\lambda}{\longrightarrow}(\lambda,0)]$ for any non-trivial $\lambda$-irrep.
\end{lemma}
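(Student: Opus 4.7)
The starting point is Theorem~2 (proved in Supplementary Material~\ref{S-bipartite-theorem}), which tells us that any globally symmetric process admits a basis expansion $\E = \sum_\theta c_\theta \Phi_\theta$ over the complete set of bipartite diagram modes $\{\Phi_\theta\}$, where $\theta = [(a,\tilde a) \stackrel{\lambda}{\longrightarrow} (b,\tilde b)]$ and $c_\theta \in \mathbb{C}$. The plan is to impose complete positivity and trace preservation as two separate algebraic conditions on the coefficients $\{c_\theta\}$, and to show that they yield exactly the form stated in the lemma.

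For the CP part leading to Eq.~(\ref{dualCP}), it suffices to use the Hermiticity half of the CP condition, namely $J(\E)^\dagger = J(\E)$. I would first establish the auxiliary identification $J(\Phi_\theta)^\dagger = \eta_\theta J(\Phi_{\theta^*})$ for a fixed phase $\eta_\theta$, where $\theta^*$ is the dual diagram obtained by replacing every irrep label by its dual. This follows from the material in Section~\ref{ITSALG}: if $\{T^\lambda_k\}$ is an ITO basis then $\{(T^\lambda_k)^\dagger\}$ is (up to the standard basis transform) an ITO basis for the dual irrep $\lambda^*$, and the same statement lifts through the Choi map via the explicit construction in Eq.~(\ref{diagrambasis}). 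Expanding $J(\E) = J(\E)^\dagger$ in the linearly independent basis $\{J(\Phi_\theta)\}$ and matching coefficients then yields $c_{\theta^*} = \eta_\theta c_\theta^*$. Absorbing $\eta_\theta$ into the convention for $\Phi_{\theta^*}$ and pairing each orbit $\{\theta, \theta^*\}$ of the involution into a single term recovers the advertised form $\E = \sum_\theta (x_\theta \Phi_\theta + x_\theta^* \Phi_{\theta^*})$, with self-dual diagrams contributing a real $x_\theta$.

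For trace preservation, I would compute $\tr_{A'B'} \circ \Phi_\theta$ directly from the explicit form in Eq.~(\ref{diagrambasis}). Since $\tr(T^\lambda_k)$ vanishes for any non-trivial irrep and $T^0 = \I/\sqrt{d}$, only diagrams with trivial output modes $\tilde a = \tilde b = 0$ can contribute. The Clebsch--Gordan selection rules at each endpoint then constrain the input mode labels, leaving only the diagrams $\theta = [(\lambda, 0) \stackrel{\lambda}{\longrightarrow} (\lambda, 0)]$. The TP condition $\tr_{A'B'}\circ \E = \tr_{AB}$ is an equation in $\T(AB,AB)$; expanding the LHS against these surviving diagrams and the RHS $\tr_{AB}(\cdot) = \sqrt{d_A d_B}\,\langle T^0_A \otimes T^0_B,\, \cdot \rangle$ in the ITO basis yields two sets of constraints. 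The trivial component ($\lambda = 0$) fixes $x_\theta = \sqrt{d_A d_B / (d_{A'} d_{B'})}$ once one tracks the normalization factors $T^0 = \I/\sqrt{d}$ on both sides, while each non-trivial $\lambda$ component of the identity on $\B(\H_{AB})$ vanishes, forcing $x_\theta = 0$ for all $[(\lambda, 0) \stackrel{\lambda}{\longrightarrow} (\lambda, 0)]$ with $\lambda \neq 0$.

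The main obstacle I anticipate is pinning down the phase $\eta_\theta$ in the identification $J(\Phi_\theta)^\dagger = \eta_\theta J(\Phi_{\theta^*})$: depending on the choice of ITO basis (Condon--Shortley or otherwise) and the convention adopted for the Clebsch--Gordan coefficients of dual irreps, a nontrivial phase generally appears, and for self-dual diagrams one must further distinguish the cases $\eta_\theta = \pm 1$ so that $x_\theta$ inherits the correct real or imaginary character. Once these basis conventions are fixed, however, the Hermiticity and TP constraints decouple cleanly and the lemma follows directly from the two coefficient-matching arguments sketched above.
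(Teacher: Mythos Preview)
Your proposal is correct and follows essentially the same route as the paper: use Hermiticity of the Choi operator to pair $\theta$ with $\theta^*$, then impose trace preservation by noting that only diagrams with $\tilde a=\tilde b=0$ survive the output trace and that orthonormality of the input ITOs kills all but the trivial $\lambda=0$ term. The paper simply asserts $J[\Phi_\theta]^\dagger = J[\Phi_{\theta^*}]$ without your phase $\eta_\theta$, implicitly absorbing it into the definition of the dual basis; your caution about this convention is well placed but does not change the argument.
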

\begin{proof}
	Complete positivity is easiest to keep track in the Choi picture. $\E$ is CPTP map if and only if the corresponding Choi operator satisfies $J[\E]\geq 0$ and $\Tr_{A'B'}(J[\E])=\iden_{AB}$.
	In particular we also have that $J[\E]=J[\E]\hc$ and therefore since $J[\Phi_{\theta}]\hc=J[\Phi_{\theta^{*}}]$ we get that $\E$ takes the form of Equation (\ref{dualCP}). By construction of the primitive ITS the only diagrams contributing to the trace i.e for which $\Tr(\Phi_{\theta})\neq 0$ correspond to $\tilde{a}=\tilde{b}=0$ where $\theta=[(a,\tilde{a})\stackrel{\lambda}{\longrightarrow} (b,\tilde{b})]$. However for all these types of diagrams we have that $\Tr_{A'B'}(J[\Phi_{\theta}])=\sqrt{d_{A'}d_{B'}} \sum_{k}(A^{\lambda}_{k}\otimes B^{\lambda^{*}}_{k})^{T}$ where $A^{\lambda}_{k}$ and $B^{\lambda}_{k}$ are ITOs for the input system at $A$ and $B$ respectively. However due to orthonormality of ITOs in any linear combination of such diagrams for which $\sum_{\theta}x_{\theta}\Tr_{A'B'}(J[\Phi_{\theta}])=\iden_{AB}$ only the coefficient associated to $\theta=[(0,0)\stackrel{0}{\longrightarrow} (0,0)]$ is non-zero. Since the ITOs are also normalised this results in a fixed value for the associated non-zero coefficient given by $\sqrt{\frac{d_{A}d_{B}}{d_{A'}d_{B'}}}$.
	
\end{proof}
\subsection{Classes of diagrams -- local, injection and relational.}

We need to distinguish several types of irreps for what follows. For every quantum system the identity operator $\I \in\B(\H)$ is always a trivial state-mode ($\U_g(\I)= \I$ for all $g$) that is required for a state to be normalised. We denote with state mode via $a=0$. Any other trivial state mode is then $a' \cong 0$. For the process mode label we need only specify whether it is a trivial irrep or not, and therefore write $\lambda \cong 0$ and $\lambda\not \cong 0$ respectively. For general bipartite scenarios we may decompose the set of all diagrams into three natural types:

\begin{enumerate}
	\item The first class of diagrams are \emph{local diagrams} for which $\lambda\cong 0$ and so involve no symmetry-breaking interactions. These are therefore symmetric under the local symmetry action. 
	\item The second class of diagrams we distinguish are \emph{injection diagrams} for which $\lambda \not \cong 0$, and either $\tilde{a}=0$ or $\tilde{b}=0.$ These describe the transference of asymmetry degrees of freedom from one side to the other. For example, these diagrams are required to take a symmetric state at $A$ to a non-symmetric state via the injection of asymmetry from $B$.
	\item The final class of diagrams we distinguish are \emph{relational diagrams} for which $\lambda\not \cong 0 $, and $\tilde{a}$, $\tilde{b} \ne 0$. These evolve the relational asymmetry degrees of freedom between $A$ and $B$. As such, these diagrams only contribute when both the inputs at $A$ and $B$ carry symmetry-breaking degrees of freedom. In particular, they have no local affects at either $A$ or $B$.
\end{enumerate}
The first class is easy to justify, while the other two classes are best seen by example. In the next section we illustrate these features in the elementary case of processes on $2$ qubits.

\subsection{Illustrative examples -- The set of 2-qubit symmetric quantum processes}

To illustrate this we can consider the group $G=SU(2)$ and analyse the set of 2-qubit symmetric quantum processes (we assume the input and output spaces coincide).
Consider a bipartite system $AB$ consisting of two qubits each of which carry the spin-1/2 irrep of SU(2). It is easy to see that $\rm{Irrep}(A,A') = \rm{Irrep}(B,B') = \{0,1,2\}$ with multiplicities two for $\lambda=0$, three for $\lambda=1$ and one for $\lambda=2$.

\begin{figure}[h!]
	\includegraphics[width=10cm]{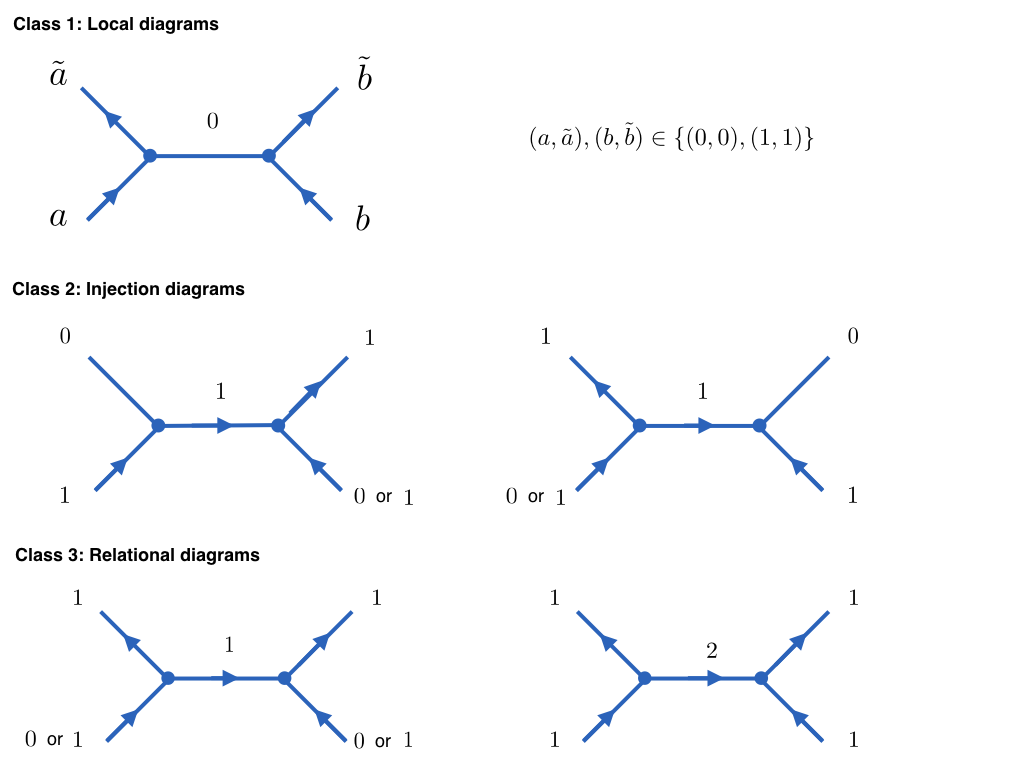}
	\caption{\textbf{The set of  $2$-qubit $SU(2)$-diagrams.} All possible diagrammatic terms allowed for a 2-qubit symmetric quantum process. The space of valid processes is 13-dimensional.}
\end{figure}\label{2qubitdiagrams}

The set of all diagrams from which all 2-qubit symmetric processes are built, are shown in Figure (\ref{2qubitdiagrams}). 

\subsubsection{Local processes on two qubits}
The simplest bipartite symmetric processes are those for which $\lambda \cong 0$ in all process diagrams. In other words, no asymmetry flows between $A$ and $B$. For these processes $\E$ is symmetric on both $A$ and $B$ separately, and form a two-parameter family of processes given by the product of partial depolarising processes $\E = \D_{p_A} \otimes \D_{p_B}$, where $p_A, p_B \in [-\frac{1}{3},1]$ and
\begin{equation}
\D_p (\rho) = p\rho + \frac{1}{4}(1-p) \I.
\end{equation}
These processes involve only two process modes locally, $id(\rho) = \rho$ and $\Phi^0 (\rho) = \sigma_x \rho \sigma_x + \sigma_y \rho \sigma_y + \sigma_z\rho \sigma_z$.

\subsubsection{Asymmetry injection processes on $AB$ and simulation of processes at $A$.}
The next simplest bipartite processes to consider are those globally symmetric quantum processes built from diagrams in the second class mentioned above. We can characterise the general set of processes that made use of asymmetry in the input state at system $B$ and transfer it to the subsystem $A$. These are significant in being the only terms of relevance in any protocol where we wish to simulate or induce a target quantum process at $A$ via asymmetry resources at $B$. Specifically, we would consider using a state $\sigma_B$ at $B$ under a bipartite symmetric process $\V_{AB}$ on $AB$ as
\begin{equation}
\E_A(\rho_A) := \tr_B [ \V(\rho_A \otimes \sigma_B)].
\end{equation}
It can be seen that only diagrams in class-1 and class-2 contribute to $\E_A$, and since class-1 diagrams are purely local, the class-2 are precisely the diagrams that contribute to the injection of asymmetry from $B$ into $A$ or vice versa. 

The general structure of these processes are
\begin{equation}
\E = \E_0 + \E_{in,A} + \E_{in,B},
\end{equation}
where $\E_0$ is built solely from class-1 diagrams, and for any $\rho$ we have $\tr_B[\E_{in, A}(\rho)]=0$, and $\tr_A [\E_{in,B}(\rho)]=0$ and so these components describe the injection into $A$ and $B$ respectively. In this we also include diagrams that maintain the local Bloch vectors (which are class-1 diagrams) since we are interested in inducing local processes.

Each of these two terms form a $3$-parameter family of maps given by
\begin{align}
\E_{in,A}&=x\Phi_{\theta_1}+y\Phi_{\theta_2}+z\Phi_{\theta_3}\\
\E_{in,B}&=x'\Phi_{\theta'_1}+y'\Phi_{\theta'_2}+z'\Phi_{\theta'_3}
\label{1E}
\end{align}
for coefficients $x,y,z,x',y',z'$ chosen such that $\E$ is a valid quantum process.

The diagrams involved are given by
\begin{align}
\theta_1&=[(1,1)\stackrel{0}{\longrightarrow}(0,0)] \nonumber \\
\theta_2&=[(0,1)\stackrel{1}{\longrightarrow}(1,0)]\nonumber\\
\theta_3&=[(1,1)\stackrel{1}{\longrightarrow}(1,0)]
\end{align}
for simulation at $A$, and by
\begin{align}
\theta'_1&=[(0,0)\stackrel{0}{\longrightarrow}(1,1)] \nonumber \\
\theta'_2&=[(1,0)\stackrel{1}{\longrightarrow}(0,1)]\nonumber\\
\theta'_3&=[(1,0)\stackrel{1}{\longrightarrow}(1,1)]
\end{align}
for simulation at $B$.

We can also describe the action of these processes via their action on a general $2$-qubit state. We use the canonical form:
\begin{equation}
\rho_{AB}=\frac{1}{4}\left(\I \otimes \I+\mathbf{a}\cdot\boldsymbol{\sigma}\otimes\iden+\iden\otimes\mathbf{b}\cdot\boldsymbol{\sigma}+\sum_{i,j}T_{ij}\sigma_{i}\otimes\sigma_{j} \right)
\label{qubit}
\end{equation}
for such a state, where local Bloch vectors $\mathbf{a}$ and $\mathbf{b}$ together with the correlation matrix $T_{ij}$ are chosen such that $\rho_{AB}$ is a positive matrix. Given the above parameterization of the general process $\E$, we have that
\begin{equation}
\E(\rho_{AB})=\frac{1}{4}\left(\iden\otimes\iden+\mathbf{\tilde{a}}\cdot\boldsymbol{\sigma}\otimes\iden+\iden\otimes\mathbf{\tilde{b}}\cdot\boldsymbol{\sigma}\right),
\end{equation}
where the local Bloch vectors of the output states are given by
\begin{align}
\mathbf{\tilde{a}}&=-\left(\frac{x\mathbf{a}}{\sqrt{3}}+y\mathbf{b}+\frac{z\mathbf{T}}{\sqrt{2}}\right) \\
\mathbf{\tilde{b}}&=-\left( \frac{x'\mathbf{b}}{\sqrt{3}}+y'\mathbf{a}+\frac{z' \mathbf{T}}{\sqrt{2}}\right).
\end{align}
Here $\mathbf{T}$ the vector with components $\mathbf{T}_{k}:=\sum_{i,j}\epsilon_{kij}T_{ij}$. The geometric significance of this can be seen for the case of initial product states $\rho_{AB}=\rho_{A}\otimes\rho_{B}$ with local Bloch vectors $\mathbf{a}$ and $\mathbf{b}$ then the correlation matrix takes the form of $T_{ij}=a_{i}b_{j}$ and therefore since $\epsilon_{kij}a_{i}b_j=(\mathbf{a}\times\mathbf{b})_{k}$ and so the vector $\mathbf{T}=\mathbf{a}\times\mathbf{b}$ is cross-product between the input Bloch vectors at each site. More generally $\mathbf{T}$ is a vector component that describes the joint asymmetry of $A$ and $B$, in contrast to $\mathbf{a}$ and $\mathbf{b}$, which are purely local terms.

We may now set $x'=y'=z'=0$, and study the set of all processes involved in simulation at $A$. This is a $3$-parameter family in $(x,y,z)$ and so we can plot the allowed region in 3-D. The set of all such quantum processes is given by the convex set bounded by the paraboloid
\begin{equation}
x^2+x\left(\frac{2-6y}{\sqrt{3}}\right)+2y+3y^2+6z^2=1
\end{equation}
and the plane $\sqrt{3}(1+y)+x=0$. 

We can make a change of coordinates for which the quartic boundary (paraboloid) reduces to one of the 17 standard forms. Let
\begin{align}
&X=(1+\sqrt{3}x-3y))/2\nonumber\\
&Y=1-3y\nonumber \\
&Z=3z/\sqrt{2}.
\end{align}

Then the region of parameters $(X,Y,Z)$ is given by the three dimensional convex set bounded by an elliptic paraboloid described by the equation,
\begin{equation}
X^2+Z^2=Y,
\end{equation}
and the plane $2+X-Y=0$. 
In Figure \ref{5} we show this parameter region while highlighting the points corresponding to distinguished extremal processes.
\begin{figure}[h!]
	\begin{tikzpicture}
	\node[anchor=south west,inner sep=0] at (0,0) {\includegraphics[width=5.5cm]{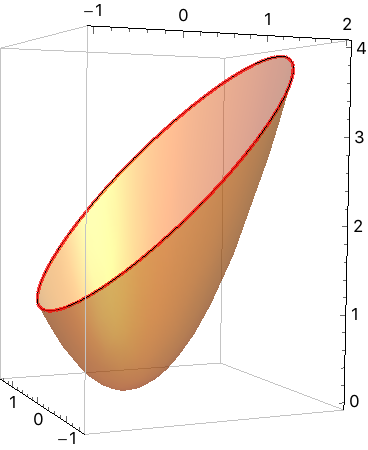}};
	\node at (0, 0.3) {$Z$};
	\node at (5.8,3.5) {$Y$};
	\node at (3.4,6.7) {$X$};
	\filldraw (1.9,0.9) circle (1pt);
	\node at (2.8,0.7)   {$\mathcal{E}_{\rm \tiny U-NOT}$};
	\node at (2, 5) {$\mathcal{E}_{\phi}$};
	\end{tikzpicture}
	\caption{\textbf{The set of induced processes on $A$ via a globally symmetric process on $AB$.} Shown is the allowed parameter region for the rotated coefficients corresponding to each class-2 diagram appearing in a general quantum process that injects asymmetry. The boundary is described by the intersection of a plane and an elliptic paraboloid.}
	\label{5}
\end{figure}
In particular, we find that the vertex of the paraboloid at $(X,Y,Z) =(0,0,0)$ corresponds to the quantum process
\begin{equation}
\E_{\mbox{\tiny U-NOT}}(\rho)=\frac{1}{4} (\I  - \frac{1}{3}\mathbf{b} \cdot \boldsymbol{\sigma}\otimes \I )
\end{equation}
which is the result of discarding the input at $A$, performing an approximate Universal NOT gate on system $B$, and then injecting this into the output system at $A$. This approximate U-NOT turns out to actually be the optimal ``spin inversion'' that is allowed by quantum mechanics \cite{PhysRevA.60.R2626}. 

The intersection of the two boundary regions is an ellipse that we parametrise by a single coordinate $\phi$. In terms of the parameters we have $Z(\phi)=\frac{3}{2}\cos{\phi}$, $Y(\phi)=\frac{3}{2}\sin{\phi}+\frac{5}{2}$ and $X(\phi)=\frac{3}{2}\sin{\phi}+\frac{1}{2}$. And gives rise to the one-parameter family of processes with
\begin{align}
\E_\phi (\rho)& =\frac{1}{4}\left(\iden\otimes\iden+\mathbf{\tilde{a}}\cdot\boldsymbol{\sigma}\otimes\iden\right) \\
\mathbf{\tilde{a}} &=\frac{1}{2}\left( (\mathbf{a} + \mathbf{b}) + (\mathbf{b}- \mathbf{a})\sin \phi  +\mathbf{T} \cos \phi \right).
\end{align}
For the case of product input states with local Bloch vectors $\mathbf{a}$ and $\mathbf{b}$ at $A$ and $B$ respectively, we see that the set of accessible points form an ellipse displaced from the origin by $\frac{1}{2}(\mathbf{a} + \mathbf{b})$ with orientation defined by $\frac{1}{2}(\mathbf{b} - \mathbf{a})$ and $ \frac{1}{2} \mathbf{a} \boldsymbol{\times} \mathbf{b}$. Also note that for $\phi = \frac{\pi}{2}$ we have that the output on $A$ is the input state on $B$. Therefore the line joining $\E_{\mbox{\tiny UNOT}}$ and $\E_{\phi=\frac{\pi}{2}}$ is the set of general depolarization processes on $B$ with output sent to $A$.

We therefore find that the set of all processes using a qubit $B$ to induce a non-symmetric process on $A$ is to a good approximation given by the convex hull of the optimal U-NOT gate and the set of processes $\E_\phi$ with $0 \le \phi \le 2 \pi$.

\subsubsection{Purely relational processes} 
There are 5 diagrams in total in the class-3 resulting in the most general quantum processes that involve these type of diagrams taking the form of:
\begin{equation}
\E=\Phi_{0}+x_4\Phi_{\theta_4}+x_5\Phi_{\theta_5}+x_6\Phi_{\theta_6}+x_{7}\Phi_{\theta_7}+x_{8}\Phi_{\theta_8}
\end{equation}
where
\begin{align}
\theta_{4}&=[(0,1)\stackrel{1}{\longrightarrow}(0,1)]\nonumber\\
\theta_5&=[(1,1)\stackrel{1}{\longrightarrow}(1,1)]\nonumber \\
\theta_6&=[(1,1)\stackrel{1}{\longrightarrow}(0,1)] \nonumber\\
\theta_7&=[(0,1)\stackrel{1}{\longrightarrow}(1,1)]\nonumber \\
\theta_8&=[(1,1)\stackrel{2}{\longrightarrow}(1,1)].
\end{align}

Any such quantum process which contains only class-3 diagrams has the property that the output state $\E(\rho)$ always has maximally mixed marginals for all initial states. More precisely, for any $\rho\in\B(\h_A\otimes\h_B)$ we have that $\Tr_{A}(\E(\rho))=\Tr_{B}(\E(\rho))=\frac{1}{2}\I$. This implies that we must have
\begin{equation}
\E(\rho_{AB})=\frac{1}{4}\left(\I\otimes \I+\sum_{i,j}R_{ij}\sigma_{i}\otimes\sigma_{j}\right)
\end{equation}
for some correlation matrix $R_{ij}$ that depends on both $\rho_{AB}$ and the particular relational process. We can make more precise the contribution of each diagram in class-3 to the tensor $R_{ij}$.
\begin{equation}
\begin{split}
&\theta_4: R^{\theta_4}=-\frac{\iden}{4}\\
&\theta_5: R^{\theta_5}=\frac{1}{8}(T^{T}+\Tr(T)\iden)\\
&\theta_6: R_{ij}^{\theta_6}=i\frac{\sqrt{2}}{2}(\epsilon_{ijk}a_{k})\\
&\theta_7: R_{ij}^{\theta_7}=i\frac{\sqrt{2}}{2}(-\epsilon_{ijk}b_{k} )\\
&\theta_8: R^{\theta_8}=\frac{1}{8}(T^{T}-\frac{2}{3}T+\Tr(T)\iden),
\end{split}
\end{equation}
where we have denoted by $R^{\theta_m}$ to be the correlation matrix of the output under applying the superoperator $\Phi_{\theta_{m}}$ to $\rho_{AB}$. In other words, $\Phi_{\theta_{m}}(\rho_{AB})=\sum_{i,j}R^{\theta_{m}}_{ij}\sigma_{i}\otimes\sigma_{j}$. 

To explore such processes, we restrict to those that are invariant under swapping $A$ and $B$. For this, the most general form is given by
\begin{equation}
\E=\Phi_{0}+x\Phi_{\theta_4}+y\Phi_{\theta_5}+z\Phi_{\theta_8}
\end{equation}
for real parameters $x,y,z$. Imposing that $\E$ is a valid quantum process implies the 3-d convex region of parameters with boundary surfaces given by the following quartics:
\begin{align*}
&\left(9x+3y+5z-3\right)^2=(5z+21y-12)^2-108(1-2y)^2\\
&\left(6y+3x\right)^2=6x+3+20z\\
&y^2=\left(\frac{1-x}{2}\right)^2 \  \text{for}  \   0<x\leq 1\\
&y^2=\frac{ (x+5/3)^2}{4}-\frac{4}{9}  \  \text{for}  \  -1/3\leq x\leq0
\end{align*}

In other words the boundary is the intersection of an elliptic cone, a parabolic cylinder, two intersecting planes and a hyperbolic cylinder respectively.
\begin{figure}[h!]
	\begin{tikzpicture}
	\node[anchor=south west,inner sep=0] at (0,0) {\includegraphics[width=7cm]{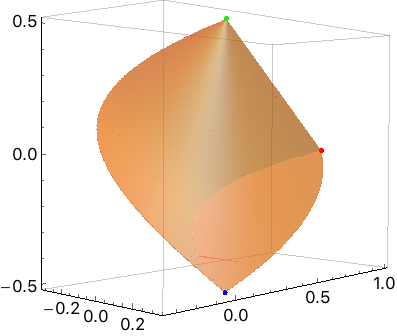}};
	\node at (4,6) {$\mathcal{E}_{2}$};
	\node at (3.5,0.9) {$\mathcal{E}_{1}$};
	\node at (6.3,3.5) {$\mathcal{E}_{\rm{singlet}}$};
	\node at (5, 0.3) {$x$};
	\node at (-0.1,3.2) {$y$};
	\node at (1.3, 0.1) {$z$};
	\end{tikzpicture}
	\caption{\textbf{$SU(2)$-symmetric, relational processes on $2$-qubits.} The allowed parameter region for the relational processes including only class-3 diagrams. The right-most red point is the singlet process $\E_{singlet}$, blue point is $\E_1$ and green point is $\E_2$ }
	\label{6}
\end{figure}

There are distinguished simple processes that correspond to points on the surface boundary. For instance the point $(1,0,0)$ is the unique intersection of the two intersecting planes, the parabolic cylinder, the elliptic cone. It corresponds to the singlet preparation process $\E_{singlet}(\rho)=\ket{\psi^{-}}\bra{\psi^{-}}$ for any $2$-qubit state $\rho$. 

In addition there are two points that lie at the intersection of the elliptic cone the parabolic cylinder and the hyperbolic cylinder and each of them are in one of the two planes and correspond to imposing that $x=0$.  The characterising feature of the processes for which $x=0$ is that they do not displace the maximally mixed state and are therefore unital processes. Corresponding to the two distinguished points we get the following quantum processes
\begin{align}
\E_{1}&=\Phi_{0}-\frac{1}{2}\Phi_{\theta_5}+\frac{3}{10}\Phi_{\theta_8}\\
\E_{2}&=\Phi_{0}+\frac{1}{2}\Phi_{\theta_5}+\frac{3}{10}\Phi_{\theta_8}
\end{align}
where $\E_1$ corresponds to the point lying in the plane $y=\frac{1}{2}(x-1)$ and all the other three surfaces and $\E_2$ corresponds to the point lying in the plane $y=\frac{1}{2}(x+1)$ and the non-degenerate three surfaces.

Both are only sensitive to the $T_{ij}$ components of the input state $\rho$. Without loss of generality, we look at how they act on states of the form: $\rho_{AB}=\frac{1}{4}\left(\I+\sum_{i,j}T_{ij}\sigma_{i}\otimes\sigma_{j}\right)$. Moreover, up to local unitaries any such state can be brought to a canonical form $\rho_{AB}=\frac{1}{4}\left(\I+\sum_{i}t_{i}\sigma_{i}\otimes\sigma_{i}\right)$ specified by a single vector $(t_1,t_2,t_3)$. The range of these parameters lie in a tetrahedron whose vertices correspond to the Bell states.

The action of these processes on the Bell states $\phi^\pm, \psi^\pm$ is given by:
\begin{align}
\E_1 (\phi^{\pm}) &=\frac{3}{20}\iden+\frac{3}{10}\phi^{\pm}+\frac{1}{10}\psi^{-}  \nonumber \\
\E_1 (\psi^+) &=\frac{3}{20}\iden+\frac{3}{10}\psi^{+}+\frac{1}{10}\psi^{-} \nonumber \\
\E_1 (\psi^-) &= \frac{1}{4}\iden
\end{align}
and
\begin{align}
\E_2 (\phi^{\pm}) &=\frac{2}{5}\iden-\frac{1}{5}\phi^{\pm} -\frac{2}{5}\psi^{-} \nonumber \\
\E_2 (\psi^+) &=\frac{2}{5}\iden-\frac{1}{5}\psi^{+} -\frac{2}{5}\psi^{-} \nonumber \\
\E_2 (\psi^-) &= \psi^{-}
\end{align}

Since the Bell states are extremal the convex hull of these images give the action in the more general case. The image of the tetrahedron of state is graphically displayed in Figure \ref{tetrahedron}.
\begin{figure}[h!]
	\begin{tikzpicture}
	\node[anchor=south west,inner sep=0] at (0,0) {\includegraphics[width=7cm]{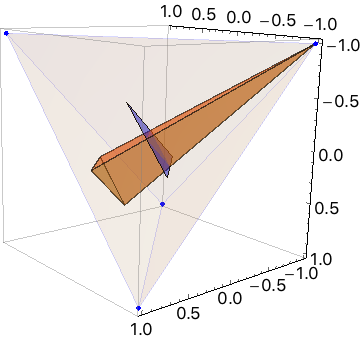}};
	\node at (6.0 ,5.4) {$\psi^{-}$};
	\node at (7.1, 3.5) {$t_1$};
	\node at (5, 0.5) {$t_2$};
	\node at (5, 6.8) {$t_3$};
	\node at (0,6.1) {$\phi^{-}$};  
	\node at ( 3,0.9) {$\psi^{+}$};
	\node at (3.6,2.7) {$\phi^{+}$};  
	\end{tikzpicture}
	\caption{\textbf{T-state transformations}. The set of $2$-qubit states with maximally mixed marginals modulo local choice of bases (or ``T-states'') have a tetrahedral state space with the four Bell states at the extremal points.  Under the extremal processes $\E_1$ and $\E_2$ the set of T-states is mapped into the (blue) triangle and inner (brown) tetrahedron respectively. }
	\label{tetrahedron}
\end{figure}
The preceding analysis can be used on more general bipartite quantum systems, where it allows a compact book-keeping for the analysis of quantum processes and simplifies the analysis. 

\subsubsection{Symmetric unitary processes on two qubits}

For the same of completeness, we briefly describe one more kind of symmetric process -- the $SU(2)$-symmetric unitaries on two qubits. Since $V = \exp [i tH]$ for some Hamiltonian $H$, the problem reduces to computing the allowed Hamiltonians. The symmetry of $V$ implies that $\U_g(H) =H$ for all $g\in G$ and so $H$ is an invariant operator under the group action. The space of invariant hermitian observables is spanned by $\I$ and $\sigma_x \otimes \sigma_x + \sigma_y \otimes \sigma_y + \sigma_z \otimes \sigma_z$, and therefore the symmetric unitaries on the system is a two-parameter family given by
\begin{equation}
\exp [ i (s \I + t (\sigma_x \otimes \sigma_x + \sigma_y \otimes \sigma_y + \sigma_z \otimes \sigma_z))].
\end{equation}
The first term is a phase term and so $V(t) = e^{it (\sigma_x \otimes \sigma_x + \sigma_y \otimes \sigma_y + \sigma_z \otimes \sigma_z)}$ is the only non-trivial unitary interaction present.

The quantum process $\E(\rho) = V(t) \rho V(t)^\dagger$  has a mode decomposition as shown in Figure \ref{2unitary}.
\begin{figure}[h!]
	\includegraphics[width=8cm]{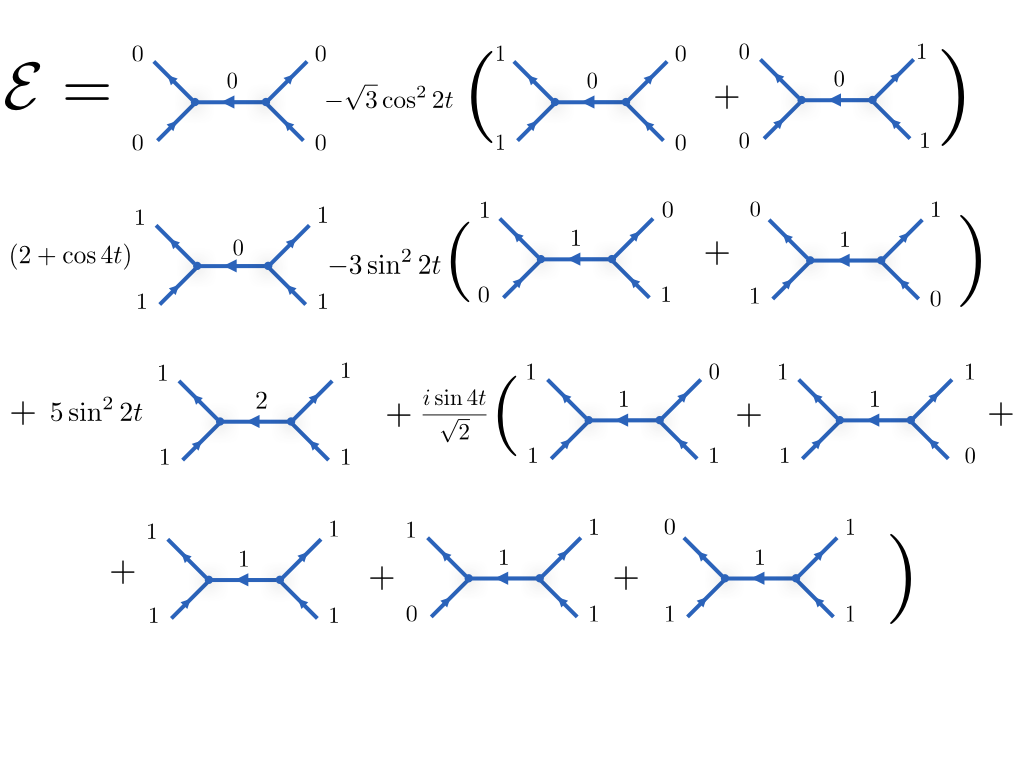}
	\caption{The decomposition of the symmetric unitary $V(t) = e^{i t (\sigma_x \otimes \sigma_x + \sigma_y \otimes \sigma_y + \sigma_z \otimes \sigma_z)}$ on two qubits.}
	\label{2unitary}
\end{figure}
Note that because $V$ is symmetric under swapping $A$ and $B$, we have this additional symmetry reflected in the diagram contributions. The expansion in terms of process modes shows that the global unitary has a non-trivial structure under the symmetry action, which is perhaps not surprising since the process is perfectly reversible and keeps all symmetry properties in the state constant.

\section{Gauging processes}
\label{S-gauge}
\subsection{From global to local symmetries}
In the following we illustrate the gauging procedure for 2-symmetric quantum processes by looking at the symmetric bipartite superoperator associated with diagram $\theta$ and link $l$ connecting local system $A_{x}$ with $A_{y}$. Our main Theorem \ref{main} together with linearity of the gauging map ensures that the same procedure works for any 2-symmetric quantum process. More precisely we denote the gauging map for superoperators by:
\begin{equation}
\hspace{-0.5cm}Gauge:\T(A_x A_y, A_x'A_{y}')\rightarrow \T(A_{x},A_{x}')\otimes \T(\h_{l},\h_{l})\otimes\mathcal{S}(A_{y},A_{y}')\nonumber
\end{equation}
where as in the main text $\h_{l}$ is the quantum reference frame corresponding to link $l$ and encodes the group elements. The action of the gauging map is such that it promotes a global symmetry to a local symmetry. Specifically any $\Phi_{\theta}\in \T(A_{x}A_{y},A_{x}'A_{y}')$ which is symmetric under the global representation (i.e  $\uu_{g}[\Phi_{\theta}]=\Phi_{\theta}$ for all $g\in G$) will be mapped to $Gauge(\Phi_{\theta})$ which is symmetric under the local representation. This means that $\uu_{g_{x}}\otimes\uu_{(g_{x},g_{y})}\otimes\uu_{g_{y}}[Gauge(\Phi_{\theta})]=Gauge(\Phi_{\theta})$ for all $g_{x},g_{y}\in G$. 

We provide an explicit gauging map that satisfies these requirements. First we explicitly add in the background degrees of freedom that are encoded in the quantum reference frame initialised with a gauge coupling that is symmetric under the global symmetry. The resulting process on the local systems and the link reference frame will be invariant under the global representation. Then we promote the global symmetry to a local symmetry by averaging over the local independent degrees of freedom and therefore removing all the relative alignments. 

Recall that the globally symmetric superoperator given by diagram $\theta$ decomposes into local ITS as: $\Phi_{(l,\theta)}=\sum_{j}\Phi^{\lambda}_{x,j}\otimes \Phi^{\lambda^{*}}_{y,j}$. We first map $\Phi_{(l,\theta)}\longrightarrow \Phi_{(l,\theta)}'=\sum_{j}\Phi^{\lambda}_{x,j}\otimes\mathcal{A}^{j,j}_{(l,\lambda)}\otimes\Phi^{\lambda^{*}}_{y,j}$ where $\mathcal{A}^{j,j}_{(l,\lambda)}$ are process gauge couplings such that $\uu_{(g,g)}[\mathcal{A}^{j,j}_{(l,\lambda)}]=\mathcal{A}^{j,j}_{(l,\lambda)}$ for all $g\in G$. Clearly $\Phi_{(l,\theta)}'$ is invariant under the global representation: $\uu_{g}\otimes\uu_{(g,g)}\otimes\uu_{g}[\Phi_{\theta}']=\Phi_{\theta}'$. We have that:
\begin{align}
&Gauge(\Phi_{(l,\theta)}):=\int_{G\times G}\uu_{g_{x}}\otimes\uu_{(g_{x},g_{y})}\otimes\uu_{g_{y}}[\Phi_{(l,\theta)}']d\,g_{x}d\,g_{y}\nonumber \\
&=\sum_{j}\int_{G\times G} \uu_{g_{x}}[\Phi^{\lambda}_{x,j}]\otimes\uu_{(g_x,g_y)}[\mathcal{A}^{j,j}_{(l,\lambda)}]\otimes \uu_{g_{y}}[\Phi^{\lambda^{*}}_{y,j}]d\,g_xd\,g_y\nonumber
\end{align}
Because the Haar measure for a compact group is both left and right invariant it is straightforward to check that the operation constructed above is symmetric under the local de-synchronised group action. Therefore we have that $\uu_{g_{x}}\otimes\uu_{(g_{x},g_{y})}\otimes\uu_{g_{y}}[Gauge(\Phi_{(l,\theta)})]=Gauge(\Phi_{(l,\theta)})$ for all $g_{x},g_{y}\in G$. However we know how the local ITS and the process gauge couplings transform under group actions on their respective systems and therefore can obtain a compact form for the gauging map. In particular we have that:
\begin{align}
&\uu_{g_x}[\Phi^{\lambda}_{x,j}]=\sum_{m} v^{\lambda}_{jm}(g_{x})\Phi^{\lambda}_{x,m}\nonumber \\
&\uu_{(g_x,g_y)}[\mathcal{A}^{j,j}_{(l,\lambda)}]=\sum_{m',n'}v^{\lambda}_{m'j}(g_{x}^{-1})v_{j,n'}^{\lambda}(g_{y})\mathcal{A}^{m'n'}_{(l,\lambda)}\nonumber \\
&\uu_{g_y}[\Phi^{\lambda^{*}}_{y,j}]=\sum_{n} (v^{\lambda}_{jn}(g_{y}))^{*}\Phi^{\lambda}_{y,n}.\nonumber 
\end{align}
We can combine all of these to obtain the form of the gauging map. The matrix coefficients satisfy $v^{\lambda}_{jm}(g_{x})v^{\lambda}_{m'j}(g_{x}^{-1})=\delta_{mm'}$ and also from unitarity we get that $v^{\lambda}_{jn'}(g_{y})(v^{\lambda}_{jn}(g_{y}))^{*}=v^{\lambda}_{jn'}(g_{y})(v^{\lambda}(g_{y})\hc)_{nj}=v^{\lambda}_{jn'}(g_{y})(v^{\lambda}(g_{y}^{-1}))_{nj}=\delta_{nn'}$. Therefore the action of the gauging map is given by:
\begin{equation}
Gauge(\Phi_{(l,\theta)})=\sum_{m,n} \Phi^{\lambda}_{x,m}\otimes \mathcal{A}^{mn}_{(l,\lambda)}\otimes\Phi^{\lambda^{*}}_{y,n}.
\end{equation}
\subsubsection{Example: U(1) symmetry}
We consider the situation when the symmetry group is $U(1)$ and suppose that the local ITS on system $A_{x}$ and $A_{y}$ are given by $\Phi^{\lambda}_{x}$ and $\Phi^{\lambda}_{y}$. Under the group action for each element $\phi\in U(1)$ they transform according to $\uu_{\phi}[\Phi^{\lambda}_{x}]=e^{i\lambda\phi}\Phi^{\lambda}_{x}$. For abelian groups all irreducible representations are one-dimensional and therefore each $\theta$-diagram corresponding to $\lambda$-irrep symmetry breaking carrier takes the form of: $\Phi_{(l,\theta)}=\Phi^{\lambda}_{x}\otimes \Phi^{\lambda^{*}}_{y}$. This is invariant under the global group action. We add in the quantum reference frame that encodes group elements -- in this case angles. Consider $\h_{l}$ to be the Hilbert space of an infinite ladder system with equally spaced energy eigenstates labelled by $\{\ket{m}\}_{m\in\mathbb{Z}}$. The coherent states on the circle are defined in $\h_{l}$ for each element $\phi\in U(1)$ as:
\begin{equation}
\ket{\phi}=\sum_{m}e^{-im\phi}\ket{m}.
\end{equation}
They form an orthonormal set of eigenvectors for the self-adjoint operator $\hat{\phi}=\int \phi\ket{\phi}\bra{\phi}d\,\phi$ the canonical conjugate of angular momentum in the z-direction. Therefore these states perfectly encode all group elements of $U(1)$. The group action on these states will be given by:
\begin{equation}
\U_{(g_x,g_y)}(\ket{\phi}\bra{\phi})=\ket{g_x+\phi-g_{y}}\bra{g_{x}+\phi-g_{y}}
\end{equation}
for all $g_x,g_y\in U(1)$. In particular $\ket{0}\bra{0}$ is invariant under the global representation i.e whenever $g_{x}=g_{y}$. Therefore adding in the globally symmetric background degrees of freedom:
$ \Phi_{(l,\theta)}\longrightarrow \Phi^{\lambda}_{x}\otimes \Phi^{\lambda^{*}}_{y}\otimes \ket{0}\bra{0}$ and therefore the gauging procedure maps:
\begin{equation}
\hspace{-0.5cm}Gauge(\Phi_{(l,\theta)})=\int e^{i\lambda(g_{x}-g_{y})}\Phi^{\lambda}_{x}\otimes\Phi^{\lambda^{*}}_{y}\otimes\ket{g_{x}-g_{y}}\bra{g_{x}-g_{y}}d\,g_{x}d\,g_{y}.
\end{equation}
However invariance of Haar measure means that:
\begin{equation}
\int e^{i\lambda(g_x-g_{y})}\ket{g_{x}-g_{y}}\bra{g_{x}-g_{y}}d\,g_xd\,g_y=\int e^{i\lambda\phi}\ket{\phi}\bra{\phi}d\,\phi.
\end{equation}
Using the defining decomposition of $\ket{\phi}$ in terms of the energy eigenstates of the ladder then we obtain the gauged operation that has a local symmetry:
\begin{equation}
\hspace{-0.5cm}\Phi_{(l,\theta)}\longrightarrow Gauge(\Phi_{(l,\theta)})=\Phi^{\lambda}_{x}\otimes\Phi^{\lambda^{*}}_{y}\otimes\sum_{m\in \mathbb{Z}}\ket{m}\bra{m+\lambda}.
\end{equation}
Remark: In the above we assume that the gauging processes are given by states on the reference frame system $\h_{l}$. This is a particular simplified scenario to illustrate the gauging procedure.
\subsection{Fixing a gauge: connections with pre and post selection with a group element}
\label{gaugefixing}
In here we demonstrate a particular way in which the gauge fixing for general quantum processes is achieved via pre and post selection with group elements. In doing so we also underline the physical interpretation of the polar decomposition and the role of the process orbit in providing the relative alignment between subsystem and environment.

Suppose that the quantum process $\tilde{\E}\in \T(\h_{A_{x}}\otimes\h_{A_{y}}\otimes\h_{l})$ that acts on systems $A_{x}$, $A_{y}$ and the reference frame $\h_{l}$ situated on the link between $x$ and $y$ is invariant under the local group action. This means that $\uu_{(g_{x},g_{y})}[\tilde{\E}]=\tilde{\E}$ for all group elements $g_{x}, g_{y}\in G$. Suppose that we post select with group element $h_2\in G$ and pre select with $h_1\in G$ then the resulting operation will be given by:
\begin{align}
\tilde{\E}_{h_1,h_2}(\tau):=( id \otimes \Pi_{h_2}) \circ \tilde{\E} \circ (id \otimes \Pi_{h_1}),
\end{align}
where $\Pi_h (\sigma) = |h\>\<h| \sigma |h\>\<h|$, is the projection onto the pure state $|h\>$. In general, the process $\tilde{\E}_{h_1,h_2}$ will not remain locally invariant since the measurements with group elements will break that symmetry. One can check that $\tilde{\E}_{h_1,h_2} $ now transforms under the local group action according to
\begin{align}
\U_{(g_{x},g_{y})}(\tilde{\E}_{h_1,h_2})=\tilde{\E}_{g_{x}h_{1}g_{y}^{-1},g_x h_2 g_{y}^{-1}}.
\label{localgroupT}
\end{align}
This means that the process resulting after pre and post-selection with a group elements $h_1$ and $h_2$ is transformed under the de-synchronised local group action with elements $(g_x,g_y)$ into the process corresponding to pre and post selection with group elements $g_x h_1 g_{y}^{-1}$ and $g_x h_2 g_y^{-1}$ respectively. We show this result by checking directly:

	\begin{align}
	\uu_{(g_{x},g_{y})}[\tilde{\E}_{h_1,h_2}]&=\mathcal{U}_{(g_x,g_y)}(\ket{h_2}\bra{h_2})\,\U_{g_x}\otimes\U_{g_{y}}\left(\bra{h_2}\tilde{\E}(\ket{h_1}\bra{h_1}\bra{h_1}\U_{(g_x,g_y)}\hc(\tau)\ket{h_1})\ket{h_2}\right)\\ \nonumber
	&=\ket{g_{x}h_2g_{y}^{-1}}\bra{g_{x}h_2g_{y}^{-1}}\U_{g_{x}}\otimes \U_{g_{y}}\left(\bra{h_2}\tilde{\E}\left(\ket{h_1}\bra{h_1} \U_{g_{x}}\hc\otimes \U_{g_{y}}\hc(\bra{g_{x}h_1g_{y}^{-1}}\tau\ket{g_{x}h_1g_{y}^{-1}})\right)\ket{h_2} \right) \\ \nonumber
	&=\ket{g_{x}h_2g_{y}^{-1}}\bra{g_{x}h_2g_{y}^{-1}}\U_{g_{x}}\otimes \U_{g_{y}}\left(\bra{h_2}\tilde{\E}\left(\U_{(g_x, g_{y})}\hc\left(\ket{g_{x}h_1g_{y}^{-1}}\bra{g_{x}h_1g_{y}^{-1}}\otimes\bra{g_{x}h_1g_{y}^{-1}}\tau\ket{g_{x}h_1g_{y}^{-1}} \right)\right)
	\ket{h_2}\right)\\ \nonumber
	&=\ket{g_{x}h_2g_{y}^{-1}}\bra{g_{x}h_2g_{y}^{-1}}\otimes \left(\bra{g_x h_2g_{y}^{-1}}\uu_{(g_{x},g_{y})}[\tilde{\E}]\left(\ket{g_{x}h_1g_{y}^{-1}}\bra{g_{x}h_1g_{y}^{-1}}\otimes\bra{g_{x}h_1g_{y}^{-1}}\tau\ket{g_{x}h_1g_{y}^{-1}} \right)\ket{g_{x}h_2g_{y}^{-1}}\right)\\ \nonumber
	&=\ket{g_{x}h_2g_{y}^{-1}}\bra{g_{x}h_2g_{y}^{-1}}\otimes \left(\bra{g_x h_2g_{y}^{-1}}\tilde{\E}\left(\ket{g_{x}h_1g_{y}^{-1}}\bra{g_{x}h_1g_{y}^{-1}}\otimes\bra{g_{x}h_1g_{y}^{-1}}\tau\ket{g_{x}h_1g_{y}^{-1}} \right)\ket{g_{x}h_2g_{y}^{-1}}\right)\\ \nonumber
	&=\tilde{\E}_{g_{x}h_1g_{y}^{-1},g_{x}h_2g_{y}^{-1}}(\tau).
	\end{align}

In the previous calculation we have only used that $\tilde{\E}$ is invariant under the local group action $\U_{(g_{x},g_{y})}$.

To establish how much the local symmetry has been broken by the pre and post selection we look at the set of all group elements $(g_{x},g_{y})$ under which $\tilde{\E}_{h_1,h_2}$ remains an invariant process. First note that the the reference frame perfectly encodes group elements such that the pure states $\ket{g}$ are orthonormal (and hence perfectly distinguishable). Therefore using the above result we have that $\uu_{(g_x,g_y)}[\tilde{\E}_{h_1,h_2}]=\tilde{\E}_{h_1,h_2}$ holds if and only if $g_x h_1 g_{y}^{-1}=h_1$ and $g_x h_2 g_y^{-1}=h_2$. Equivalently $\uu_{(h_1g_{y} h_1^{-1},g_{y}})[\tilde{\E}_{h_1,h_2}]=\tilde{\E}_{h_1,h_2}$ where $g_{y} h_2 h_1^{-1}=h_2 h_1^{-1} g_{y}$. 

In particular this is clearly satisfied whenever $h_1=h_2=h$ for some $h\in G$ . In this case the pre and post selection with group element $h$ breaks the local de-synchronised invariance resulting in a process that is invariant under the global action $\uu_{g}':=\uu_{(hgh^{-1},g)}=\uu_{(h,e)}\circ\uu_{(g,g)}\circ\uu_{(h^{-1},e)}$. Therefore we can view $\uu_{(h^{-1},e)}$ as a local change of basis that aligns system $A_x$ and $A_y$. Perfect alignment means that the process on the bipartite system is globally symmetric i.e invariant under the global representation. In other words the reference frame on the link encodes the group element $h$ required to align the two systems.

\section{Irreversibility in symmetry-breaking degrees of freedom}
\label{S-Irreversibility}
A symmetry principle in a quantum system need not correspond to a conservation law \cite{Noether}. In the case of symmetric unitary dynamics we do have that conservation of charges (corresponding to hermitian observables) hold, and that any symmetry-breaking degrees of freedom in quantum states (which may include a property not described by a hermitian observable) is also conserved.

However, more generally there is a disconnect between symmetry principles and conservation laws \cite{Noether}. For general symmetric quantum processes the expectation values of the generators of the symmetry can both increase and decrease, and a proper account must be supplemented with information-theoretic measures. In such cases quantum incompatibility \cite{heinosaari2016invitation} is expected to give rise to irreversibility in the symmetry-breaking degrees of freedom of a quantum system. For example, a quantum system that acts as a clock functions to break time-translation symmetry. However its use in say quantum thermodynamics may result in a back-action that distorts its subsequent ability to function as a clock. 

One might generally expect globally symmetric quantum processes $\rho_A\otimes \sigma_B \mapsto \E_{AB} (\rho_A \otimes \sigma_B) $ such that $\sigma_B  \mapsto \sigma'_B=\E_B(\sigma_B):= \tr_A  \left [ \E_{AB} (\rho_A \otimes \sigma_B) \right ]$, such that the state $\sigma'_B$ breaks the symmetry in a much weaker form than the original state $\sigma_B$ and is therefore less useful as a result. This constitutes an irreversibility under the symmetry constraint, however it could arise due to the particular interactions used -- might it be possible to use the state more wisely and not suffer such irreversibility?

There is a range of notions related to reversibility and irreversibility. In the simplest case an isolated symmetric, unitary evolution preserves all symmetry-breaking properties and conserves charges. There is also the notion of a \emph{catalytic} use of a symmetry-breaking resource $\sigma_B$ in which a quantum process is performed $\rho_A \otimes \sigma_B \mapsto \E_{AB} (\rho_A \otimes \sigma_B) = \E_A (\rho_A) \otimes \sigma_B$. In general theories of quantum resources (for example entanglement theory) such use of catalysts can have non-trivial effects and enlarge the set of accessible transformations on $A$.

However the above notions do not exhaust the possibilities. In \cite{aberg} a phenomenon called catalytic coherence was discovered by Johan {\AA}berg in which quantum coherence resources can be re-used in such a way that the state of the resource constantly changes, however its ability as a resource for inducing processes on multiple independent systems remains unchanged. The core setting involves a $U(1)$ symmetry constraint associated to a `number' operator $N$, with integer eigenvalues $n\in \mathbb{Z}$, on quantum systems \cite{LK}. A coherent state $\sigma_B$ on a `ladder' system, with Hilbert space $\H_{\rm ladder}$ spanned by the eigenstates $\{|n\>\}_{n \in \mathbb{Z}}$ of $N$, is present and used to induce some target map $\E$ on a system $A$ through the interaction
\begin{equation}
\rho_A \rightarrow \tilde{\E}(\rho_A) := \tr_{B}V( \rho_A \otimes \sigma_B) V^\dagger.
\end{equation}
Here $\tilde{\E}$ is an approximation of some target map $\E$ on the primary system. $V$ is a bipartite unitary that respects the global $U(1)$ symmetry constraint and takes the general form
\begin{equation}
V(U)=\sum U_{mn}\ket{\phi_{m}}\bra{\phi_{n}}\otimes \Delta^{n-m}
\end{equation}
where $\{\ket{\phi_{m}}\}_{m=1}^{\rm{dim}(A)}$ forms an orthonormal basis for system $A$ such that it transforms under the U(1) action as $U(\theta)\ket{\phi_{m}}=e^{im\phi}\ket{\phi_{m}}$, the operators $\Delta^{n-m}$ are displacement operators on the ladder system $\Delta^{n}=\sum_{j\in\mathbb{Z}} \ket{j+n}\bra{j}$ and $U_{mn}$ denotes the matrix entries of some arbitrary unitary with dimension $\rm{dim}(A)$. More generally, we will refer to any protocol that implements $\tilde{\E}$ of the form 
\begin{equation}
\tilde{\E}(\rho) =  \sum_{n,i}  \tr(\Delta^n \sigma ) K_{n,i} \rho K_{n,i}^\dagger
\end{equation}
for operators $\{K_{n,i}\}$ on $A$,  as simply \emph{a catalytic coherence protocol}, without any further qualifications.

The state on $B$ evolves non-trivially under the above $V$ as $\sigma_B \rightarrow \sigma'_B$, however if  one reuses $B$ on another quantum system under precisely the same protocol $V$ it was found that its ability to lift the symmetry constraint is undiminished (see \cite{aberg} for more details). In what follows we shall use the term `repeatability' to cover the above three distinct concepts, and which will be defined in the next subsection.

With these subtleties in mind we now study, in general terms, the use of symmetry-breaking resources and when they may be repeatedly used without degrading. In doing so we  make use of the process mode framework, and demonstrate its utility for the analysis of such questions.

\subsection{The use of symmetry-breaking resources and local simulation of quantum processes}
\label{S-protocol}
We assume the simulation of a quantum process $\E:\B(\H_A) \rightarrow \B(\H_{A'})$ locally on $A$ using a state $\sigma_B$ on the environment $B$, takes the form 
\begin{equation}
\E (\rho) = \tr_{B'}\V_{AB}(\rho_A \otimes \sigma_B),
\end{equation}
where $\V:\B(\H_A \otimes \H_B) \rightarrow \B(\H_{A'} \otimes \H_{B'})$ is symmetric under the group action of $G$, and which for simplicity can be assumed to be an isometry. We are interested in the use of some resource $\sigma_B$ to induce $\E$, independent of the state $\rho$. 

More generally we might wish to simulate a set of local processes $\{\E_k\}$ on the system $A$, which for simplicity we assume is a discrete set labelled by $k$. The general task is to devise a protocol that tries to achieve any one of these target maps when presented with an arbitrary quantum system $B$ that is prepared in an unknown state $\sigma_B$. Abstractly, given $(\E_k, B)$, a protocol must specify a symmetric process $\V_k:\B(\H_A \otimes \H_B) \rightarrow \B(\H_{A'} \otimes \H_{B'})$ such that the approximate process induced using $\sigma_B$, denoted $\tilde{\E}_k = \P(k,B, \sigma_B)$, is given by 
\begin{equation}
\tilde{\E}_k(\rho) =\tr_{B'}\V_k(\rho_A \otimes \sigma_{B'}).
\end{equation}
Generally, the performance of the protocol $\P$ is then determined by how close $\tilde{\E}_k$ is to the target process $\E_k$ (using e.g. the diamond norm $||\tilde{\E}_k - \E_k||_\diamond$) for given $k$ and $\sigma_B$. It is also natural to assume a general protocol has a perfect classical limit, in the sense that as the environment $B$ becomes sufficiently large, and we are provided a state $\sigma_B$ that encodes group elements (asymptotically) perfectly, then the protocol $\P$ provides $\P(k,B,\sigma_B) = \E_k$ exactly for all $k$.

Since we are interested in studying irreversibility in the use of $B$, we define the back-action on the environment, given by $\R_k (\sigma_A) = \tr_{A'} \V_k (\rho_A \otimes \sigma_B)$. Given these details we state a precise a notion of repeatability as follows.

\begin{definition}
	Let $\E:\B(\H_{A_1}) \rightarrow \B(\H_{A'_1})$ be a quantum process on $A_1$, and let $B$ be any other quantum system. We say that a protocol $\mathcal{P}$ for $\E$ is 2-repeatable if given any system $A_2$ isomorphic to $A_1$ it specifies symmetric processes $\mathcal{V}_1:\B(\h_{A_1}\otimes\h_{B})\rightarrow \B(\h_{A'_1}\otimes\h_{B'})$ and $\mathcal{V}_{2}:\B(\h_{A_2}\otimes\h_{B'})\rightarrow \B(\h_{A_2'}\otimes\h_{B''})$  such that for all states $\sigma_B \in \B(\h_{B})$
	\begin{align}
	&\Tr_{A_1,B''}(\mathcal{V}(\rho_{1}\otimes\rho_2\otimes\sigma_B))=\tilde{\E}(\rho_2)\\ \nonumber
	&\Tr_{A_2,B''}(\mathcal{V}(\rho_{1}\otimes\rho_2\otimes\sigma_B))=\tilde{\E}(\rho_1).
	\end{align}
	with $\mathcal{V}:=(\iden_{A}\otimes\V_2)\circ(\V_1\otimes\iden_{A_2})$, and $\tilde{\E}$ being the approximation to $\E$ using $\sigma_B$.
\end{definition}

An elementary aspect of the repeatable use of some symmetry-breaking resource $\sigma_B$ to induce a map $\E$ on a system $A$ is that a subsequent use will also result in exactly the same quantum process. One can easily extend the above definition to $n$-repeatability where the same process $\E$ is induced on $n$ identical systems using the same initial system $B$ and we expand upon this in the following definition.
\begin{definition}
	Let $A_1,... A_{n}$ be $n$ isomorphic systems $A_{i}\cong A$ and $\E:\B(\h_{A})\longrightarrow \B(\h_{A'})$ a target process. We say that the protocol $\mathcal{P}$ for $\E$ using system $B$ is $n$-repeatable if it specifies a circuit of symmetric operations $\mathcal{W}=\mathcal{V}_n\circ\mathcal{V}_{n-1}\circ...\circ \mathcal{V}_{1}$ with $\mathcal{V}_{i}:\B(\h_{A_{i}}\otimes\h_{B_{i-1}})\longrightarrow \B(\h_{A'_i}\otimes \h_{B_{i}})$ for all $i$ initially acting on $B_0=B$ such that for any $\sigma\in\h_{B}$ and any $k$:
	\begin{equation}
	\Tr_{\smallsetminus k,B_{n}}(\mathcal{W}(\rho_1\otimes\rho_2\otimes...\otimes\rho_{n}\otimes \sigma))=\tilde{\E}(\rho_{k}) 
	\label{induced}
	\end{equation}
	the induced process $\tilde{\E}$ is the same on all subsystems $A_{k}$ and is an approximation of $\E$ using $\sigma$. In particular $\E$ will depend on $\sigma$ but not on $k$.
\end{definition}

This scenario is also depicted in the main text figure in which a sequence or ``circuit'' of symmetric interactions $\V_1, \V_2, \dots \V_n$ are performed on $B$ so as to induce local processes on subsystems $A_1, A_2, \dots A_n$. 

We wish to study when the state $\sigma_B$ can be used in an arbitrarily repeatable way, namely $n$-repeatable for any $n \in \mathbb{N}$. To determine this we can consider for any fixed total state on the systems $A_1, \dots A_n$, the induced process $\F_k$ under the protocol map $\V = \V_n \circ \cdots \circ \V_1$ from the system $B$ into any system $A_k$, which describes the transfer of reference frame data needed to induce the local process on $A_k$.

The induced process for fixed input state $\rho_{1 \dots n}$ on $A_1, \dots, A_n$ is given by
\begin{equation}
\F_k(\sigma_B) = \tr_{\smallsetminus k} \V( \rho_{1 \dots n} \otimes \sigma_B)
\end{equation}
where $\Tr_{\smallsetminus k}$ denotes discarding all systems except $A_k$.

However for the particular case of $ \rho_{1\dots n}= \rho^{\otimes n}$, the $n$-repeatability implies that $\F_k = \F$ for all $k = 1, \dots n$, and so in this case the protocol results in the same process from the reference frame $B$ into each of the subsystems.

In entanglement theory one has the notion of an $n$-extendible state, which gives a simple measure of the entanglement in the state. However this notion can be generalised to quantum processes, and relates directly to our present discussion.
\begin{definition}
	A quantum process $\F:\B(\H_B)\rightarrow \B(\H_A)$ is said to be \emph{$n$-extendible} if there exists a quantum process $\Lambda:\B(\H_B)\rightarrow \B(\H_{A_1}\otimes\H_{A_2}...\otimes \H_{A_n})$ with $\H_{A_i}\cong\H_{A}$ for all $i$, such that  for all $X \in \B(\H_B)$ $\Tr_{\smallsetminus i}\Lambda(X)=\F(X)$. 
\end{definition}
Therefore we see that $n$-repeatability of a protocol involving $B$ to simulate $\E$ implies the induced process $\F$ from $B$ to the output system $A'$ must be $n$-extendible. 
\begin{lemma}
	Suppose that system $B$ admits an n-repeatable use on systems $A_1,...A_{n}$ then for any fixed $\rho$ the resulting process on $B$ given by $\F_{\rho}:\B(\h_{B})\longrightarrow \B(\h_{A'_{k}})$:
	\begin{equation}
	\F_{\rho}(\sigma):=\Tr_{\smallsetminus k,B_{n}}(\mathcal{W}(\rho^{\otimes n}\otimes \sigma))
	\end{equation}
	is an $n$-extendible map.
	\label{nextendible}
\end{lemma}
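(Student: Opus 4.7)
The plan is to exhibit the extension $\Lambda$ explicitly, by simply refraining from tracing out the other $A_i$ systems in the definition of $\F_\rho$. Concretely, I would define
\begin{equation}
\Lambda(\sigma) := \Tr_{B_n}\bigl(\mathcal{W}(\rho^{\otimes n} \otimes \sigma)\bigr)
\end{equation}
as a map from $\B(\H_B)$ into $\B(\H_{A'_1} \otimes \cdots \otimes \H_{A'_n})$. The verification then splits into two routine checks: that $\Lambda$ is CPTP, and that each of its marginals recovers $\F_\rho$.

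For the first check, $\Lambda$ is a composition of (i) the state-preparation channel $\sigma \mapsto \rho^{\otimes n} \otimes \sigma$, (ii) the CPTP circuit $\mathcal{W} = \V_n \circ \cdots \circ \V_1$ built from symmetric operations (each $\V_i$ is CPTP by assumption), and (iii) the partial trace over $B_n$. Each component is CPTP, so $\Lambda$ is CPTP. For the second check, pick any $k \in \{1, \dots, n\}$ and compute
\begin{equation}
\Tr_{\smallsetminus k}\bigl(\Lambda(\sigma)\bigr) = \Tr_{\smallsetminus k,\, B_n}\bigl(\mathcal{W}(\rho^{\otimes n} \otimes \sigma)\bigr) = \F_\rho(\sigma),
\end{equation}
where the last equality is exactly the definition of $\F_\rho$. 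Crucially, the $n$-repeatability hypothesis applied to the permutation-symmetric input $\rho^{\otimes n}$ guarantees that this marginal is the \emph{same} process $\F_\rho$ for every choice of $k$, so all $n$ marginals of $\Lambda$ coincide with $\F_\rho$, which is the defining property of $n$-extendibility as stated in the paper's definition.

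There is no real obstacle here; the result is essentially an unpacking of definitions, and the only substantive input is the $n$-repeatability assumption, which does the work of ensuring the marginals agree. If one wanted the stronger ``permutation-invariant extension'' version of $n$-extendibility (which is not what the paper's definition demands), then one would additionally need to symmetrize $\Lambda$ over permutations of the $A_i$, but this is unnecessary for the statement as written.
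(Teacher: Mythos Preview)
Your proposal is correct and matches the paper's own reasoning, which is given informally in the text immediately preceding the lemma rather than as a separate proof: one takes $\Lambda(\sigma)=\Tr_{B_n}(\mathcal{W}(\rho^{\otimes n}\otimes\sigma))$ as the extension, and $n$-repeatability with the symmetric input $\rho^{\otimes n}$ forces all marginals $\F_k$ to coincide. There is nothing to add.
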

The following lemma says that an operation that is $n$-extendible for all finite $n$ must be a measure and prepare.

\begin{lemma}
	A quantum operation $\mathcal{F}:\B(\h_B)\longrightarrow \B(\h_{A'})$ is n-extendible for all finite $n$ if and only if it is a measure and prepare process. Equivalently there is a POVM set $\{M_{a}\}$ and quantum states $\rho_{a}\in\B(\h_{A'})$ such that:
	\begin{equation}
	\mathcal{F}(\sigma_{B})=\sum_{a} \Tr(M_{a}\sigma_{B})\rho_{a}
	\end{equation}
	for all $\sigma\in \B(\h_A)$.
	\label{extendibility}
\end{lemma}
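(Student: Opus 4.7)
The plan is to treat the two directions separately. The ``if'' direction is a direct construction, while all the nontrivial content sits in the ``only if'' direction, which I would route through the Choi isomorphism and a quantum de Finetti argument.

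For the ``if'' direction, suppose $\F(\sigma)=\sum_{a}\tr(M_{a}\sigma)\rho_{a}$ is of measure-and-prepare form. For any finite $n$ define
\begin{equation*}
\Lambda_{n}(\sigma):=\sum_{a}\tr(M_{a}\sigma)\,\rho_{a}^{\otimes n}.
\end{equation*}
This map is manifestly CPTP (since $\{M_{a}\}$ is a POVM and each $\rho_{a}^{\otimes n}$ is a state), symmetric under permutations of its $n$ outputs, and has every marginal equal to $\F$, so $\F$ is $n$-extendible for all $n$.

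For the ``only if'' direction, I would pass to the Choi picture. Given a permutation-symmetric extension $\Lambda_{n}\in\T(B,A^{\otimes n})$ with $\tr_{\smallsetminus i}\Lambda_{n}=\F$ for every $i$, its Choi operator $J(\Lambda_{n})\in\B(\H_{A}^{\otimes n}\otimes\H_{B})$ is a positive operator, symmetric under permutations of the $n$ copies of $A$, and whose reduced operator on any single $A_{i}\otimes B$ equals $J(\F)$. Appropriately normalised to a state, this exhibits $J(\F)$ as the marginal of a symmetrically-extendible bipartite state for every $n$. The quantum de Finetti theorem then forces $J(\F)$ to be separable across the $A:B$ cut, i.e.\ $J(\F)=\sum_{a}\eta_{a}\otimes\xi_{a}$ with $\eta_{a}\in\B(\H_{A})$ and $\xi_{a}\in\B(\H_{B})$ positive. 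Inverting the Choi isomorphism then directly yields $\F(\sigma)=\sum_{a}\tr(M_{a}\sigma)\rho_{a}$ with $\rho_{a}:=\eta_{a}/\tr(\eta_{a})$ and $M_{a}:=\tr(\eta_{a})\,\xi_{a}^{T}$; trace-preservation of $\F$ translates into $\sum_{a}\xi_{a}=\I$ on $\H_{B}$, so that $\{M_{a}\}$ is a valid POVM. Equivalently this is the Horodecki--Shor--Ruskai characterisation of entanglement-breaking channels as precisely the channels whose Choi operator is separable.

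The main obstacle is justifying the de Finetti step when $\H_{B}$ is infinite-dimensional, as is relevant for the ladder-system application elsewhere in the paper. One route is to truncate $\H_{B}$ to finite-dimensional subspaces, apply the finite-dimensional quantum de Finetti theorem of Christandl--K\"onig--Mitchison--Renner on each truncation, and take a weak-$*$ limit using that the convex set of separable states is closed in trace norm. An alternative is to invoke an infinite-dimensional quantum de Finetti theorem directly. In either case the essential content specific to this lemma is the Choi-picture reformulation that reduces $n$-extendibility of the channel $\F$ to $n$-extendibility of its (normalised) Choi state across the $A:B$ bipartition, at which point the separability conclusion becomes an off-the-shelf application of de Finetti.
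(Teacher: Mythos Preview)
Your proposal is correct and follows essentially the same route as the paper: pass to the Choi operator, observe that $n$-extendibility of $\F$ for all $n$ forces $J(\F)$ to be separable (via de Finetti), and then invoke the equivalence separable Choi $\Leftrightarrow$ entanglement-breaking $\Leftrightarrow$ measure-and-prepare. If anything you are more careful than the paper, which does not spell out the ``if'' direction and does not comment on the infinite-dimensional subtlety that is actually relevant to the ladder-system application.
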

\begin{proof}
	A bipartite state is $n$-extendible for all finite $n$ it must be a separable state. Whenever $\mathcal{F}$ is n-extendible for all $n$ then its corresponding Choi operator $J[\mathcal{F}]$ is also n-extendible for all finite $n$ and therefore is separable. A Choi operator is separable if and only if the the corresponding process is entanglement breaking. Moreover an entanglement-breaking process has the form of a measure and prepare and therefore $\F$ takes the form stated above. 
\end{proof}

Given this, we can establish the following general constraint on any protocol that admits arbitrarily repeatable use of a resource $\sigma_B$.
\begin{theorem}\label{Theorem-MP}
	Let $B$ be a quantum system with Hilbert space $\H_B$. For any fixed $\sigma\in \B(\H_B)$ used as a reference frame, suppose that for every finite $n$ there is an n-repeatable circuit of global symmetric process that induces $\E:\B(\H_A)\rightarrow \B(\H_{A'})$ on a quantum system $A$. Then there exists a POVM $\{M_a\}$ on $\H_B$ and completely positive maps $\Phi_{a}:\B(\H_A)\rightarrow \B(\H_{A'})$ such that:
	\begin{equation}
	\E(\rho)=\sum_{a} \tr(M_{a}\sigma)\Phi_{a}(\rho).
	\end{equation}
\end{theorem}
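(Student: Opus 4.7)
The plan is to exploit the $n$-extendibility consequences of arbitrary repeatability, invoking the two lemmas established earlier in this section, and then upgrade the entanglement-breaking decomposition of the reference-frame channel into a joint structure on $AB$ whose POVM is independent of the input $\rho$.

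First, for any fixed $\rho \in \B(\H_A)$ I would introduce the induced channel $\F_\rho : \B(\H_B) \to \B(\H_{A'})$ defined by $\F_\rho(\sigma) = \Tr_{\smallsetminus k, B_n} \mathcal{W}(\rho^{\otimes n} \otimes \sigma)$. By hypothesis this is $k$-independent and equals $\E(\rho)$. Feeding the symmetric protocol the product input $\rho^{\otimes n} \otimes \sigma$ produces a permutation-symmetric joint state on $A_1' \cdots A_n'$ whose single-site marginals all equal $\F_\rho(\sigma)$, so by Lemma \ref{nextendible} the channel $\F_\rho$ is $n$-extendible for every finite $n$, and Lemma \ref{extendibility} then forces it to be entanglement-breaking:
\[
\F_\rho(\sigma) = \sum_{a} \Tr(M_a^{(\rho)} \sigma)\, \rho_a^{(\rho)},
\]
for a POVM and a set of output states that, a priori, depend on $\rho$.

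The remaining task is to remove this spurious $\rho$-dependence from the POVM. For this I would move to the full joint CP map $\tilde{\Psi} : \B(\H_A \otimes \H_B) \to \B(\H_{A'})$ obtained as a single marginal of the protocol $\mathcal{W}$, which satisfies $\tilde{\Psi}(\rho \otimes \sigma) = \E(\rho)$. Since $\tilde{\Psi}(\rho \otimes \cdot)$ is entanglement-breaking for every $\rho$, I would argue that any Kraus operator $K_i : \H_A \otimes \H_B \to \H_{A'}$ of $\tilde{\Psi}$ in a minimal Stinespring representation must factor as $K_i = A_i \otimes \langle v_i|_B$. Substituting into the action on product states then yields
\[
\E(\rho) = \sum_{a} \Tr(M_a \sigma)\, \Phi_a(\rho),
\]
with the $\rho$-independent POVM $M_a = |v_a\rangle\langle v_a|$ and the completely positive maps $\Phi_a(\cdot) = A_a(\cdot) A_a^\dagger$, as required.

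The main obstacle will be justifying this Kraus factorisation, which is equivalent to the separability of the Choi operator $J[\tilde{\Psi}]_{A'AB}$ across the bipartition $A'A \mid B$. I expect to handle it through a quantum de Finetti argument applied to the permutation-symmetric $n$-partite output on $A_1' \cdots A_n'$: as $n \to \infty$ this state must be arbitrarily close to a convex mixture of product states, so the only information about $\sigma$ that can propagate to the system side is classical, carried by a single fixed POVM on $B$ whose outcomes condition a $\rho$-linear CP action $\Phi_a$ on each copy of $A$.
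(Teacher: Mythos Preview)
Your first two paragraphs coincide with the paper's argument: invoke Lemma~\ref{nextendible} and Lemma~\ref{extendibility} to obtain, for each fixed $\rho$, a measure-and-prepare form $\E(\rho)=\sum_i\Tr(N_i^{(\rho)}\sigma)\rho_i^{(\rho)}$ with a priori $\rho$-dependent data.

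Where you diverge is in removing the $\rho$-dependence from the POVM. The paper does this by decomposing the POVM $\{N_i\}$ over the fixed set of \emph{extremal} POVMs on $\H_B$, absorbing the $\rho$-dependence into $\Phi_{k,i}(\rho):=p_k\rho_i$. Your route --- showing that the Kraus operators of the joint map $\tilde\Psi:\B(\H_A\otimes\H_B)\to\B(\H_{A'})$ factor as $K_i=A_i\otimes\langle v_i|$, equivalently that $J[\tilde\Psi]$ is separable across $(A'A)|B$ --- is genuinely different and, if carried through, yields rank-one POVM elements and single-Kraus $\Phi_a$ directly.

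However, your sketch has a concrete gap. The implication ``$\tilde\Psi(\rho\otimes\cdot)$ is entanglement-breaking for every state $\rho$ $\Rightarrow$ the Kraus operators of $\tilde\Psi$ factor'' is false in general: take $\H_{A'}=\mathbb{C}$ and $\tilde\Psi(X)=\Tr(|\psi^-\rangle\langle\psi^-|X)$; every slice $\tilde\Psi(\rho\otimes\cdot)$ is scalar-valued and hence trivially entanglement-breaking, yet the effect is entangled across $A|B$. Your proposed remedy --- de Finetti on the permutation-symmetrised output on $A'_1\cdots A'_n$ for product input $\rho^{\otimes n}$ --- does not close this, because that output's single-site marginal is just $\E(\rho)$; the mixture-of-products structure carries no information about the joint $(\rho,\sigma)$ dependence and cannot decouple the POVM from $\rho$.

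The fix is to lift the extendibility argument to the Choi level. Feed half of a maximally entangled pair $|\phi^+\rangle_{A_k\tilde A_k}$ into each $A_k$ slot of the $n$-repeatable circuit $\mathcal W$; by linearity and the repeatability hypothesis the marginal on $A'_k\tilde A_k$ is exactly $J[\E_\sigma]$ for every $k$. Hence the map $\Gamma:\sigma\mapsto J[\E_\sigma]$ from $\B(\H_B)$ to $\B(\H_{A'}\otimes\H_A)$ is $n$-extendible for all $n$, so by Lemma~\ref{extendibility} it is entanglement-breaking: $J[\E_\sigma]=\sum_a\Tr(M_a\sigma)\,J_a$ with a single POVM $\{M_a\}$ and positive $J_a$. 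Inverting the Choi isomorphism gives $\E(\rho)=\sum_a\Tr(M_a\sigma)\Phi_a(\rho)$ with $\Phi_a$ completely positive, which is the claim.
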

\begin{proof}
	The previous Lemma \ref{nextendible} implies that the effective process on the environment $\mathcal{F}_{\rho}:\B(\h_{B}) \longrightarrow\B(\h_{A}')$ given for any fixed $\rho$ by $\mathcal{F}_{\rho}(\sigma_{B})=\Tr_{\smallsetminus k,B_{n}}(\mathcal{W}(\rho\otimes\sigma_{B})$ is n-extendible for all finite $n$. This statement is independent on the initial state on system $A$ (although the process $\mathcal{F}_{\rho}$ generally may not be so). Moreover $\mathcal{F}_\rho$ is a valid CPTP map as it arises as a composition of CPTP maps.
	By lemma \ref{extendibility} every n-extendible quantum operation for all finite $n$ must take the form of a measure and prepare process. Therefore for each fixed $\rho\in\B(\h_{A})$ there exists a POVM $\{N_i\}$ on $\h_{B}$ and quantum states $\rho_i\in \B(\h_{A}'$) such that:
	\begin{equation}
	\mathcal{F}_{\rho}(\sigma)=\sum\limits_{a} \Tr(N_i\sigma)\rho_i.
	\end{equation}
	Note that there could be dependence on $\rho$ in either $\rho_i$ or $N_i$, however this can be simplified by noting that one can decompose any POVM into a convex combination of extremal POVMs. We can write
	\begin{equation}
	N_i = \sum_k p_k M_{k,i}
	\end{equation}
	where $\mathcal{M}_k = (M_{k,i})$ is an extremal POVM for each $k$, and $p_k$ is a probability distribution. This implies that
	\begin{align}
	\Tr_{\smallsetminus k,B_{n}}(\mathcal{W}(\rho^{\otimes n}\otimes \sigma))&=\sum_{i, k}p_k  \Tr(M_{k,i}\sigma)\rho_i \\
	&= \sum_{i,k}\Tr(M_{k,i}\sigma) \Phi_{k,i}( \rho)
	\end{align}
	where $\Phi_{k,i}(\rho) := p_k \rho_i$ is a completely-positive linear map on $\rho$ (since it always returns up to normalization a valid quantum state), and which implies that the POVM acting on $B$ is independent of the input state $\rho$ on $A$. Introducing the single index $a=(k,i)$ completes the proof.
\end{proof}

This places a strong constraint on the repeatable use of an environment $B$ to induce maps on other systems. The content is easy to understand -- if the environment acts as a reference system for an arbitrary number of systems then the only information that can be used must be classical information \cite{Brandao:2015aa,Brandao:2017aa}. 

It is important to emphasize that this result applies to a circuit that induces  a \emph{single} quantum process $\E$ on the system $A$ -- simply interrogate the system $B$ once, copy the measurement information and propagate it to an unbounded number of systems to induce $\E$. This says nothing about whether the system $B$ (which might be finite dimensional) suffers irreversibility in the process. In order to determine this we must consider protocols that use $B$ for a second, independent quantum process $\E'$. 

In the next subsection we use our earlier results on the decomposition of quantum processes to analyse coherence protocols and provide an account of how in the case of a non-commutative symmetry $G$ that fundamental incompatibility in the use of symmetry-breaking resources is expected to arise.

\subsection{Process mode picture: The repeatable use of quantum coherence.}
\label{S-catalytic-theorem}
We can now apply the process mode formalism to the question of the repeatable use of symmetry-breaking resources. We first look at a quantum subsystem $A$, and the task of inducing a target quantum process $\E$ on that system under a $U(1)$ quantum coherence constraint. We shall determine precisely when we can induce $\E$ in an arbitrarily repeatable protocol using a coherent environment $B$. As discussed, the orbit $\M(G,\E)$ of $\E$ under the group action encodes the reference frame data required from the rest of the global system.  For the case of coherence $G=U(1)$ it is clear that $\M(G,\E)$ is either a point, for $\E$ being a symmetric process on $A$, or is a circle, when $\U_g \circ \E \circ \U_g^\dagger \ne \E$. The set of processes decompose under $U(1)$ into a basis of process modes $\{\Phi^\lambda\}$, which are one-dimensional since the group is abelian. Again we absorb any multiplicities into the $\lambda$ label for clarity of the exposition.

We are free to pick any point on $\M(G,\E)$ which corresponds to a reference quantum process $\E_0$, and decompose this map in terms of $\E_0=\sum_{\lambda} \alpha_{\lambda}(\E_0)\Phi^{\lambda}$, with data $\{\alpha_\lambda(\E_0) \in \mathbb{C}\}$ for the reference operation $\E_0$. With respect to $\E_0$ the target map $\E$ is obtained from $\E_0$ via some group transformation $\theta \in U(1)$, as $\E = \U_\theta \circ \E_0 \circ \U_\theta^\dagger$, and so
\begin{equation}
\E(\rho)=\sum_{\lambda \in \rm{Irrep}(A,A')} \alpha_{\lambda}(\E_0)e^{i\lambda \theta} \Phi^{\lambda}(\rho).
\label{dec}
\end{equation}

To implement any such target map one needs to specify the group element $\theta$ and the constant invariant data $\alpha_{\lambda}(\E_{0})$. Note that a perfect classical reference frame for the group U(1) is provided by the infinite dimensional space of wavefunctions on a circle $\mathcal{L}^{2}(S^{1})$. Using such a system for the environment $B$ means that an arbitrarily repeatable protocol can perfectly achieve the target $\E$ by performing a measurement that estimates the group element and transmitting the classical result to arbitrary many systems. The non-trivial issue is to determine what form such protocols can take under a U(1) global symmetry constraint. This is our aim for the current section. 

In particular we restrict to the case when $B$ is an infinite-dimensional ladder system $\h_{\rm{ladder}}$ for two reasons. First there is a fundamental connection between $\mathcal{L}^{2}(S^1)$ and $\h_{\rm ladder}$ such that the latter can also perfectly encode group elements $\theta\in U(1)$ into quantum states $\ket{\theta}\in\h_{\rm{ladder}}$ that can be perfectly discriminated. This has technical subtleties for the continuous group (since we work in a separable Hilbert space), but in terms of the eigenstates $\{\ket{n}\}_{n\in\mathbb{Z}}$ we can consider the following set of orthonormal `states' that encode any $\theta\in U(1)$
\begin{equation}
\ket{\theta}:=(2\pi)^{-1/2}\sum_{n\in \mathbb{Z}} e^{-in\theta}\ket{n},
\label{classicalreference}
\end{equation}
which should be understood as being meaningful in a distributional sense. In terms of $\mathcal{L}^{2}(S^1)$ this amounts to viewing the Dirac delta distribution $\delta(x - \theta)$ to be a normalized `wavefunction' on $S^1$.

We will refer to the set of all $\{\ket{\theta}\}_{\theta\in U(1)}$ as \emph{asymptotic reference frames}. For an in depth analysis of this connection and further useful properties of these states we refer the reader to the excellent book \cite{Busch}. The second reason why we restrict $B$ to be $\h_{\rm{ladder}}$ is that we want to have an asymptotic classical limit in the environment such that an initial state $\sigma_{B}=\ket{\theta}\bra{\theta}$ induces the target operation $\E=\U_{\theta}\circ\E_{0}\circ\U_{\theta}\hc$ that is associated with the point $\theta$ on the process orbit $\M(\E_{0},G)$.

We can now state the main result of this section.
\begin{theorem} 
	A protocol $\mathcal{P}$ that induces a local process $\E$ on $A$ using a ladder system $B$ satisfies:\\
	(a) Global U(1) symmetry.\\
	(b) Arbitrary repeatability.\\
	(c) Asymptotic reference frames on $B$ are not disturbed.\\
	(d) Asymptotic reference frames on $B$ yield perfect simulations.\\
	if and only if $\P$ is a catalytic coherence protocol.
\end{theorem}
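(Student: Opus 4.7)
The plan is to prove the two directions separately, with the nontrivial content lying in the ``only if'' direction. For the ``if'' direction, I would simply verify that any catalytic coherence protocol, whose induced map has the form in equation~(\ref{catalytic}), immediately satisfies all four conditions: global $U(1)$ symmetry follows from the fact that the interaction $V(U)$ commutes with the total number operator; asymptotic reference frames $\ket{\theta}$ are eigenstates of every displacement operator $\Delta^n$ up to a phase, so they pick up only a global phase under the protocol and thus are undisturbed and yield the perfect target simulation $\U_\theta \circ \E_{\rm \tiny target}\circ \U_\theta^\dagger$.

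For the ``only if'' direction, I would proceed in four steps. First, by condition (ii), Theorem 3 applies and gives a measure-and-prepare form $\E_\sigma(\rho)=\sum_a \Tr(M_a\sigma)\Phi_a(\rho)$, with the key point that the POVM $\{M_a\}$ depends on the target but not on the input state. Expanding each $\Phi_a$ in the canonical process-mode basis $\{\Phi^{\boldsymbol{\lambda}}\}$ of $\T(A,A')$ (one-dimensional since $U(1)$ is abelian) yields
\begin{equation}
\E_\sigma(\rho)=\sum_{\boldsymbol{\lambda}} \Tr(X^{\boldsymbol{\lambda}}\sigma)\,\Phi^{\boldsymbol{\lambda}}(\rho),\qquad X^{\boldsymbol{\lambda}}:=\sum_a c_{\boldsymbol{\lambda},a}M_a.
\end{equation}
Second, I use condition (i): global $U(1)$ covariance forces $\E_{\U_\theta(\sigma)}=\mathfrak{U}_\theta[\E_\sigma]$ so matching coefficients of each process mode (which transform irreducibly as $\mathfrak{U}_\theta[\Phi^{\boldsymbol{\lambda}}]=e^{i\lambda\theta}\Phi^{\boldsymbol{\lambda}}$) yields the covariance condition $\U_\theta^\dagger(X^{\boldsymbol{\lambda}})=e^{i\lambda\theta}X^{\boldsymbol{\lambda}}$ on the operators $X^{\boldsymbol{\lambda}}$.

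Third, condition (iii) (asymptotic reference frames on $B$ are undisturbed) must be translated into an operator statement: the POVM effects commute with the self-adjoint phase operator $\hat{\Phi}=\int_0^{2\pi}\theta\ket{\theta}\bra{\theta}d\theta$, hence each $X^{\boldsymbol{\lambda}}$ is diagonal in the $\{\ket{\theta}\}$ basis, $X^{\boldsymbol{\lambda}}=\int \bra{\theta}X^{\boldsymbol{\lambda}}\ket{\theta}\ket{\theta}\bra{\theta}d\theta$. Combining this with the covariance from the previous step and the fact that $\ket{\theta}=U(\theta)^\dagger\ket{0}$ collapses the diagonal elements to $\bra{\theta}X^{\boldsymbol{\lambda}}\ket{\theta}=\alpha_{\boldsymbol{\lambda}}(\E_0)\,e^{-i\lambda\theta}$ for some complex constants $\alpha_{\boldsymbol{\lambda}}(\E_0)$ depending only on the representative $\E_0$ of the process orbit. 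Then identifying $\int e^{-i\lambda\theta}\ket{\theta}\bra{\theta}d\theta=\Delta^{-\lambda}$ gives the compact form
\begin{equation}
\E_\sigma(\rho)=\sum_{\boldsymbol{\lambda}}\alpha_{\boldsymbol{\lambda}}(\E_0)\,\Tr(\Delta^{-\lambda}\sigma)\,\Phi^{\boldsymbol{\lambda}}(\rho).
\end{equation}

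Finally, condition (iv) is used to fix the invariant constants: feeding $\sigma=\ket{0}\bra{0}$ (or more generally any asymptotic frame $\ket{\theta}$) into the above and demanding the perfect simulation $\E_{\rm \tiny target}=\sum_{\boldsymbol{\lambda}}\alpha_{\boldsymbol{\lambda}}(\E_0)\Phi^{\boldsymbol{\lambda}}$ forces the $\alpha_{\boldsymbol{\lambda}}(\E_0)$ to coincide with the process-mode coefficients of the target. The remaining task is to exhibit a Kraus decomposition of the resulting map in the form (\ref{catalytic}) with Kraus operators $\{K_{n,i}\}$ acting on $A$ -- this is essentially algebraic repackaging using that the $\{\Phi^{\boldsymbol{\lambda}}\}$ are built from ITOs on $A$. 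The main obstacle will be step three: carefully justifying the translation of the informal ``asymptotic reference frames are not disturbed'' condition into the operator statement $[X^{\boldsymbol{\lambda}},\hat{\Phi}]=0$, since the $\ket{\theta}$ states are only distributional and one must argue via weak-limits or smeared versions that non-disturbance of the full continuous family forces diagonality of every POVM effect in this basis.
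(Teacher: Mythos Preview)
Your proposal is correct and follows essentially the same approach as the paper's own proof: the same use of Theorem~3 from condition (ii), the same process-mode expansion yielding operators $X^{\boldsymbol{\lambda}}$, the same covariance argument from (i), the same translation of (iii) into $[M_a,\hat{\Phi}]=0$ and hence diagonality in the $\{\ket{\theta}\}$ basis, and the same identification $X^{\boldsymbol{\lambda}}=\alpha_{\boldsymbol{\lambda}}(\E_0)\Delta^{-\lambda}$. You are in fact slightly more thorough than the paper, which only writes out the ``only if'' direction and never explicitly invokes condition (iv) in its displayed argument; your plan to use (iv) to pin down the constants $\alpha_{\boldsymbol{\lambda}}(\E_0)$ as the target's process-mode coefficients, and to verify the ``if'' direction separately, fills small gaps the paper leaves implicit.
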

This provides a clear physical interpretation of the repeatable use of quantum coherence in simple physical terms. Note it does \emph{not} imply that the system $B$ is in some perfectly coherent state, or that the state of $B$ stays the same -- the repeatability holds irrespective of the state on $B$. The proof of this result is straightforward using process modes, and is given as follows.

For the case where $B$ is a finite-dimensional system, a similar result can be established, but with additional qualifications. The back-action $\R$ on $B$ must (in general) map into a slightly larger system $C$ in order to maintain the repeatability condition (or otherwise it can only be $n$-repeatable for some finite $n$). This is discussed in section \ref{finitedimrep} of the Supplementary Material.
For finite $d$-dimensional subsystems $A$, we also see that the POVM required only involves modes no larger than $d$. For example for the case of $d=2$ we have that
\begin{equation}
M_a = x_a \I + y_a  \cos \hat{\phi} + z_a \sin \hat{\phi},
\end{equation}
where $x_a,y_a,z_a \in \mathbb{R}$ and we have $\Delta^{\pm 1} = \cos \hat{\phi} \pm i \sin \hat{\phi}$ being the only interaction term required between $A$ and the environment. 

More generally, from the perspective of $\M(G,\E)$ this corresponds to the fact that the target map $\E$ only requires a resolution of the target point on $\M(G,\E)$ to an angular scale at worst $\delta \theta \sim \frac{2\pi}{d}$, and so is as efficient as possible in the use of the reference. We can therefore view the protocol as involving a coarse-grained measurement of $\hat{\phi}$, which can be made without disturbing a subsequent measurement. 

In the case of coherence there is a single coordinate $\hat{\phi}$ that must be extracted from the environment up to some resolution, and therefore the simulations of different coherent maps $\{\E_1, \E_2, \dots \}$ on $A$ are essentially equivalent. This is no longer true for more general symmetry groups, as we discuss shortly.

\subsection{Interpretation of coherent protocol as broadcasting of reference frame data}
One can understand this result from another informal perspective, which perhaps helps clear up some confusion that might exist on catalytic coherence. The system $B$ can be continually reused, and its state will change continually under the protocol. Despite this, its ability to function as a coherence reference remains the same. One might feel that this clashes with cloning intuition -- namely quantum resources cannot be copied in general. However the coherence protocol should not be viewed as a cloning of reference frame data, but as the \emph{broadcasting of reference frame data} to multiple systems. Broadcasting is a mixed state version of cloning in which one wishes to copy unknown quantum states $\{ \rho_1, \dots, \rho_n\}$ to multiple other parties. In the single copy case a state $\rho_k$ is transformed to a bipartite $\sigma_{AB}$, such that the marginals are $\sigma_A = \rho_k$ and $\sigma_B = \rho_k$. It is known \cite{PhysRevLett.76.2818} that a set of quantum states $\{\rho_k\}$ may be broadcast perfectly if and only if $[\rho_i, \rho_j] = 0$ for all $i,j$. 

The relevance for us here is that the coherent properties of the environment $B$ are fully described by the expectation values $\<\Delta^k\> := \tr [ \Delta^k \sigma_B]$, and so we need only consider these degrees of freedom. However $[\Delta^k, \Delta^j] =0 $ for all $j,k$ and so a state of the form $\sigma_B = \frac{1}{d} (\I + \sum_k c_k \Delta^k + \mbox{other terms})$ can have the $\Delta^k$ components of the state broadcast in the sense described. What is non-trivial to establish, is that this can be done under a global symmetry constraint. The classicality of the underlying data is the key point, and explains how the protocol works from a different perspective. This is also consistent with our analysis in the next section concerning the structure of symmetric bipartite processes.

\begin{figure}
	\includegraphics[width=9cm]{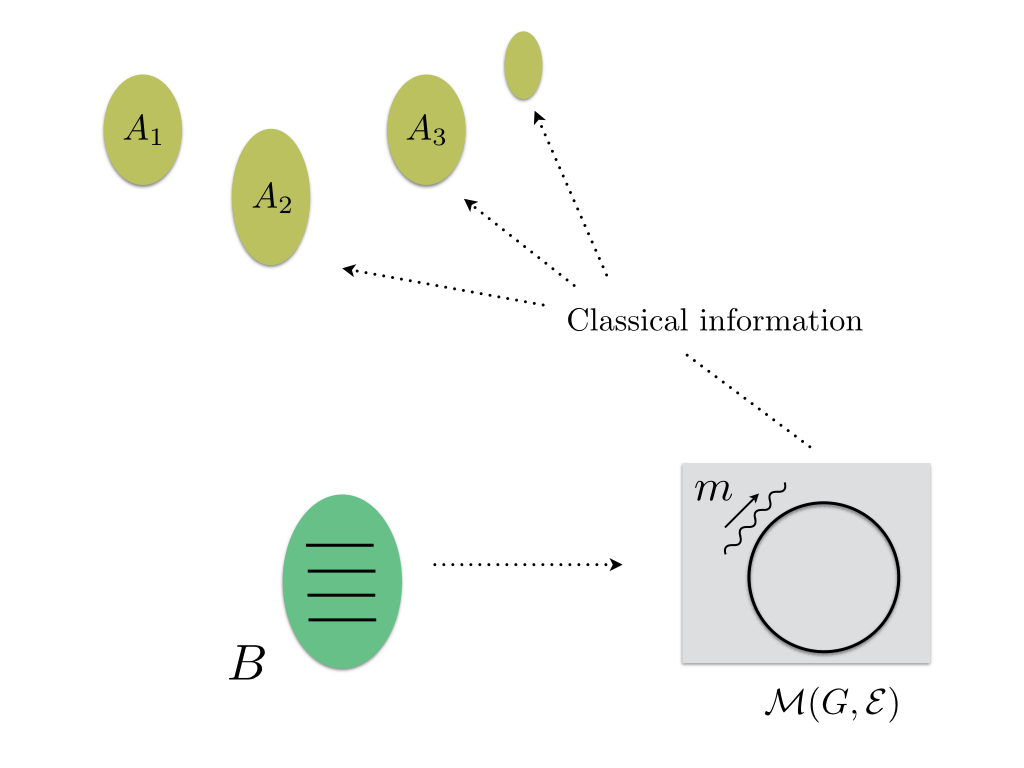}
	\caption{\textbf{Repeatable use of quantum coherence as the broadcasting of classical data.} For a non-symmetric $\E$ and quantum coherence, the process orbit $\M(U(1),\E)$ is a circle. All relevant operators are built from the shift operator $\Delta = e^{i\hat{\phi}}$, corresponding to the position observable $\hat{\phi}$ on the process orbit.  The only relevant parameters in a state $\rho$ of a reference system $B$ are the expectation values $\<\Delta^m\>$. Since $[\Delta^m, \Delta^n] = 0$ for all $n,m$ it is possible to perfectly broadcast the classical data $\{\< \Delta^m\>\}$ to any number of systems $A_1 , A_2 \dots A_n$. }
\end{figure}

\subsection{Irreversibility under general symmetry constraints and a geometric perspective.}
Given the analysis we have provided for quantum coherence, we might wonder if a similar construction applies for more general groups. For such cases, there is one simple way in which an environment $B$ can be used in a repeatable way -- namely we can embed the system's Hilbert space into the space of wavefunctions on $G$ and perform the measurement that estimates groups elements $\{|g\>\<g|\}$ on this infinite dimensional space. Since this extracts all the reference data from $B$ into a classical form it can be copied and repeatedly used. However this assumes a very particular interaction, and that $B$ can physically be embedded in the required infinite-dimensional system (which is a non-trivial assumption). 

We can therefore ask if repeatability can occur for a general group $G$ and a finite dimensional system $B$? For simplicity we can restrict to $G=SU(2)$ and consider just the set of all axial processes as our target quantum processes. As already described, the process orbit $\M(G,\E)$ for these quantum processes is the $2$-sphere $S^2$, with coordinates $(\theta, \phi)$. Now if arbitrary repeatability is present then we have by the same analysis that $\E = \sum_a \tr (M_a \sigma) \E_a$ where the POVM elements on $B$ must supply the coordinates on $\M$ via the condition
\begin{equation}
\sum_a c_{j,m, a} \tr(M_a \sigma_B)= a_j (\E) Y_{jm}(\theta, \phi),
\end{equation}
where $\{a_j(\E)\}$ are the invariant data for the process orbit, and $c_{j,a}$ are the coefficients of $\E_a$ in the $SU(2)$ process mode basis. 

However, now an important distinction is made with the $U(1)$ coherence case. The POVM that extracts the reference data from $B$ must estimate a point on a sphere. In the classical limit one can have perfect resolution of any point $(\theta, \phi) \in S^2$, however for finite dimensional $B$ it is impossible to provide a perfect encoding on the point. Moreover, we know that quantum mechanics on $S^2$ is a phase space and so in the case that $B$ is finite dimensional there will be a non-trivial uncertainty relation present. If, for example, $B$ is a $d$-dimensional spin, then one has operators $\hat{X}_i:= \frac{r}{d^2-1}J_i$ for $B$ that constitute non-commuting coordinates such that $\hat{X}_1^2+\hat{X}_2^2+\hat{X}_3^2 = r^2$. This defines the so-called ``fuzzy sphere'' \cite{fuzzy} in non-commutative geometry where one has a discrete representation of spherical geometry. In the $d\rightarrow \infty$ limit this coincides with classical geometry, however for finite $d$ has a fundamental lower bound on resolution and complementarity in measurements.

Therefore if one is using the system $B$ within some globally symmetric process to represent $\M(G,\E) \cong S^2$, then the complementary in measurements on this phase space will imply incompatibility in the use of symmetry-breaking resources. This incompatibility is not present for coherence, since essentially only $\hat{\phi}$ is needed to supply the reference data. 

More generally, the process orbit perspective suggests a form of quantum-mechanical irreversibility in the use of symmetry breaking resources that depends on whether the geometry that can be induced on $\M(G,\E)$ by $B$ is non-commuting or not. This is consistent with the asymptotic limit of classical reference frames in quantum theory for an arbitrary group $G$, and also with the case of quantum coherence, however we must leave any further analysis to later work.

\end{document}